\newtheorem*{open-problem}{Open Problem}
\theoremstyle{definition}
\newtheorem{algorithm-stmt}[algorithm]{Algorithm}
\def\defn#1{\textit{\textbf{\boldmath #1}}\index{#1}}
\newcommand{\changed}[1]{{#1}}
\title{Morphing Planar Graph Drawings via Orthogonal Box Drawings}
\author{Therese Biedl} {University of Waterloo, Waterloo, Ontario, Canada} {biedl@uwaterloo.ca}{
0000-0002-9003-3783
}{Research supported by NSERC FRN RGPIN-2020-03958}
\author{Anna Lubiw} {University of Waterloo, Waterloo, Ontario, Canada} {alubiw@uwaterloo.ca}{0000-0002-2338-361X}{Research supported by NSERC}
\author{Jack Spalding-Jamieson} {University of Waterloo, Waterloo, Ontario, Canada} {jacketsj@gmail.com}{0000-0002-1209-4345}{}
\authorrunning{T. Biedl and A. Lubiw and J. Spalding-Jamieson} %
\keywords{
morphing,
graph morphing,
linear morph,
planar graph drawing,
orthogonal box drawing,
orthogonal drawing,
algorithm
} %
\begin{document}

\maketitle

\begin{abstract}

We give an algorithm to morph planar graph drawings that achieves small grid size at the expense of allowing a constant number of bends on each edge.
The input is an $n$-vertex planar graph 
and two planar straight-line drawings 
of the graph on an $O(n) \times O(n)$ grid.   The planarity-preserving morph is composed of $O(n)$ linear morphs between successive pairs of  
drawings, each on an  $O(n) \times O(n)$ grid with a constant number of bends per edge.
The algorithm to compute the morph runs in $O(n^2)$ time on a word RAM model with standard arithmetic operations---in particular 
no square roots or cube roots are required.

The first step of the algorithm is to morph each input drawing to a planar orthogonal box drawing where vertices are represented by boxes and each edge is drawn as a horizontal or vertical segment.  
The second step is to morph between planar orthogonal box drawings.  This is done  
by extending known techniques for morphing planar orthogonal drawings with point vertices.
\end{abstract}

\section{Introduction}
\label{sec:introduction}

Algorithms to compute a straight-line drawing of a planar $n$-vertex graph on an $O(n) \times O(n)$ grid have been known since the 1980's~\cite{de1990draw, schnyder1990embedding}, but it is an open problem to 
achieve such straight-line small-grid results for planar graph morphing.  
To make this precise, 
let $P$ and $Q$ be
two planar straight-line drawings of a graph $G$ that are \defn{compatible}, meaning that they have the same faces and the same outer face. 
A \defn{linear morph sequence} from $P$ to $Q$ is a sequence of \defn{explicit intermediate drawings} of $G$ starting with $P$ and ending with $Q$.  By taking a \defn{linear morph}, i.e.,  a %
linear interpolation of 
vertex positions,
between each successive pair of explicit intermediate drawings, we obtain a continuous \defn{piece-wise linear morph} from $P$ to $Q$ indexed by time $t \in [0,1]$.  
The morph is \defn{planarity-preserving} if the drawing at every time $t \in [0,1]$ is planar.
A straight-line drawing of a graph \defn{lies on a grid} if the points representing the vertices lie at grid points; in case edges are drawn as poly-lines with bends, the bends must also lie at grid points. 
The following problem is open:

\begin{open-problem}
For a planar graph $G$ with $n$ vertices and a compatible pair of planar straight-line drawings $P$ and $Q$ of $G$ on an $O(n)\times O(n)$ grid,
is there a planarity-preserving piece-wise linear morph 
from $P$ to $Q$ 
where each explicit intermediate drawing is a straight-line drawing on an $O(n)\times O(n)$ grid?
\end{open-problem}

The morphing algorithm of Alamdari, Angelini, Barrera-Cruz, Chan, Da Lozzo, Di Battista, Frati, Haxell, Lubiw, Patrignani, Roselli, Singla, and Wilkinson~\cite{alamdari2017morph}---which is based on Cairns' edge contraction method---finds a planarity-preserving linear morph sequence of length $O(n)$ but with no guarantee on the grid size of the explicit intermediate drawings.  In fact, the vertex coordinates are computed using cube roots on a real RAM model, and vertices may become almost coincident to imitate edge contractions.
The recent morphing algorithm of Erickson and Lin~\cite{erickson2023toroidal}
achieves the same bound of $O(n)$ linear morphs 
for the subclass of 3-connected graphs and avoids the edge-contraction paradigm by following Floater's method of interpolating the matrix of barycentric coordinates/weights.  However, computing the barycentric weights requires square roots, and, even %
if that were avoided,
the reliance on 
Tutte/Floater drawings means that the vertex coordinates can require $\Omega(n)$ bits of precision\changed{, and thus a grid of size $\Omega(2^n \times 2^n)$}~\cite{di2021tutte}.  
The open problem was solved for the special case of Schnyder drawings by Barrera-Cruz, Haxell, and Lubiw~\cite{barrera2019morphing}. 

\changed{Note that the Open Problem asks about the \emph{existence} of a small-grid morph, regardless of algorithmic considerations.
Of course, one would also like a fast algorithm to find the morph, and current research has been much more successful with regard to run-time than grid-size. 
The general morphing algorithm of Alamdari et al.~\cite{alamdari2017morph} runs in  $O(n^3)$ time.
This was improved to $O(n^2 \log n)$, and to $O(n^2)$ for 2-connected graphs, by Klemz~\cite{klemz2021convex}. 
The morphing algorithm of Erickson and Lin~\cite{erickson2023toroidal},
which involves solving linear systems and only applies to 3-connected graphs, runs in $O(n^{1 + \omega/2})$ time, where $\omega < 2.371552$ is the matrix multiplication exponent~\cite{matrix-mult-2024}.
Note that a runtime of $O(n^2)$ is optimal if all the explicit intermediate drawings must be given as output, since $\Omega(n)$ such drawings may be required~\cite{alamdari2017morph}.}

We approach the 
\changed{Open Problem}
by insisting that explicit intermediate drawings lie on an $O(n) \times O(n)$ grid, but we relax the condition that edges be drawn as straight line segments, and allow a constant number of bends per edge.
Our main result is as follows.

\begin{restatable}
{theorem}{statethmmain}
\label{thm:main}
Let $G$ be a connected planar graph with $n$ vertices.
For a compatible pair of planar straight-line drawings $P$ and $Q$ of $G$
on an $O(n)\times O(n)$ grid,
there exists a planarity-preserving linear morph sequence
from $P$ to $Q$ of length $O(n)$,
where each explicit intermediate drawing 
lies on an $O(n)\times O(n)$ grid
and has $O(1)$ bends per edge.
Moreover, 
this sequence can be found in $O(n^2)$ time.
\end{restatable}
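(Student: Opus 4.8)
I would prove \cref{thm:main} by routing the morph through \emph{orthogonal box drawings} and factoring it into two sub-results. \emph{Orthogonalization.} From any planar straight-line drawing $P$ of $G$ on an $O(n)\times O(n)$ grid, build a planar orthogonal box drawing $B_P$ of $G$ with the same planar embedding and outer face, on an $O(n)\times O(n)$ grid with $O(1)$ bends per edge, together with a planarity-preserving linear morph sequence of length $O(n)$ from $P$ to $B_P$ whose explicit drawings all lie on an $O(n)\times O(n)$ grid with $O(1)$ bends per edge. \emph{Box morphing.} Between any two compatible planar orthogonal box drawings of $G$ on $O(n)\times O(n)$ grids, produce a planarity-preserving linear morph sequence with the same length, grid, and bend guarantees. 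Granting both, I apply the orthogonalization to $P$ and to $Q$; since $P$ and $Q$ are compatible, $B_P$ and $B_Q$ are compatible box drawings, so the box morphing joins them, and concatenating $P\to B_P$, $B_P\to B_Q$, and the reverse of $Q\to B_Q$ gives a morph $P\to Q$ of length $O(n)$ with the required properties. I may also assume $G$ is $2$-connected (otherwise add edges inside faces to $2$-connect it, run the argument, and delete those edges at the end, which only moves their endpoints and bends continuously).

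For the orthogonalization I would first scale $P$ by a constant so that each vertex point lies in an empty grid square of side $\Theta(1)$ and each edge has a pairwise-disjoint rectilinear ``corridor'' of positive width. I then fix a combinatorial target: a box for $v$ of perimeter $\Theta(d(v))$ placed near the point of $v$ --- an $O(1)\times O(1)$ box when $d(v)=O(1)$, and a long thin box extending along one incident edge (chosen combinatorially, so the choice agrees for $P$ and $Q$) when $d(v)$ is large --- with the $d(v)$ edge stubs assigned to the box boundary in $P$'s rotation order, each edge drawn as an axis-monotone orthogonal staircase with $O(1)$ bends inside its corridor. Since $\sum_v d(v)=O(n)$ for planar $G$, the extra width and height consumed by boxes total $O(n)$, so $B_P$ fits on an $O(n)\times O(n)$ grid. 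The morph is then: the scaling (one linear morph); inserting the new bend points at collinear positions on each edge (changing no drawing); and $O(1)$ further linear morphs that simultaneously grow the boxes from degenerate $0\times 0$ boxes and slide each bend point to its target --- planarity holds throughout because every edge and every box stays inside its reserved corridor or square.

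For the box morphing I would extend the \emph{unidirectional-morph} technique known for planar orthogonal drawings with \emph{point} vertices to \emph{box} vertices. Treating each box boundary as four extra axis-parallel segments of the drawing, a sweep argument decomposes $B_P\to B_Q$ into $O(n)$ morph steps, each moving objects only horizontally or only vertically; the genuinely new ingredients are that box sides may translate within a step and that boxes may grow and shrink, with temporary enlargement of a box playing the role that ``vertices sliding past each other'' plays for point vertices. I expect the main obstacle to be the quantitative control: the known point-vertex orthogonal-morphing algorithms bound neither the grid size nor the bend count, so I would route $B_P$ and $B_Q$ through a single canonical box drawing of $G$ (e.g.\ a thickened visibility representation) so that only localized shape changes remain, and after each batch of $O(1)$ morph steps recompact the drawing onto an $O(n)\times O(n)$ grid and re-route edges to restore $O(1)$ bends each, charging everything to the $O(n)$ sweep events; handling the large boxes of high-degree vertices in the orthogonalization step is the other delicate point.

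Finally, for the running time: each explicit drawing has combinatorial size $O(n)$, there are $O(n)$ of them, and each is obtained from its predecessor by a combinatorially specified local transformation computable in $O(n)$ time; all coordinates are integers of magnitude $O(n)$, so only standard word-RAM arithmetic is needed, and the total is $O(n^2)$.
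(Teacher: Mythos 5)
Your high-level decomposition matches the paper's: morph each straight-line drawing to a planar orthogonal box drawing, then morph between box drawings, with the same compatibility/concatenation argument. However, both halves of your plan contain genuine gaps. In the orthogonalization, the claim that you can ``scale $P$ by a constant so that each vertex point lies in an empty grid square of side $\Theta(1)$ and each edge has a pairwise-disjoint rectilinear corridor of positive width'' is false: in a planar straight-line drawing on an $n\times n$ integer grid, a vertex can lie at distance $\Theta(1/n)$ from a non-incident edge (e.g.\ the origin and the segment from $(1,1)$ to $(-n,-n+1)$), and two edges can likewise be $\Theta(1/n)$ apart. A constant-factor scaling leaves these distances $o(1)$, so the reserved squares and corridors do not exist; making them $\Theta(1)$ wide requires scaling by $\Theta(n)$, which destroys the $O(n)\times O(n)$ grid bound. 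The paper avoids this entirely by \emph{not} placing boxes near the original vertex positions: it computes a flat visibility representation (vertices become horizontal bars at their original $y$-coordinates, edges become vertical segments, preserving the left-to-right order along every horizontal line), thickens the bars into boxes, and then morphs the elements one at a time in a left-to-right sweep order so that each element moves through an empty convex quadrilateral. Your degenerate $0\times 0$ starting boxes also violate the positive-area requirement on vertex boxes, though that is a smaller issue.

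In the box-morphing half, your plan is missing the key mechanism. ``Recompact the drawing and re-route edges to restore $O(1)$ bends'' is not a morph operation: re-routing an edge is a discontinuous change, and the whole difficulty is to realize such combinatorial changes by continuous planarity-preserving linear morphs on a small grid. The paper's solution is specific: port alignment by turning ports around box corners, simultaneous zig-zag elimination computed via a compaction/longest-path constraint graph (which is what keeps the grid at $f(P')\times f(P')$ after each round), and, crucially, a \emph{twist} operation on square, spaced vertex boxes that changes the spirality of all incident edges by $\pm 1$ in a single $O(1)$-morph step; the number of twists per vertex is obtained by solving a linear system over a spanning tree. Nothing in your sketch plays the role of the twist, and without it (or an equivalent) there is no way to match the turn sequences of corresponding edges in $O(n)$ morphs while keeping $O(1)$ bends per edge. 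Your instinct that the quantitative grid/bend control is the delicate point is right, but ``route through a canonical drawing and recompact'' does not yet supply an argument for it.
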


This result improves the algorithm of Lubiw and Petrick~\cite{lubiw2011morphing} that finds a linear morph sequence of length 
$O(n^6)$ with explicit intermediate drawings on an $O(n^3)\times O(n^3)$ grid with 
$O(n^5)$ bends per edge
and has run-time $O(n^6)$.
Besides the bad bounds, the algorithm introduces bends at non-grid points, which our present result avoids. %

For Theorem~\ref{thm:main}, the way we keep vertices of the  explicit intermediate drawings on the grid is to impose an even stronger condition that every edge is drawn as a poly-line with all segments along grid lines (in particular, orthogonal) except for the two segments at its endpoints.  The non-orthogonal segments incident to a vertex will live in an orthogonal rectangle called a \defn{box}.
Expressed differently, 
the first step of the algorithm is to morph each of the input straight-line drawings to a \defn{planar orthogonal box drawing} where a vertex is represented by a box, and an edge is drawn as an \defn{orthogonal poly-line}, a sequence of horizontal and vertical segments, joining two vertex boxes. 
See Figure~\ref{fig:overview-alg1}. 
In this way, we reduce our problem to the problem of morphing planar orthogonal box drawings.
In fact, we only need the special case where the 
input
orthogonal box drawings have no bends.

Our second main result is an algorithm to 
find a planarity-preserving morph between two compatible planar orthogonal box drawings. 
For a precise definition of such a morph, see Section~\ref{sec:preliminaries}. 
When the drawings lie on a small grid with few bends, 
our analysis of grid size, bends, and run-time allows us to prove Theorem~\ref{thm:main}.

\begin{theorem}
\label{thm:box-morph}
Let $G$ be a connected planar graph with $n$ vertices.  If $P$ and $Q$ are a compatible pair of planar orthogonal box drawings of $G$ on an $O(n) \times O(n)$ grid with $O(1)$ bends per edge, then there exists a planarity-preserving linear morph sequence from $P$ to $Q$ of length $O(n)$ where each explicit intermediate drawing 
is an orthogonal box drawing that
lies on an $O(n) \times O(n)$ grid with $O(1)$ bends. 
Moreover, this sequence can be found in $O(n^2)$ time.
\end{theorem}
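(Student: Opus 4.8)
The plan is to reduce morphing between two compatible planar orthogonal box drawings to the already-understood problem of morphing planar orthogonal \emph{point-vertex} drawings, treating each vertex-box as a tiny rigid rectangle that we are allowed to grow and shrink. First I would argue that one can always shrink a box to a sufficiently small square (of side, say, a constant, after a global scaling of the grid by a constant factor) while keeping the drawing planar and on an $O(n)\times O(n)$ grid: locally around each vertex the incident edge-stubs leave the box in a fixed cyclic order along the four sides, so contracting the box toward an interior grid point and re-routing each stub with $O(1)$ extra bends within a small neighbourhood is a planarity-preserving linear morph (or a short sequence of such). Doing this for $P$ and for $Q$ gives two compatible drawings $P'$ and $Q'$ in which every box is a unit (or $O(1)$-size) square; these are, up to bounded rescaling, orthogonal point-vertex drawings, and we may feed them to a known morphing routine for orthogonal drawings with point vertices, which the paper cites as the technical engine. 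That routine must be checked to deliver the three quantitative guarantees we need: $O(n)$ linear-morph steps, every intermediate drawing on an $O(n)\times O(n)$ grid, $O(1)$ bends per edge, and $O(n^2)$ total running time.

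The second block of work is the ``grow'' direction: after the point-vertex morph lands us at $Q'$, we must morph $Q'$ back to $Q$ by re-inflating each square to the target box of $Q$. This is the time-reverse of the shrink argument applied to $Q$, so it costs another $O(1)$ steps per vertex but, done in parallel across all vertices, only $O(1)$ rounds overall, preserving all the bounds. Concatenating: $P \to P' \to Q' \to Q$, with the middle stage supplying the $O(n)$-length bulk and the two outer stages contributing $O(1)$ rounds each, yields a linear morph sequence of total length $O(n)$. Throughout, I would track the bend budget carefully: the shrink/grow phases add a constant number of bends per edge, and the point-vertex morphing subroutine is assumed (or shown) to keep bends at $O(1)$; since these constants add, the final count is still $O(1)$ per edge. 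Similarly the grid stays $O(n)\times O(n)$ because every operation only perturbs coordinates by $O(1)$ relative to a fixed $O(n)\times O(n)$ frame, possibly after one global scaling by a constant to create ``room'' for the stub re-routings and the box interiors.

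I expect the main obstacle to be the point-vertex orthogonal morphing engine itself, specifically making its quantitative promises precise enough to chain into Theorem~\ref{thm:main}. Known results on morphing orthogonal drawings typically morph between two drawings with the \emph{same} orthogonal shape (same sequence of left/right turns along each edge), whereas $P'$ and $Q'$ need not have matching shapes. So a preliminary normalization is likely required: realize both $P$ and $Q$ with a common combinatorial orthogonal representation, or interpose one drawing that bridges the two shapes. Handling a \emph{connected} (not $2$-connected) graph is a further wrinkle, since cut vertices and bridges make the ``rigid box'' local picture more delicate and may force the orthogonal representation to carry extra bends; I would deal with this by adding a bounded number of dummy edges to biconnect the graph, morphing in the augmented graph, and then deleting the dummies, checking that all bounds survive. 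The genuinely new content of Theorem~\ref{thm:box-morph} is therefore the careful engineering of the box-shrink/box-grow reductions plus this shape-normalization, rather than any new morphing primitive.
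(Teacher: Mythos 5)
There is a genuine gap, and it sits exactly where you locate ``the main obstacle.'' First, the reduction to point-vertex orthogonal drawings cannot work as stated: an orthogonal \emph{point} drawing forces maximum degree~$4$, whereas a vertex box may carry arbitrarily many ports. Shrinking a box to a unit square does not make it a point --- the ports remain distinct points on the square's boundary, and collapsing them to one point would make edges coincide. So there is no known point-vertex morphing engine to hand the shrunken drawings to; the whole reason the theorem is phrased for box drawings is to handle arbitrary degree, and the paper must genuinely extend the point-vertex machinery (new ``turn a port around a corner'' and ``twist a whole box'' operations) rather than reduce to it. Second, the ``preliminary normalization'' you flag as ``likely required'' --- making the two drawings have the same turn sequence on every edge --- is not a preliminary step but the bulk of the theorem. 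The paper's Phase~I does this via port alignment, zig-zag elimination, and spirality adjustment by box twists; moreover the prior point-vertex result of Biedl et al.\ achieves this with $O(n^2)$ linear morphs, so getting down to $O(n)$ requires scheduling $O(n)$ twists into rounds performed simultaneously and recompacting (by a global zig-zag-elimination computed via a constraint/trapezoidal graph) after every round so that the grid does not grow by a constant factor $O(n)$ times. Your proposal contains none of this; it defers the theorem to a subroutine that does not exist in the required form.

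Where your outline does match the paper is at the very end: once the two drawings \emph{are} parallel with no coincident defining points, the paper does exactly what you suggest for the middle stage --- it turns ports, corners, and bends into vertices and invokes the parallel-morph result of Biedl et al.\ (Phase~II). But that step is the easy part; the quantitative promises you say ``must be checked'' ($O(n)$ morphs, $O(n)\times O(n)$ grids, $O(1)$ bends, $O(n^2)$ time) are precisely what has to be \emph{established}, not checked, and they are established by the new Phase~I machinery. A minor further point: ``adding a bounded number of dummy edges to biconnect the graph'' is not viable (biconnecting may need $\Theta(n)$ edges), and the paper does not biconnect; connectivity is used only to speed up the trapezoidal-map computation.
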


\begin{figure}[hp]
\centering
\includegraphics[page=11,scale=0.47]{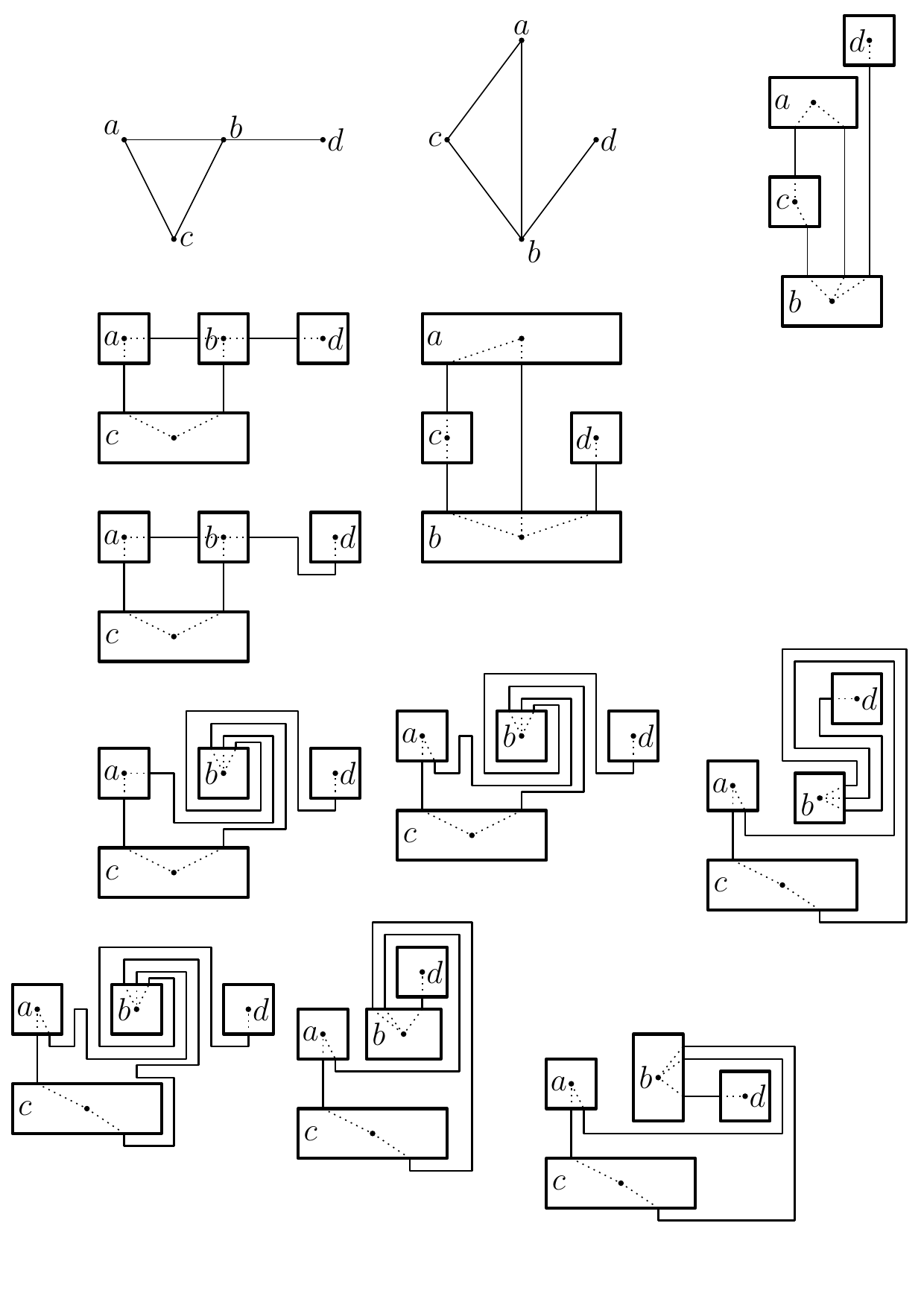}
\caption{
The two straight-line point drawings (A) and (J) are (respectively) morphed to admitted drawings of the orthogonal box drawings (B) and (H), by introducing two bends to each edge.
A morph from (B) to (H) (see the relevant %
part of Figure~\ref{fig:overview-alg2}) then induces a morph from (A) to (J). %
}
\label{fig:overview-alg1}
\vspace*{5mm}
\includegraphics[page=9,scale=0.47]{figure-overview-example-v3}
\caption{Morphing between orthogonal box drawings where one drawing has bends. 
The input drawings are (B) and (I).
Phase Ia morphs (B) to (C), port-aligned with %
(I).
Phase Ib morphs (C) to (D) and (I) to (H), eliminating zig-zags.
In (D) we need 4 clockwise twists of $b$ 
to match spirality with (H). Twisting once morphs (D) to (E), and eliminating zig-zags gives (F).  
Three more twists and zig-zag eliminations give (G), which is morphed to the parallel drawing (H) in Phase II.    
}
\label{fig:overview-alg2}
\end{figure}

We prove Theorem~\ref{thm:box-morph} by building on algorithms to morph between planar orthogonal drawings where a vertex is drawn as a  point and an edge is drawn as an orthogonal poly-line, a more limited setting since vertices must have degree at most 4. For these algorithms, $n$ is the number of vertices plus bends. 
The algorithm of 
Biedl, Lubiw, Petrick, and Spriggs~\cite{biedl2013morphing} uses $O(n^2)$ linear morphs and runs in $O(n^3)$ time.
Van Goethem, Speckmann, and Verbeek~\cite{van2022optimal} improved the number of linear morphs to $O(n)$ by performing operations simultaneously, however without a run-time analysis.
We follow the two phases of Biedl et al.: (1) morph so that for each directed edge, its sequence of directions of edge segments (ignoring segment lengths) is the same in both drawings;
and (2) morph between such ``parallel'' orthogonal graph drawings.  We use the second phase as-is.  However, we must do some work to extend the first phase to the setting of orthogonal box drawings, and to improve the number of linear morphs to $O(n)$
by performing many operations simultaneously as in Van Goethem et al., while keeping the grid size small and the run-time fast.

\subsection{Preliminaries}
\label{sec:preliminaries}

Most graph drawing methods represent the vertices as points.
In a \defn{straight line} drawing an edge is drawn as a line segment between the vertex points, whereas in a \defn{poly-line} drawing an edge is drawn as a simple poly-line---a non-self-intersecting path of line segments joined at \defn{bends}.

In a \defn{linear morph} from drawing $P$ to $Q$, the position of a \defn{defining point} representing a vertex or bend is linearly interpolated between its position in $P$ and in $Q$, so the point travels on a straight line at constant speed.  A \defn{unidirectional morph} is a special case of a linear morph where the lines along which the points move are all parallel. 
In a \defn{horizontal [vertical] morph}, these lines are horizontal [vertical].

Our algorithm works with poly-line drawings, and the algorithm may add \defn{degenerate bends} at existing vertices/bends %
or at a grid point along an edge segment.  The algorithm may also delete such degenerate  bends.  %
We separate the steps of a linear morph sequence into: steps that add/delete degenerate bends; and steps that linearly interpolate between drawings that have the same number of bends along every edge, in which case their orders of appearance along the edge in the two drawings determines their correspondence.

In an \defn{orthogonal point drawing} %
vertices are represented by points and edges by orthogonal poly-lines. 
In an \defn{orthogonal box drawing} a vertex is represented by a positive area \defn{box} with two horizontal sides and two vertical sides, and edges are represented by orthogonal poly-lines.
The point where an edge attaches to a vertex box is called a \defn{port}. 
An orthogonal box drawing is \defn{planar} if there are no extraneous feature intersections: no coincident ports, no intersecting vertex boxes,  no intersections between an edge and another edge, no intersection between an edge and a vertex box except at a port. 
One exception is that we will allow a port at a corner of a vertex box.
For orthogonal 
box drawings, there is one more type of \defn{degenerate bend} in addition to those defined above: a bend at a port. %

We use the following relationship between box drawings and poly-line drawings.  Any 
planar orthogonal box drawing has a 
corresponding 
planar
\defn{admitted poly-line drawing}, where the vertex box is replaced by a vertex point at the center of the box joined by straight segments to the ports, which become bends in the edges.  Thus an edge drawn with $k$ bends in the orthogonal box drawing becomes an edge with $k+2$ bends in the admitted drawing. 
\Cref{fig:overview-alg2} shows extra segments for the admitted drawings with dotted lines.

A morph of an orthogonal box drawing $D$ is specified in terms of 
its \defn{defining points} which are the vertex box corners, the ports, and the edge bends. 
In a \defn{linear morph} between two orthogonal box drawings the positions of the defining points are linearly interpolated. 
We restrict to \defn{structure-preserving} linear morphs that meet the following conditions:
\begin{enumerate}
\item At every time during the morph the positions of the four corners of a vertex box determine a \defn{vertex rectangle} which is a rectangle of positive area, though not necessarily with horizontal and vertical sides, and ports stay attached to their vertex rectangle.
\item 
For an edge from $u$ to $v$, every segment along the edge remains orthogonal (horizontal or vertical) and does not change its direction (upward, downward, rightward or leftward).  
If a segment has length zero at a strictly intermediate time during the morph, then it has length zero throughout the morph.
\end{enumerate}
By allowing non-axis-aligned vertex rectangles in condition (1) we can 
``twist'' a square vertex box as shown in
\cref{fig:simple-twist-example}\footnote{This is similar to the ``rotation'' steps used to morph rectangular duals~\cite{chaplick2023morphing}.}.
The figure shows a clockwise twist where each corner moves to the position of its clockwise neighbouring corner, so two of the corners move horizontally and the other two move vertically.  Ports and bends may move non-orthogonally.
Twists will be defined formally in Section~\ref{sec:performing-twists}. 
Apart from twists, all our morphs
of orthogonal box drawings
will in fact be
horizontal or vertical.

\begin{figure}[ht]
\hspace*{\fill}
\includegraphics[page=6,scale=0.9]{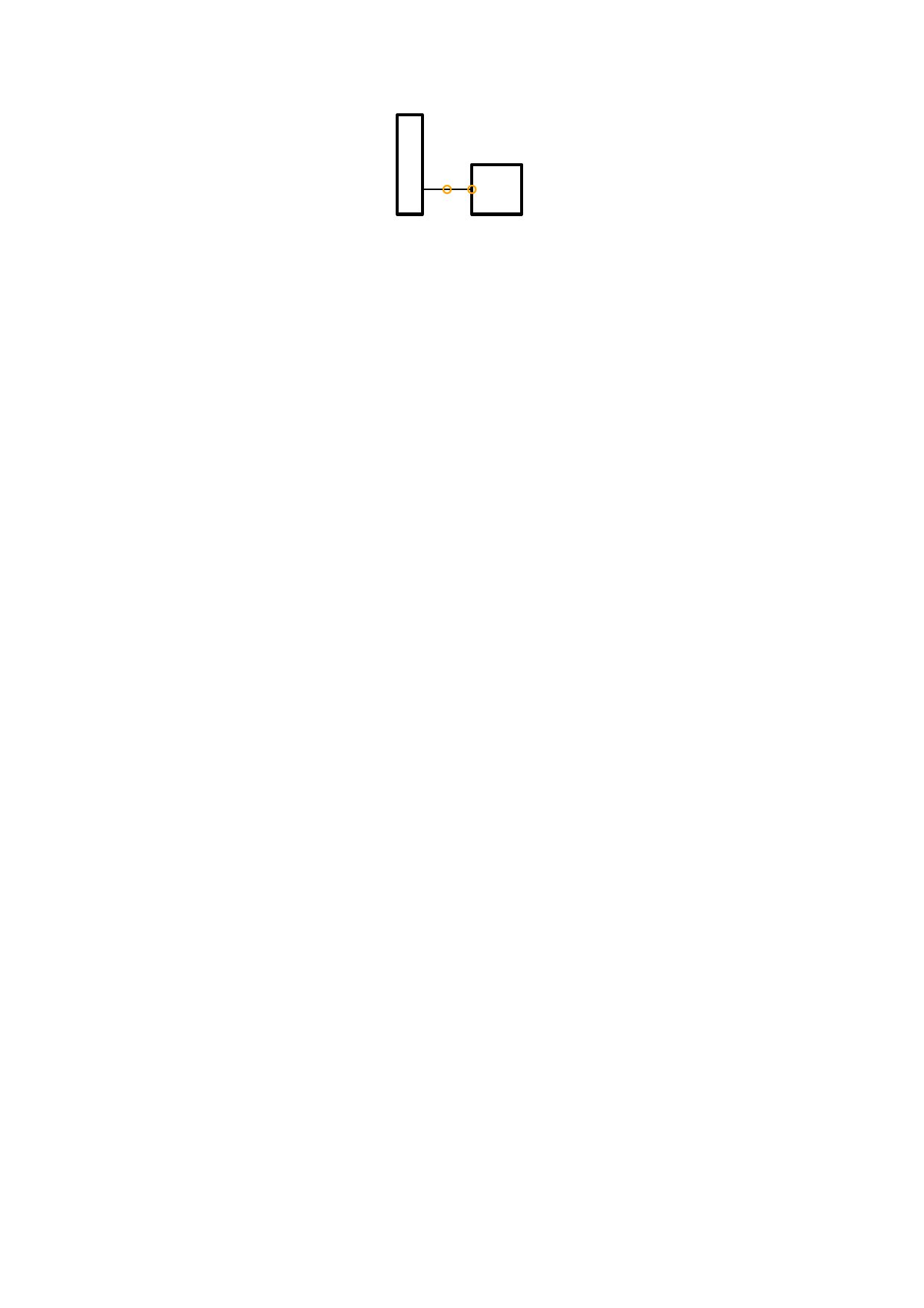}
\hspace*{\fill}
\includegraphics[page=7,scale=0.9]{figure-box-linear-morph-example-clockwise}
\hspace*{\fill}
\includegraphics[page=8,scale=0.9]{figure-box-linear-morph-example-clockwise}
\hspace*{\fill}
\caption{An example of a twist at a vertex box. 
Red arrows show motion to come; dashed portions show past motion.}
\label{fig:simple-twist-example}
\end{figure}

A linear morph of an orthogonal box drawing is \defn{planarity-preserving} if the drawing is planar at every time during the morph.
When we say a \defn{planarity-preserving linear morph of an orthogonal box drawing} we always mean that the morph is also structure-preserving.
We prove in Section~\ref{sec:reduction} that a planarity-preserving linear morph of an orthogonal box drawing induces a planarity-preserving linear morph of its admitted poly-line drawing. 

We  morph orthogonal box drawings using a sequence of intermediate goals.
Two orthogonal box drawings of the same graph  are \defn{port-aligned} if,
\changed{for each vertex $v$,}
when we compare the two vertex boxes representing 
$v$, then  for every side (top, bottom, right left) the sequence of edges attached to that side of the vertex box is the same in both drawings. 
Two drawings
without degenerate bends
are \defn{parallel} if they are port aligned and for every directed edge 
the two directed poly-lines representing the edge have the same sequence of left/right turns.
To make drawings parallel, we use twists and zig-zag eliminations (to be defined), where 
a \defn{zig-zag} is a sequence of two opposite turns, left then right, or right then left.

\section{Algorithm Overview}
\label{sec:overview}

This section gives the main steps of the algorithm that will prove Theorem~\ref{thm:main}.
The input is a pair of compatible straight-line planar drawings of a graph $G$ on $n$ vertices.

\begin{samepage}
\begin{algorithm-stmt}[Reduction to orthogonal box drawings]
\label{alg:reduction}
\end{algorithm-stmt}
Morph each of the straight-line drawings to an admitted poly-line drawing of an  orthogonal box drawing
in which edges are drawn as single horizontal/vertical segments. 
See \cref{fig:overview-alg1} for an example.
We prove in Lemma~\ref{lemma:orthogonal-box-linear-morphs-to-admitted-linear-morphs} that a morph of an orthogonal box drawing induces a morph of its admitted poly-line drawing, so this step reduces the problem
to that of morphing orthogonal box drawings
with zero bends per edge.
Details of the reduction are in Section~\ref{sec:reduction}.
\end{samepage}

\begin{algorithm-stmt}[Morphing between planar orthogonal box drawings]
\label{morph-algorithm}
\end{algorithm-stmt}
The input consists of a pair of compatible planar orthogonal box drawings $P$ and $Q$ of a graph $G$ on $n$ vertices.
$P$ and $Q$ may have bends, which is more general than the output of \cref{alg:reduction}. 
We describe steps of the morph ``from both ends'', modifying $P$ and $Q$ until they meet in the middle, which is legitimate since linear morphs are reversible.

\begin{description}
\item{\bf Phase I:}
Morph $P$ and $Q$
to become parallel.
Building on the approach of Biedl et al.~\cite{biedl2013morphing} for the case of orthogonal point drawings, we use the following steps:
\begin{description}
\item{\bf Phase Ia:} Morph $P$ to become port-aligned with $Q$.
This adds bends in edges of~$P$.
\item{\bf Phase Ib:} Morph $P$ and $Q$ to eliminate zig-zags.
(In the situation %
required to prove Theorem~\ref{thm:main}, $Q$ has no bends and thus no zig-zags.)
\item{\bf Phase Ic:} Morph $P$ so that for every directed edge  the difference between its number of left and right turns (the ``spirality'') is the same in $P$ as in $Q$,
while preserving port-alignment. 
This is accomplished via twists, which are  
interspersed  with zig-zag eliminations.
We use the property  that if $P$ and $Q$ are port-aligned with matching spirality and no zig-zags or degenerate bends, then they are parallel.
\end{description}
Details of Phase I can be found  in Section~\ref{sec:morph-orth-box}.

\item{\bf Phase II}: Morph between the two parallel orthogonal box drawings.  To do this, we imagine ports, corners, and bends as vertices and  
appeal directly to the  result of Biedl et al.~\cite{biedl2013morphing} for morphing parallel orthogonal point drawings.  
See Appendix~\ref{appendix:Phase-II} for  details.
\end{description}

See \cref{fig:overview-alg2} for a sketch of all these steps.
Although most of the detailed proofs that our morphs are planarity-preserving appear only in the Appendix, it is worth noting that, apart from twists (which are handled in Section~\ref{sec:performing-twists})
and the reduction to orthogonal box drawings, 
all of our morphs are horizontal or vertical, which simplifies the task.
We prove   
in Lemma~\ref{lemma:horizontal-morph} in Appendix~\ref{appendix:preserving-planarity} that 
a horizontal morph between planar orthogonal box drawings $P$ and $Q$ is planarity-preserving so long as every horizontal line intersects the same sequence of defining points and edge/box segments.

\section{Reduction to Orthogonal Box Drawings}
\label{sec:reduction}

In this section we 
morph a straight-line planar graph drawing to an admitted poly-line drawing of an orthogonal box drawing,  and we prove that a morph between two orthogonal box drawings induces a morph between their admitted drawings.  
In all cases ``morph'' means satisfying all our conditions, as detailed below.
We begin with the second result:

\begin{lemma}
\label{lemma:orthogonal-box-linear-morphs-to-admitted-linear-morphs}
Let $P$ and $Q$ be
planar orthogonal box drawings of the same graph.
Let $P'$ and $Q'$ be the admitted planar poly-line drawings of $P$ and $Q$, respectively.
Suppose 
there is a planarity-preserving linear morph 
from $P$ to $Q$.
Then the linear morph from $P'$ to $Q'$ is also planarity-preserving.
\end{lemma}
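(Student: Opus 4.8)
The plan is to reduce the lemma to a static fact: \emph{if $D$ is a planar orthogonal box drawing---allowing the vertex boxes to be arbitrary positive-area rectangles, not necessarily axis-aligned, as happens at intermediate times of a structure-preserving morph---then its admitted poly-line drawing is planar.}

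First I would set up the correspondence between the two morphs. For $t \in [0,1]$ let $D_t$ be the drawing at time $t$ of the given linear morph from $P$ to $Q$. Since this is a linear morph of box drawings, $P$ and $Q$ have the same number of bends on every edge and the same port structure; hence so do their admitted drawings $P'$ and $Q'$ (each edge simply gains its two ports as bends), so ``the'' linear morph from $P'$ to $Q'$ is well defined. I claim its drawing at time $t$ is exactly the admitted drawing of $D_t$; call it $D_t'$. Indeed, the defining points of $P'$ are the vertex points (centres of the boxes of $P$), the ports of $P$ (now bends), and the bends of $P$, and the admitted construction puts these in bijection with the corresponding defining points of $P$ (and likewise for $Q$). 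Under linear interpolation the ports and bends move exactly as in the morph from $P$ to $Q$, and a vertex point, being the average of its four box corners, moves linearly from the centre of its $P$-box to the centre of its $Q$-box, hence stays at the centre of the (positive-area) vertex rectangle of $D_t$. So $D_t'$ is the admitted drawing of $D_t$, as claimed. Because the morph from $P$ to $Q$ is planarity-preserving, $D_t$ is planar for all $t$, so the static fact gives that $D_t'$ is planar for all $t$---which is the lemma.

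To prove the static fact, observe that the admitted drawing differs from $D$ only in that at each vertex $v$ the box $R_v$ is replaced by a point $c_v$ at its centre together with a \emph{spoke} segment from $c_v$ to each port on $\partial R_v$, each such port becoming a bend. I would use the following geometric facts, all consequences of planarity of $D$ and the structure-preserving conditions: (i) $R_v$ is convex and $c_v$ is interior to it, so a spoke lies in the closed rectangle $R_v$ and meets $\partial R_v$ only at its port endpoint; (ii) a ray from $c_v$ meets $\partial R_v$ exactly once, so distinct ports of $v$ (distinct since $D$ has no coincident ports) lie on distinct rays and the spokes at $v$ pairwise meet only at $c_v$; (iii) distinct boxes are disjoint, so spokes at distinct vertices are disjoint and no spoke meets a box other than its own; (iv) an edge of $D$ meets the box of an incident endpoint only at its port there and meets no other box, so no edge enters the interior of any box. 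From (i)--(iv) one checks every pair of features of the admitted drawing: the vertex points are pairwise distinct and lie on no edge or box; each admitted edge is a simple poly-line (the original simple orthogonal poly-line with one spoke appended at each end, at distinct ports in disjoint boxes); two distinct admitted edges meet only where the underlying edges of $D$ did, i.e.\ at a common endpoint, now a common vertex point; and an admitted edge meets a box only at its ports. Hence the admitted drawing has no extraneous feature intersections and is planar.

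I expect the only real work to be the routine case analysis in the last paragraph. The one point deserving care is that every geometric property invoked there must hold for the intermediate drawings $D_t$, whose vertex boxes are rotated rectangles; convexity and mutual disjointness of the boxes, edges touching their boxes only at ports, and absence of coincident ports all persist because the morph is both structure-preserving and planarity-preserving, so I anticipate no real obstacle.
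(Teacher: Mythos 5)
Your proposal is correct and follows essentially the same route as the paper: the paper's proof also reduces to the observation that the vertex point, being the average of the four box corners, remains at the centre of the (positive-area) vertex rectangle throughout the linear morph, and that the admitted drawing differs from the box drawing only inside the boxes. Your explicit ``static fact'' about planarity of the admitted drawing of a (possibly rotated-rectangle) planar box drawing is just a more detailed write-up of what the paper treats as immediate.
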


The idea is that a linear combination is preserved throughout a linear morph, and the center of a vertex box/rectangle is a linear combination (the average) of the four corners.  For details, see Appendix~\ref{appendix:reduction}.

We now turn to the problem of  morphing a straight-line planar drawing to an admitted drawing of an orthogonal box drawing.

\begin{theorem}
\label{thm:planar-straight-line-to-boxes-morph}
Let $P$ be a planar straight-line drawing of a graph $G$ with $n$ vertices drawn on an $O(n) \times O(n)$ integer grid.
Then there is
a planar orthogonal box drawing $D$ of $G$ on an $O(n) \times O(n)$ grid with no bends, and there is
a planarity-preserving linear morph sequence
of length $O(n)$
from $P$ to $P'$, the admitted poly-line drawing of $D$,  where each explicit intermediate drawing lies on an $O(n) \times O(n)$ grid.  
Moreover, 
this sequence can be computed in $O(n^2)$ time.  
\end{theorem}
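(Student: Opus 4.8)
The plan has three parts: (i) read off from the combinatorial data of $P$ an explicit target box drawing $D$; (ii) turn $P$ into a poly-line drawing $\widehat{P}$ that has the same number of bends on each edge as the admitted drawing $P'$ of $D$; and (iii) morph $\widehat{P}$ into $P'$ by a sequence of horizontal and vertical morphs, certifying planarity with Lemma~\ref{lemma:horizontal-morph}.

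For part (i), let $\pi$ be the planar embedding of $G$, together with the outer face, determined by $P$. Add edges inside the faces of $G$ to obtain a $2$-connected plane graph $G^{+}$ whose embedding extends $\pi$ and has the same outer face, fix an $st$-numbering of $G^{+}$ chosen so that a standard bar visibility construction places the outer face of $G^{+}$ on the unbounded face, and compute such a bar visibility representation of $G^{+}$ on an $O(n)\times O(n)$ grid in $O(n)$ time. Erasing the drawn copies of the edges of $G^{+}\setminus G$ leaves a planar drawing of $G$ with embedding $\pi$ in which every vertex is a horizontal bar and every edge is a single vertical segment. After doubling all $y$-coordinates, replacing each bar by a box of height $1$, and scaling $D$ by a constant so that box centers are integral, we obtain a planar orthogonal box drawing $D$ of $G$ with no bends, on an $O(n)\times O(n)$ grid, with embedding $\pi$; its admitted drawing $P'$ then has exactly two bends per edge (the two ports), lies on an $O(n)\times O(n)$ grid, and is compatible with $P$.

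For parts (ii)--(iii), first apply the single planarity-preserving linear morph that scales $P$ about the origin by the factor $3$; since every edge vector $(a,b)$ satisfies $\gcd(3a,3b)\ge 3$, every edge of the scaled drawing contains at least two interior grid points, at which we place two degenerate bends, obtaining a poly-line drawing $\widehat{P}$ with embedding $\pi$, two bends per edge, on an $O(n)\times O(n)$ grid. Now $\widehat{P}$ and $P'$ have the same graph, embedding, and bend count per edge, so their defining points correspond, and we morph $\widehat{P}$ to $P'$ by a sequence of horizontal and vertical morphs: each step moves a batch of defining points in one axis direction with the rest held fixed, never changing the sequence of defining points and edge segments met by any line in the orthogonal direction, so Lemma~\ref{lemma:horizontal-morph} and its vertical mirror certify planarity; and since $\widehat{P}$ and $P'$ both lie on $O(n)\times O(n)$ grids, the intermediate snapshots can be kept integral and on an $O(n)\times O(n)$ grid. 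The purpose of these steps is to bring the $O(n)$ defining points into the same left-to-right and bottom-to-top orders as in $P'$, after which one horizontal and one vertical morph finish the job; a schedule of $O(n)$ rounds suffices, giving $O(n)$ morphs in total, each computable in $O(n)$ time, hence $O(n^{2})$ overall. Theorem~\ref{thm:box-morph} and Lemma~\ref{lemma:orthogonal-box-linear-morphs-to-admitted-linear-morphs} are then invoked starting from $P'$.

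The delicate part is (iii): the vertical stacking order of the vertex boxes forced by $D$ is essentially that of an $st$-numbering and need not agree with the vertical order of the vertices in $P$, so defining points genuinely have to be reordered, and a naive schedule spends $\Theta(n)$ morphs per point and $\Theta(n^{2})$ morphs overall---or else blows the grid up. Achieving $O(n)$ morphs on an $O(n)\times O(n)$ grid requires installing the stacking order of $D$ one layer at a time while moving many defining points in parallel inside each unidirectional morph, in the spirit of Van~Goethem, Speckmann and Verbeek~\cite{van2022optimal}, and sliding the two degenerate bends of part (ii) monotonically to their port positions so that the ``stub'' pieces of each edge clear the implicitly growing vertex boxes without ever crossing.
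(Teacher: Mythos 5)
There is a genuine gap, and it is in the step you yourself flag as ``the delicate part.'' The paper's construction of $D$ is not an arbitrary bar visibility representation: it invokes the result of Biedl~\cite{biedlheight} that a planar straight-line drawing can be turned into a flat visibility representation in which \emph{every vertex bar keeps the $y$-coordinate it had in $P$} and \emph{every horizontal line crosses the same left-to-right sequence of vertices and edges as in $P$}. (Algorithmically this is achieved by directing all edges upward, so that the original $y$-coordinates are a valid level assignment for the resulting $st$-graph, and only the $x$-coordinates are recomputed via longest paths in the dual.) This property is what makes the morph from $P$ to $P'$ tractable: nothing ever has to change its vertical order, so every step is certified by the compatible-orderings hypothesis of Lemma~\ref{lemma:horizontal-morph}. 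Your part (i) instead fixes an arbitrary $st$-numbering of a $2$-connected augmentation, which in general produces a $D$ whose vertical stacking order disagrees with that of $P$. You then need part (iii) to reorder defining points vertically, but the mechanism you offer --- unidirectional morphs that ``never change the sequence of defining points and edge segments met by any line in the orthogonal direction,'' justified by Lemma~\ref{lemma:horizontal-morph} --- is self-contradictory: a morph satisfying that hypothesis by definition cannot reorder anything, and a vertical morph that does swap the $y$-order of two vertices falls outside the lemma entirely and needs its own planarity argument. Reordering vertices vertically in a planar drawing is essentially the full morphing problem again; the appeal to ``installing the stacking order one layer at a time in the spirit of Van~Goethem et al.~\cite{van2022optimal}'' with ``a schedule of $O(n)$ rounds'' is an assertion, not a proof, and there is no reason to believe it stays on an $O(n)\times O(n)$ grid.

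The fix is to adopt the paper's stronger specification of $D$ from the outset, after which the remaining difficulty is only that the two new bends per edge cannot be introduced at non-grid points, so a single global horizontal morph is unavailable. The paper handles this by placing $P'$ strictly to the left of $P$, ordering edges and vertices by the left-to-right order in which horizontal lines meet them (edges before their endpoints), and pulling them over one at a time: each edge gets two degenerate bends at its endpoints which are then morphed to the ports, and planarity follows because the convex quadrilateral spanned by the old and new positions of the edge is empty at that moment. Your part (ii) (scaling by $3$ to create interior grid points) is a workable alternative way to place the two bends on the grid, but it does not remove the need for the empty-region argument or some substitute for it, and it does nothing to address the reordering problem created in part (i).
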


\begin{figure}[ht]
\hspace*{\fill}
\begin{minipage}{0.3\linewidth}
\includegraphics[page=23,scale=0.9]{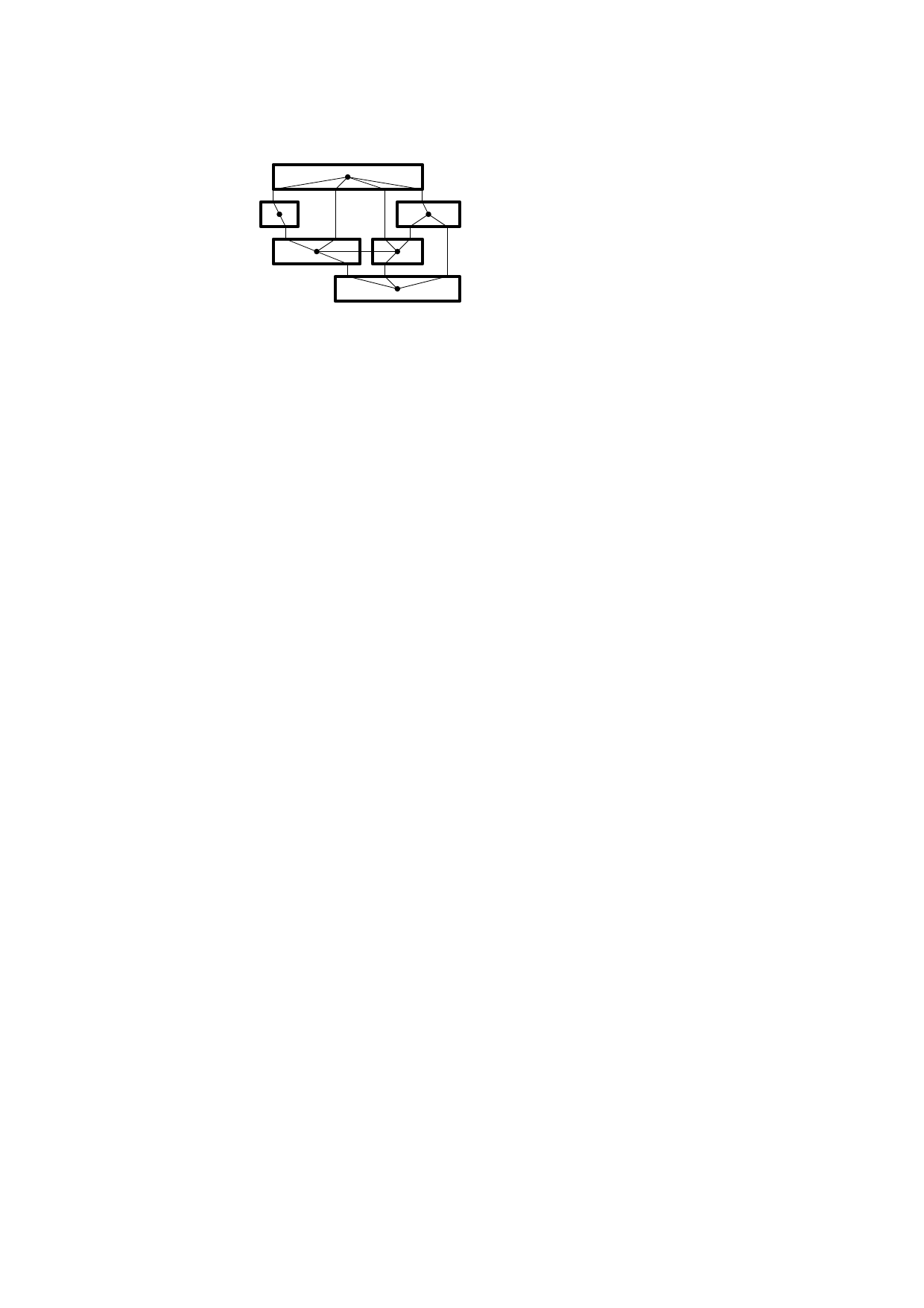}

\vspace*{-10mm} (a)
\end{minipage}
\hspace*{\fill}
\begin{minipage}{0.35\linewidth}
\includegraphics[page=24,scale=0.9]{figure-straight-line-to-admitted-and-box-example-v3.pdf} 

\vspace*{-10mm} (b)
\end{minipage}
\hspace*{\fill}
\begin{minipage}{0.3\linewidth}
\includegraphics[page=25,scale=0.9]{figure-straight-line-to-admitted-and-box-example-v3.pdf} 

\vspace*{-10mm} (c)
\end{minipage}
\hspace*{\fill}
\caption{(a) a straight-line drawing $P$;
(b)
the corresponding orthogonal box drawing $D$ and admitted drawing $P'$ with every vertex at the same $y$-coordinate as in $P$; (c) adding two bends to each edge of $P$ would allow a horizontal morph to $P'$, but this is forbidden since the bends are not on the grid.   
}
\label{fig:straight-line-to-admitted-and-box-example}
\end{figure}

We prove Theorem~\ref{thm:planar-straight-line-to-boxes-morph} 
in two steps: (1) construct an orthogonal box drawing $D$ with certain properties as described below; 
(2) show how to morph from $P$ to $P'$.

For (1) a vertex $v$ of $P$ at $y$-coordinate $c$ will be replaced in %
$D$ by a vertex box that is skinny in the $y$-direction: 
we add a new grid line just below $c$ and a new grid line just above $c$ and construct a vertex box 
between these new grid lines.  Adding the grid lines \defn{refines} the grid in the $y$ direction by a factor of $3$. 
See \cref{fig:straight-line-to-admitted-and-box-example}.
The properties for $D$ are that edges have no bends, each vertex has the same $y$-coordinate in $P$ and in the refined grid of $P'$, and every horizontal line intersects the same sequence of edges in $P$ and $P'$. 
We construct $D$ in $O(n\log n)$ 
time by turning an existence result due to Biedl~\cite{biedlheight}
into an efficient algorithm.  Her 
result says that 
any planar straight-line drawing can be redrawn so that each vertex becomes a horizontal segment (or ``bar'') \changed{with the} %
same $y$-coordinate, and each non-horizontal edge becomes a vertical segment incident to the bars representing its endpoints, while maintaining the order of edges and vertices crossed by any horizontal line.  
Thickening the vertex bars to boxes gives the result we need. 
For details, see Appendix~\ref{appendix:reduction}.

For (2), 
we show how to morph from $P$ to the admitted drawing $P'$. 
It would be straight-forward to add degenerate bends where each edge crosses the newly added horizontal grid lines (see Figure~\ref{fig:straight-line-to-admitted-and-box-example}(c)), and perform a single horizontal morph to move those bends to the position of the corresponding ports.
However, our rule is that new bends can only be added at grid points, so we are forced to a more complicated solution.  Imagine $P'$ drawn strictly to the left of $P$.  
Order the edges and vertices of $P$ left-to-right, with priority given to edges. 
One by one, in this order, pull the edges and vertices of $P$ to their positions in $P'$. To handle one edge of $P$ we add two new bends at its endpoints, and morph them to the positions of the corresponding ports.
See \cref{fig:straight-line-to-admitted-and-box-algorithm-example}
for a %
sketch of the algorithm and
Appendix~\ref{appendix:reduction} for more details.

\begin{figure}[ht]
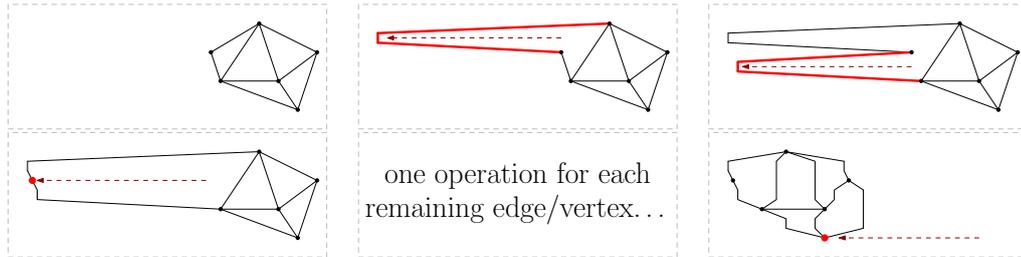

\hspace*{\fill}
\includegraphics[page=6,scale=0.45]{figure-straight-line-to-admitted-and-box-example-v3.pdf}
\hspace*{\fill}
\includegraphics[page=7,scale=0.45]{figure-straight-line-to-admitted-and-box-example-v3.pdf}
\hspace*{\fill}
\includegraphics[page=8,scale=0.45]{figure-straight-line-to-admitted-and-box-example-v3.pdf}
\hspace*{\fill} \\
\vspace{1em}%
\hspace*{\fill}
\includegraphics[page=9,scale=0.45]{figure-straight-line-to-admitted-and-box-example-v3.pdf}
\hspace*{\fill}
\includegraphics[page=28,scale=0.45]{figure-straight-line-to-admitted-and-box-example-v3.pdf}
\hspace*{\fill}
\includegraphics[page=21,scale=0.45]{figure-straight-line-to-admitted-and-box-example-v3.pdf}
\hspace*{\fill}
\caption{
\changed{Morphing from a straight-line drawing on the right to
(the admitted drawing of)
an orthogonal box drawing on the left by moving edges/vertices one-by-one.
Each pane shows the newly morphed edge/vertex in red.}
}
\label{fig:straight-line-to-admitted-and-box-algorithm-example}
\end{figure}

\section{Morphing Orthogonal Box Drawings}
\label{sec:morph-orth-box}

This section contains
the algorithm to morph between orthogonal box drawings 
(Algorithm~\ref{morph-algorithm}).  
The algorithm works for any two compatible planar orthogonal box drawings, but the bounds needed to prove our main theorem (number of linear morphs, run-time, grid size, and number of bends) apply only when the input drawings lie on an $O(n) \times O(n)$ grid and have $O(1)$ bends per edge.
Informally, we call these \defn{small} drawings but lemma statements are precise.
Let \defn{$f(D)$} denote the number of defining points of orthogonal box drawing $D$. A count of box corners, ports, and bends shows that 
$f(D)$ is $O(n)$ if the number of bends is $O(n)$.

Recall from Section~\ref{sec:overview} that the algorithm has two phases. Phase I morphs the drawings $P$ and $Q$ to become parallel, and is described in this section.  
Phase II morphs between the resulting parallel drawings, and is deferred to  Appendix~\ref{appendix:Phase-II},  
since it follows directly from Biedl et al.~\cite{biedl2013morphing}.

Phase Ia, port alignment, is in 
Section~\ref{sec:Phase-Ia}.
Port aligned drawings are parallel if each directed edge has the same sequence of left/right turns in both drawings.
Following Biedl et al.~\cite{biedl2013morphing}, we define the
\defn{spirality} of a directed orthogonal poly-line to be its number of left turns minus its number of right turns\footnote{This is different from the definition of spirality in~\cite{van2022optimal}.} ignoring degenerate bends.
An edge drawn without zig-zags 
has only left turns or only right turns.  Thus if the two drawings of an edge have the same spirality, no degenerate bends, and no zig-zags, then they have the same sequence of left/right turns.
We eliminate zig-zags in Section~\ref{sec:zig-zag-elimination} and adjust spirality in Section~\ref{sec:spirality}. 
A main novel contribution is the twist operation for boxes in Section~\ref{sec:performing-twists}.  

\subsection{Phase Ia: Port Alignment}
\label{sec:Phase-Ia}

\begin{figure}[ht]
    \centering
\hspace*{\fill}
    \includegraphics[scale=0.3,page=10]{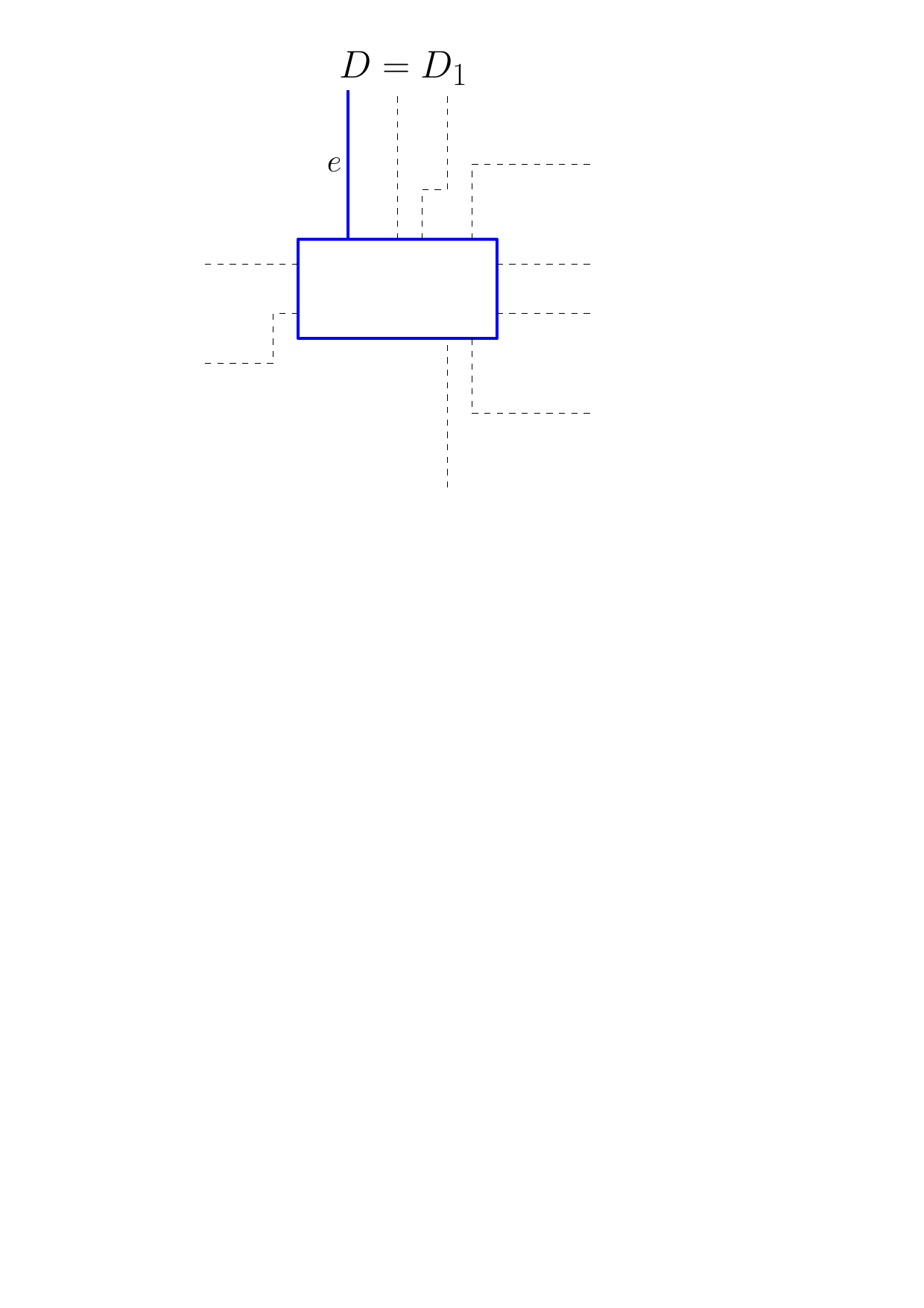}
\hspace*{\fill}
    \includegraphics[scale=0.3,page=11]{figure-corner-morph-steps}
\hspace*{\fill}
    \includegraphics[scale=0.3,page=12]{figure-corner-morph-steps}
\hspace*{\fill}
    \includegraphics[scale=0.3,page=13]{figure-corner-morph-steps}
\hspace*{\fill}
    \caption{A sketch of how a port can be morphed around a corner by introducing 3 bends $b_i$ and 3 new grid lines (in red).}
    \label{fig:corner-morph-steps}
\end{figure}

We morph $P$ to be port-aligned with $Q$ by successively ``turning'' a port around the corner of a vertex box as shown in
\cref{fig:corner-morph-steps}.  Each turn operation uses $O(1)$ planarity-preserving linear morphs, and adds 3 bends and 3 grid lines. 
We prove in Appendix~\ref{appendix:Phase-Ia}
that $O(1)$ such turn operations per edge suffice.
For small drawings, we use in total $O(n)$ linear morphs, and remain on an $O(n) \times O(n)$ grid with $O(1)$ bends per edge. The run-time is $O(n^2)$.
Note that a port may be coincident with a corner of a vertex box in explicit intermediate drawings. However, after Phase Ia, we never allow 
a port at a corner.

\subsection{Phase Ib: Zig-zag Elimination}
\label{sec:Phase-Ib}
\label{sec:zig-zag-elimination}
Zig-zag elimination
is used in Phases Ib and Ic.
It 
performs a sequence of horizontal and vertical morphs to
get rid of all the zig-zags in an orthogonal box drawing, and produces a drawing $D$ with  
grid size $f(D)$. 
This  is
$O(n)$ if each edge has $O(1)$ bends, %
but the exact bound 
$f(D)$
will 
be %
important 
when we apply zig-zag elimination $O(n)$ times in Phase Ic.

In Phase Ib, we apply zig-zag elimination once to $P$, and, if necessary,  once to $Q$.  
Note that during Phase I, the drawing $Q$ is altered only in Phase Ib, and is not altered at all 
in the situation arising from Theorem~\ref{thm:main}.

A zig-zag is \defn{horizontal} [\defn{vertical}] if the segment between the two turns is horizontal [vertical].
Biedl et al.~\cite{biedl2013morphing}
were the first to show that a %
horizontal linear morph can be
used to 
eliminate one 
horizontal 
zig-zag in an orthogonal point drawing. 
Figure~\ref{fig:zigzag}
shows the idea as applied to an orthogonal box drawing.
Van Goethem et al.~\cite{van2022optimal}
observed
that all horizontal zig-zags in a drawing $P$ 
can be eliminated simultaneously  with a single 
horizontal linear morph to a drawing $P'$.    
Their result also applies to orthogonal box drawings, but we must improve the grid size and runtime, see below.

\begin{figure}[ht]
\hspace*{\fill}
\includegraphics[page=1,scale=0.6]{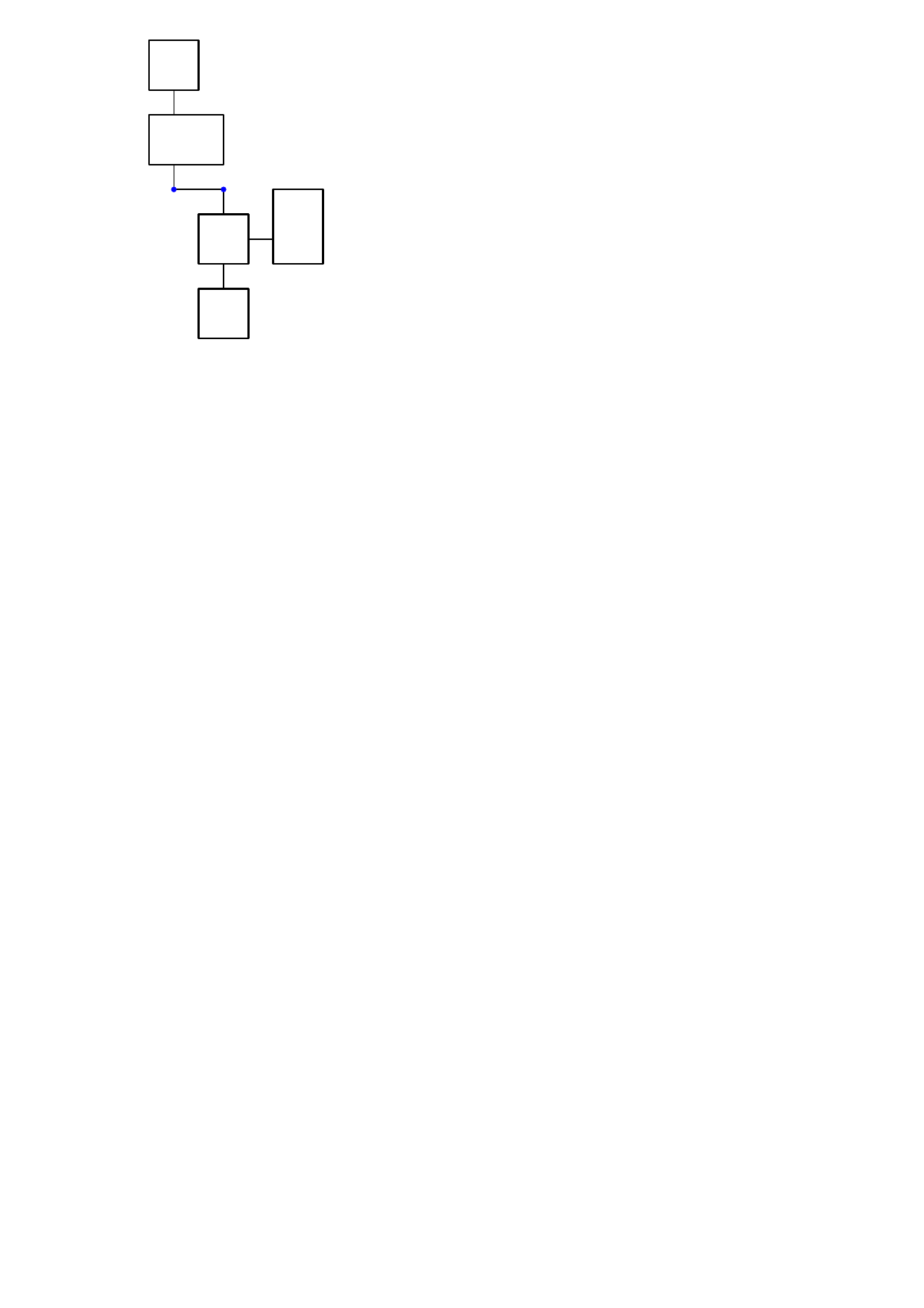}
\hspace*{\fill}
\includegraphics[page=3,scale=0.6]{figure-zig-zag-elimination-example}
\hspace*{\fill}
\includegraphics[page=4,scale=0.6]{figure-zig-zag-elimination-example}
\hspace*{\fill}
\caption{An orthogonal box drawing with a horizontal zig-zag, and
the zig-zag elimination that converts its segment into two coinciding
degenerate bends.}
\label{fig:zigzag}
\end{figure}

One could determine $P'$ by pretending to perform $O(n)$ individual zig-zag eliminations in $P$, but this would be too slow 
and result in a large grid.
Instead we compute $P'$
via \emph{compaction
techniques}.
We %
encode the relative $x$-order of all maximal vertical segments of a drawing $P$ in an auxiliary directed acyclic graph $G_T(P)$ 
of size at most $f(P)$. 
Then we modify $G_T(P)$ 
to capture
that the segments of horizontal zig-zags must vanish, and determine the target-drawing $P'$ simply by computing longest
paths in $G_T(P)$ to get the $x$-coordinates of defining 
points ($y$-coordinates remain unchanged).     
Graph $G_T(P)$ can be computed in $O(n\log n)$ time using a so-called
trapezoidal map; the run-time can be reduced to $O(n)$ when $P$
represents a connected graph using Chazelle's  
triangulation results
\cite{chazelle1991triangulating}.   
The computed $x$-coordinates are at most 
the number of defining points of $P'$, not counting degenerate bends (which we are about to eliminate).
More details 
of this morph to eliminate all horizontal 
zig-zags
are given in 
Lemma~\ref{lemma:horizontal-zigzag-elimination} 
in Appendix~\ref{appendix:Phase-Ib}.

By 
repeatedly eliminating horizontal zig-zags and then eliminating vertical zig-zags
we can eliminate all zig-zags, which yields 
the main result of this section:

\begin{lemma}
\label{lemma:zig-zag-elimination}
Let $G$ be a planar graph with $n$ vertices.  Let $P$ be a planar orthogonal box drawing of $G$ on an  $O(n) \times O(n)$ grid with $O(1)$  bends per edge.
Then there is a planarity-preserving linear morph sequence of length $O(1)$ from $P$ to a zig-zag-free orthogonal box drawing $P'$ with the same port alignment and edge spiralities as $P$ such that 
$P'$ is drawn on an $f(P') \times f(P')$ (hence $O(n) \times O(n)$) grid.
Furthermore, 
during the linear morph sequence, 
the number of bends never increases, 
and the width and height of the grid may increase but never beyond $f(P)$.
Finally, 
the linear morph sequence can be computed in $O(n)$ time.
\end{lemma}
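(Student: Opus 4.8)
The plan is to prove the lemma by iterating the single-direction zig-zag elimination (the construction behind Lemma~\ref{lemma:horizontal-zigzag-elimination}, applied to horizontal and then to vertical zig-zags) a constant number of times, and then tracking the four quantities — grid size, number of bends, number of linear morphs, and run-time — through the iteration.

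First I would recall the key property of the horizontal zig-zag elimination step: a single horizontal morph takes $P$ to a drawing $P'$ in which (i) every horizontal zig-zag of $P$ has had its horizontal segment collapsed to a pair of coinciding degenerate bends, (ii) the $y$-coordinates of all defining points are unchanged, (iii) the $x$-coordinates are longest-path values in the DAG $G_T(P)$ and hence are bounded by the number of non-degenerate defining points of $P'$, and (iv) port alignment and all edge spiralities are preserved (collapsing a zig-zag removes a left turn and a right turn from the same edge, so spirality is unchanged, and it does not move any port relative to its box side). After the morph, a degenerate-bend-deletion step removes the coinciding bends, so the bend count strictly drops for every edge that had a horizontal zig-zag. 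The symmetric statement holds for vertical zig-zags via a vertical morph. Crucially, a horizontal morph only moves points horizontally, so it cannot create a new vertical zig-zag on an edge; it can, however, create a new horizontal zig-zag on an edge by aligning segments (for an orthogonal drawing with $O(1)$ bends per edge the subdivision into maximal segments is bounded, so this can happen only $O(1)$ times per edge). Hence alternately eliminating all horizontal zig-zags and then all vertical zig-zags, the total number of zig-zags on each edge is nonincreasing and the process terminates after $O(1)$ rounds with a zig-zag-free drawing $P'$. Port alignment and spiralities are preserved throughout since each individual step preserves them.

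Next I would account for the parameters. Each round is one horizontal (or vertical) morph plus one degenerate-bend-deletion step, i.e.\ $O(1)$ linear morphs, and there are $O(1)$ rounds, giving a linear morph sequence of length $O(1)$ overall. The number of bends is nonincreasing at every step — interpolation steps keep it fixed and deletion steps only remove degenerate bends — so it never increases, and since $P$ has $O(1)$ bends per edge so does every intermediate drawing and so does $P'$, and $f$ of every intermediate drawing is $O(n)$. For the grid size: in each interpolation step the $x$-coordinates (or $y$-coordinates) are reset to longest-path values in $G_T$, which are bounded by the number of non-degenerate defining points of the post-deletion drawing, which is at most $f(P)$ (bends only get deleted, never added, so $f$ never exceeds its initial value $f(P)$); the orthogonal coordinate in the other direction is unchanged and was already at most $f(P)$ by the same monotonicity. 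Hence the width and height never exceed $f(P)$ during the sequence, and the final drawing $P'$ sits on an $f(P')\times f(P')$ grid because its coordinates in both directions are the longest-path values from the last horizontal and last vertical steps. For run-time: building $G_T$ for a connected planar graph takes $O(n)$ time using Chazelle's linear-time triangulation, and longest paths in a DAG of size $O(n)$ take $O(n)$ time; with $O(1)$ rounds the total is $O(n)$.

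The main obstacle I anticipate is the termination argument — showing that alternating horizontal and vertical zig-zag eliminations does not cycle and in fact finishes in $O(1)$ rounds. The subtlety is that collapsing a horizontal zig-zag merges two maximal vertical segments of an edge into one (and vice versa for vertical zig-zags), which could in principle re-create a zig-zag in the orthogonal direction. I would handle this by arguing via a per-edge potential: since each edge has $O(1)$ bends, it has $O(1)$ maximal segments, and each full round of elimination in one direction strictly decreases the number of maximal segments of every edge that had a zig-zag in that direction (two segments merge into one) while never increasing it for any edge; a zig-zag-free edge is exactly one whose direction sequence is monotone, and once an edge reaches that state in both directions it stays there under subsequent steps. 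Therefore the total number of maximal segments, summed over edges, is a nonnegative integer bounded by $O(n)$ that strictly decreases each round until no zig-zags remain — but to get the $O(1)$ bound on the number of rounds (rather than $O(n)$) I would instead observe that an edge with $b = O(1)$ bends can have at most $b$ zig-zags, and each round of horizontal-then-vertical elimination reduces the number of zig-zags on that edge unless it is already zig-zag-free, so after $O(1)$ rounds every edge is zig-zag-free. The remaining routine work is checking that each elementary step is planarity-preserving, which follows from Lemma~\ref{lemma:horizontal-morph} together with the observation that collapsing zig-zags only merges collinear segments and hence never changes, for any horizontal line, the sequence of features it crosses.
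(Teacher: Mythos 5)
Your proposal is correct and follows essentially the same route as the paper: alternate rounds of simultaneous horizontal and vertical zig-zag elimination (each a single compaction-based unidirectional morph followed by degenerate-bend removal), with termination in $O(1)$ rounds because bends never increase and every round strips at least one zig-zag (two bends) from every edge that still has one, and with the grid bounds read off from the longest-path coordinates. Two small remarks. First, your claim that a horizontal elimination ``cannot create a new vertical zig-zag \dots\ it can, however, create a new horizontal zig-zag'' is backwards: when the degenerate bends of a collapsed horizontal zig-zag are removed, the two adjacent vertical segments merge, and the turns at the two ends of the merged vertical segment can now be opposite even if neither constituent segment was a vertical zig-zag before --- this is exactly the phenomenon in Figure~\ref{fig:zig-zag-elimination-creates-zig-zag-example} (stated there for the vertical-to-horizontal direction); a simultaneous horizontal elimination does not create new horizontal zig-zags, since the directions of the surviving horizontal segments and of their (possibly merged) vertical neighbours are unchanged. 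Your alternating scheme handles both directions regardless, so this does not damage the argument. Second, to conclude that $P'$ lies on an $f(P')\times f(P')$ grid in \emph{both} directions you need the loop to end with a round in which no bends are removed (as the paper's stopping criterion arranges); otherwise the width inherited from the last horizontal step is only bounded by the defining-point count at that earlier moment, which could exceed $f(P')$ if the final vertical step still deleted bends.
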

\begin{proof}
Repeat the following four steps (a ``round'') until no bends have been removed for an entire round:
\begin{enumerate}
\item eliminate all horizontal zig-zags with one horizontal morph as described above,
\item remove degenerate bends,
\item eliminate all vertical zig-zags with one vertical morph,
\item remove degenerate bends.
\end{enumerate}

\begin{figure}
\hspace*{\fill}
    \begin{subfigure}[t]{0.45\textwidth}
        \centering
        \includegraphics[page=1,scale=0.6]{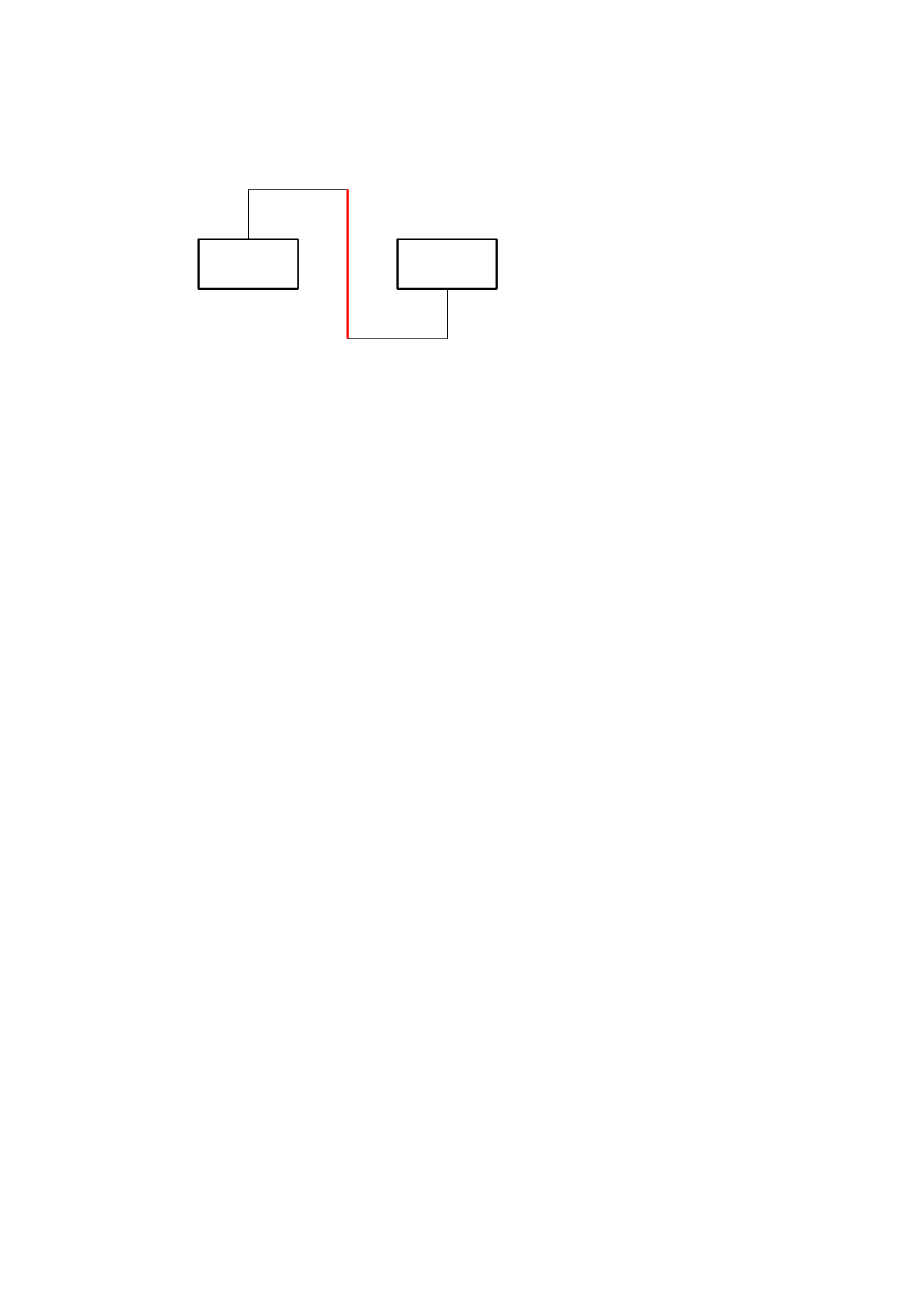}
        \caption{A drawing $P$ with a vertical zig-zag and no horizontal zig-zags.}
    \end{subfigure}
\hspace*{\fill}
    \begin{subfigure}[t]{0.45\textwidth}
        \centering
        \includegraphics[page=2,scale=0.6]{figure-zig-zag-elimination-creates-zig-zag-example}
        \caption{A drawing $P'$ with a horizontal zig-zag resulting from a zig-zag elimination on $P$.}
    \end{subfigure}
\hspace*{\fill}
    \caption{An example of a vertical zig-zag elimination that results a new horizontal zig-zag.}
    \label{fig:zig-zag-elimination-creates-zig-zag-example}
\end{figure}

Note that we need 
multiple rounds
in general because eliminating vertical zig-zags may create horizontal
\changed{ones, and vice versa (see \cref{fig:zig-zag-elimination-creates-zig-zag-example}).}
Except for the last round, 
each round eliminates at least one zig-zag---and thus at least two bends---from every edge that has zig-zags.  No bends are ever added.  Thus the number 
of rounds (hence also the number of morphs) is $O(1)$.   Each drawing in the linear morph sequence can be computed in $O(n)$ time, so the run-time to compute all of them is also in $O(n)$.

Each morph   preserves port alignment and edge spiralities.   
Steps 1 and 2 do not change the grid height and result in a drawing of width at most the (current) number of defining points, which is at most the initial number $f(P)$.
Symmetrically, steps 3 and 4 do not change the grid width and result in a drawing of height at most the current number of defining points, which is at most $f(P)$.
At the end, the final drawing $P'$ lies on an $f(P') \times f(P')$ grid since the last round did not decrease the number of defining points.
\end{proof}

\subsection{Phase Ic: Adjusting Spirality}
\label{sec:Phase-Ic}
\label{sec:spirality}

In Phase Ic we morph $P$ so that the spirality of each edge matches its spirality in $Q$.
The basic operation we use is a ``twist'', a morph that essentially turns all the ports
around a vertex box: in a clockwise twist the ports on the top of the box become ports on the right, while ports on the right become ports on the bottom, and so on around the box. 
(Note that doing this 
by turning edges around corners as in Section~\ref{sec:Phase-Ia} would require a linear number of morphs.)
We describe twists in Section~\ref{sec:performing-twists}.

How do twists adjust spirality? We will see %
that a clockwise twist of a vertex box adds two left turns and one right turn to all edges leaving the vertex box. 
This guides the calculation of how many twists to apply at each vertex box.  
We also need a multiple of four twists for each vertex box
in order to maintain port alignment. 
The question of how many twists to apply and what order to apply them in is addressed in Section~\ref{sec:planning-twists}. 
We also keep the drawing zig-zag free, and on a small grid.

\subsubsection{Performing twists}
\label{sec:performing-twists}

In this section we focus on how to perform one twist simultaneously on a set of vertex boxes. 
We need a preliminary set-up step that makes all the vertex boxes square and clears some space around them.  More precisely, the \defn{$k$-proximal region} of the vertex box of vertex $v$ 
consists of all points within $L_\infty$ distance $k \cdot \deg(v)$ from the box, where $\deg(v)$ is the degree of $v$ (thus the $k$-proximal region is a rectangle formed by enlarged the vertex box by $k \cdot \deg(v)$ on each side). 
The vertex boxes of a drawing are \defn{$k$-spaced} if their $k$-proximal regions are disjoint and the only edge segments intersecting a $k$-proximal region are those incident to the vertex box.
With a constant number of linear morphs we can 
make all vertex boxes square and $2$-spaced 
(as shown in the first pane of \cref{fig:twist-rotation-construction}), while keeping the drawing port-aligned, and preserving the spirality of each edge.  
See Lemma~\ref{lemma:square-spaced} in Appendix~\ref{appendix:performing-twists}.

\begin{figure}
    \centering
    \begin{subfigure}[t]{0.49\textwidth}
        \centering
\includegraphics[page=1,scale=0.24]{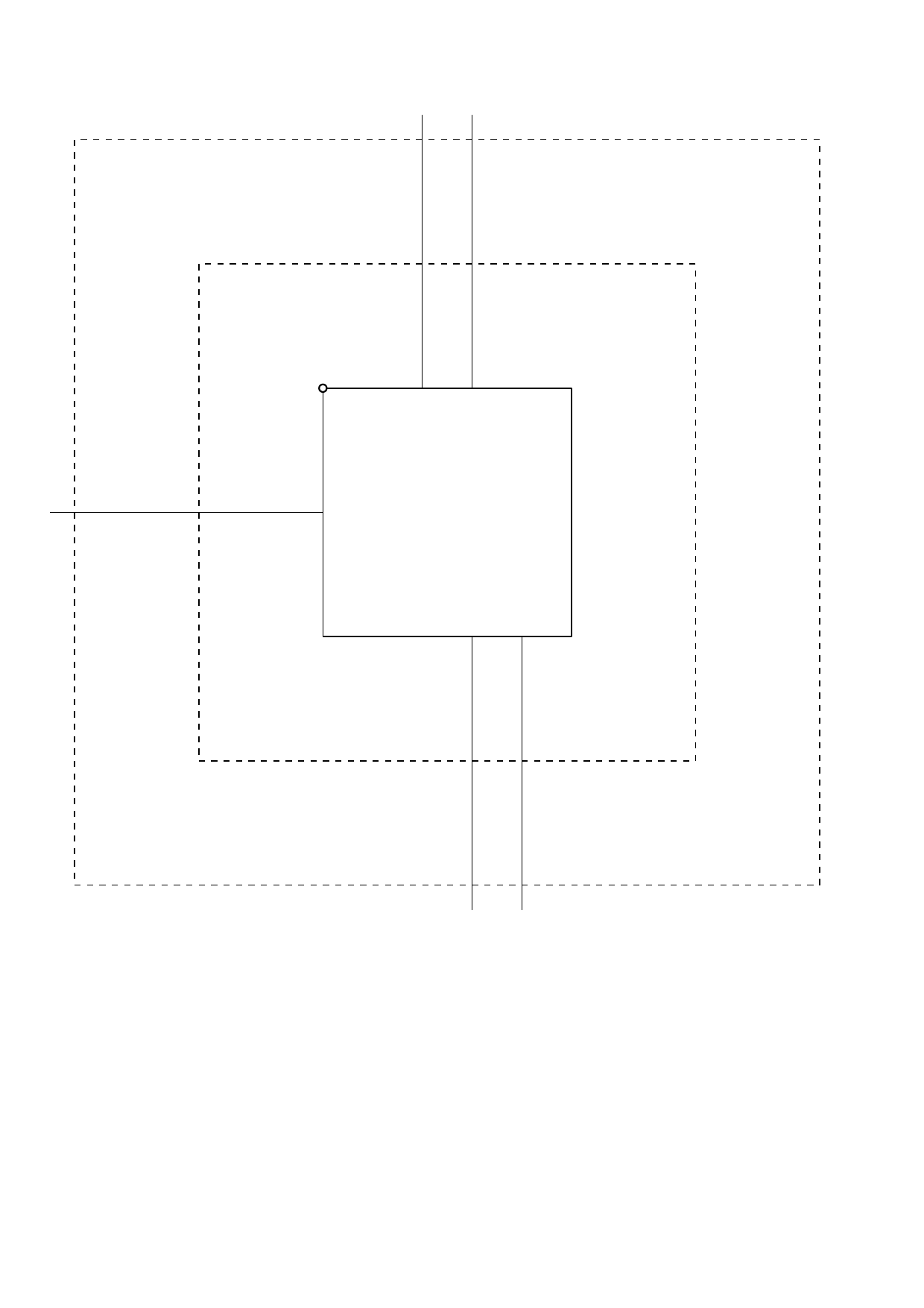}
    \end{subfigure}
    \begin{subfigure}[t]{0.49\textwidth}
        \centering
\includegraphics[page=2,scale=0.24]{figure-twist-construction}
    \end{subfigure}\\
    \vspace{2em}
    \begin{subfigure}[t]{0.49\textwidth}
        \centering
\includegraphics[page=3,scale=0.24]{figure-twist-construction}
    \end{subfigure}
    \begin{subfigure}[t]{0.49\textwidth}
        \centering
\includegraphics[page=4,scale=0.24]{figure-twist-construction}
    \end{subfigure}
    \caption{
    Performing a clockwise twist.
    Bends are drawn as small red 
    circles.
    The top left box corner (marked with a hollow circle) moves to the top right.
}
    \label{fig:twist-rotation-construction}
\end{figure}

We next describe how to take an orthogonal box drawing that has square 2-spaced vertex boxes, and  perform one twist of one vertex box. 
The twist will only alter the drawing inside the $2$-proximal region of the vertex box, so we can twist multiple vertices simultaneously, see Lemma~\ref{lemma:simultaneous-twist} below.

A  \defn{clockwise twist} operation is defined as follows, see \cref{fig:twist-rotation-construction}.
(Counterclockwise twists are defined analogously.)
Let $v$ be a vertex of degree $\deg(v)$ and let $S$ be its square vertex box.
In the first step of the morph we add 3 degenerate bends in each edge incident to $S$.  The second step is the actual linear morph.
We give exact coordinates for the bends added to the $i^{\rm th}$ edge $e_i$ (in clockwise order)
that is attached to the top side of $S$, say at port $p_i$.  Other coordinates are analogous.
Bend $b_{i,1}$ is placed at $p_i$.  Bends $b_{i,2}$ and $b_{i,3}$ are placed on the edge $e_i$, at distance $\deg(v) + i$ above~$p_i$.

During the linear morph, each corner of $S$ moves to the next clockwise corner.  Each port on $S$ is a linear combination of two box corners;  
this is maintained
throughout the morph. 
It remains to describe the positions of the bends.
As before, we give exact coordinates only for edge $e_i$ \changed{attached to the top side}.  
Let $p'_i$ be the new position for port $p_i$.  
Bend $b_{i,3}$ does not move, so $b_{i,3}'=b_{i,3}$.  Bend $b_{i,1}$ moves to a point $b'_{i,1}$ that is $i$ units to the right of $p'_i$. 
Bend $b_{i,2}$ moves horizontally to a point $b'_{i,2}$ directly above $b'_{i,1}$. 

\begin{lemma}
\label{lemma:planar-twist}
The clockwise twist %
is structure-preserving and planarity-preserving. 
\end{lemma}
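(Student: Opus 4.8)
The plan is to verify the two conditions of a structure-preserving linear morph and then argue planarity directly, exploiting the fact that the twist is supported entirely inside the $2$-proximal regions, which are pairwise disjoint and contain no foreign edge segments. First I would check \emph{structure-preservation}. Condition (1) asks that the four box corners always span a rectangle of positive area and that ports stay attached. By construction each corner of $S$ moves to its clockwise neighbour, so at time $t$ the four corners are the image of the original square under the linear map $R_t$ that interpolates between the identity and a $90^\circ$ rotation (composed with the center-fixing translation). One computes $R_t$ explicitly; its determinant is $(1-t)^2 + t^2 > 0$ for all $t\in[0,1]$, so the corners always span a nondegenerate parallelogram, and since opposite sides remain equal and the two diagonals remain equal in length (both are $R_t$-images of the square's equal diagonals), it is in fact a rectangle. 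Each port is a fixed affine combination of two adjacent corners, and affine combinations are preserved under a linear morph, so ports stay on their rectangle. Condition (2) asks that every edge segment stays orthogonal and keeps its direction, and that zero-length segments stay zero-length. Here I would walk along the poly-line of edge $e_i$ from $b_{i,3}$ inward: the segment from $b_{i,3}=b'_{i,3}$ to $b_{i,2}$ is vertical at $t=0$ and vertical at $t=1$ (since $b'_{i,2}$ lies directly above $b'_{i,1}$, and $b'_{i,1}$ is placed so that $b_{i,2}$ moves \emph{horizontally}), hence vertical throughout and pointing the same way; the segment $b_{i,2}b_{i,1}$ is horizontal at both ends (both endpoints move horizontally by design — $b_{i,2}$ horizontally, and $b_{i,1}$ to a point $i$ units right of $p'_i$ which is a horizontal displacement matched to the port's motion); and the segment $b_{i,1}p_i$ together with the first ``real'' segment of $e_i$ needs the analogous check. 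The key arithmetic to get right is that the offsets ``$\deg(v)+i$'' and ``$i$'' were chosen precisely so that these collinearity/orthogonality relations hold simultaneously for all $i$ on a side and across all four sides; I would state the coordinates for the top side in full, note the other three sides follow by symmetry, and confirm each segment's direction is constant.

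The main content is \emph{planarity}. Since the vertex boxes are $2$-spaced and the twist moves nothing outside the $2$-proximal region of $S$ (the bends $b_{i,3}$ are fixed, and they sit at distance $\deg(v)+i \le 2\deg(v)$ from $S$, hence inside the region; everything strictly closer moves but stays inside), it suffices to show no self-intersection is created \emph{within} that single proximal region — intersections with the rest of the drawing are impossible because the region is edge-disjoint from it and the region itself does not move. Inside the region, after adding the bends the picture for each side is a ``comb'' of $\deg$-on-that-side parallel stubs of staggered lengths $\deg(v)+i$; I would argue that at every time $t$ the stubs emanating from one side remain nested/parallel and do not cross each other (their fixed outer bends $b_{i,3}$ are at distinct heights $\deg(v)+i$, and the moving portions are monotone in $i$), and that stubs from different sides of $S$ stay confined to disjoint angular sectors of the proximal region throughout the rotation, so no two edges from different sides meet. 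I would also check that no edge re-enters the (shrinking-then-growing, always-positive-area, rotating) box $S$ except at its port — this follows since each edge leaves its port going ``outward'' relative to the current box orientation, which is exactly what the direction-preservation in condition (2) guarantees. A clean way to package all of this is to invoke Lemma~\ref{lemma:horizontal-morph}-style reasoning piecewise, but the twist is genuinely two-dimensional, so I expect instead to give a short direct geometric argument: parametrize the region by the rotating frame $R_t$, observe that in that frame the stubs are \emph{static} orthogonal combs, and conclude that the only possible collisions are between the combs, which the staggered lengths rule out.

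The step I expect to be the main obstacle is the planarity verification at intermediate times $t\in(0,1)$: at $t=0$ and $t=1$ the drawing is manifestly planar (it is a box drawing with extra degenerate bends), but in between the vertex rectangle is tilted and the bends $b_{i,1},b_{i,2}$ are tracing oblique paths, so one must genuinely rule out an edge sweeping across a box corner or two combs from adjacent sides grazing each other near the corner of $S$. I would handle this by making the $2$-spacing quantitative — the proximal region has ``radius'' $2\deg(v)$ while all moving bends live within radius $2\deg(v)$ of $S$ and the rotating rectangle has circumradius only $O(\mathrm{side}(S))$ — and by choosing, if the given coordinates do not already suffice, the box side length to be small enough relative to $\deg(v)$ that the four combs provably stay in separated cones for all $t$. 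If the simultaneous version (Lemma~\ref{lemma:simultaneous-twist}) is to follow immediately, I would make sure every estimate above is local to one proximal region, so that disjointness of the regions yields the simultaneous statement with no extra work.
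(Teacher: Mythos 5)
Your proposal follows essentially the same route as the paper: restrict attention to the $2$-proximal region of $S$ (disjointness of these regions gives locality and hence the simultaneous version for free), verify structure-preservation segment by segment along each subdivided edge, handle same-side edges by showing their orderings are maintained, and handle edges on different sides by confining them to disjoint subregions of the proximal region throughout the morph. Two points are worth flagging. First, a slip in the structure-preservation check: you have the roles of the segments reversed --- $b_{i,1}b_{i,2}$ is the \emph{vertical} segment (since $b'_{i,2}$ lies directly above $b'_{i,1}$ and the $x$-coordinates agree at $t=0$ and $t=1$, hence at all $t$ by linearity), while $b_{i,2}b_{i,3}$ is the \emph{horizontal} one (degenerate at $t=0$, growing as $b_{i,2}$ moves horizontally away from the fixed $b_{i,3}$); the argument goes through once the labels are corrected. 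Second, your proposed ``clean packaging'' --- passing to the rotating frame $R_t$ and viewing the combs as static --- does not work as stated: the interpolation $(1-t)I+tR_{90}$ is not a rigid rotation (the box shrinks for $t\in(0,1)$), and more importantly the outer bends $b_{i,3}$ are fixed in the \emph{global} frame, so the stubs are not static in any single frame. The paper resolves exactly the difficulty you identify (combs from adjacent sides grazing near a corner) by building each side's subregion out of two pieces: a static outer rectangle lying between the $1$- and $2$-proximal boundaries, which houses the fixed bends $b_{i,3}$, glued to a moving inner piece of the $1$-proximal region bounded by rays emanating from the two rotating corners of that side. Your fallback plan of a direct geometric argument with separated cones is the right instinct, but you would need this kind of hybrid static/moving region to make it rigorous.
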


\begin{proof}[Proof outline.]
We need only focus on the 2-proximal region of $S$.  
We first show \changed{that} the twist is structure-preserving.  
Because $S$ is square, it remains square throughout the morph.  Also, each port stays attached to its side of $S$.  The edge segment $p_i b_{i,1}$ remains horizontal, starting at length zero and ending at length $i$.  Segment $b_{i,1}b_{i,2}$ remains vertical of positive length, and segment $b_{i,2} b_{i,3}$ remains horizontal, starting at length zero and ending at positive length.

We next show \changed{that} the twist is planarity-preserving. 
It follows from the above that edges do not intersect the vertex box during the morph.  
It remains to show that no two edges intersect. 
For edges incident to the same side (say the top side) 
of $S$, observe that:  the vertical segments $b_{i,1}b_{i,2}$ maintain their left-to-right ordering; the horizontal segments $b_{i,2}b_{i,3}$ remain at constant, distinct $y$-coordinates, and the horizontal segments $p_i b_{i,1}$ maintain their top-to-bottom ordering.
For edges incident to different sides of $S$, we define four disjoint subregions of the 2-proximal region of $S$ as shown in
\cref{fig:twist-rotation-construction-region},
and argue that edges remain in their subregion as we morph  
the edges and the subregions (details \changed{are} 
in Appendix~\ref{appendix:performing-twists}).
\end{proof}

\begin{figure}
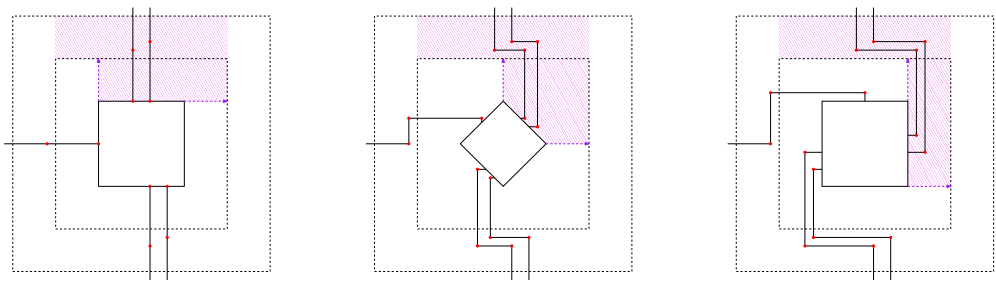

    \centering
    \begin{subfigure}[t]{0.32\textwidth}
        \centering
        \includegraphics[page=17,scale=0.2]{figure-twist-construction}
    \end{subfigure}
    \hfill
    \begin{subfigure}[t]{0.32\textwidth}
        \centering
        \includegraphics[page=18,scale=0.2]{figure-twist-construction}
    \end{subfigure}
    \hfill
    \begin{subfigure}[t]{0.32\textwidth}
        \centering
        \includegraphics[page=19,scale=0.2]{figure-twist-construction}
    \end{subfigure}
    \caption{The subregion of the $2$-proximal region of $S$ corresponding to the top side of an initial box, throughout the morph.}
    \label{fig:twist-rotation-construction-region}
\end{figure}

Because a twist does not change the drawing outside the 2-proximal region of a vertex box, we can indeed (as promised before) twist multiple vertices simultaneously,
which immediately gives the following:

\begin{lemma}%
\label{lemma:simultaneous-twist}
Let $G$ be a connected planar graph with $n$ vertices.
Let $P$ be a planar orthogonal box drawing of $G$ drawn on an 
$O(n) \times O(n)$ grid 
with O(1) bends per edge. 
Let $t : V (G) \rightarrow \{-1, 0, 1\}$ specify the twists to be performed at vertices, where $-1$ indicates a clockwise twist and $1$ indicates a counterclockwise twist.
Then there is a planarity-preserving sequence of length $O(1)$ from $P$ to an orthogonal box drawing $P'$ that effects the twists of $t$, 
where 
each explicit intermediate drawing is drawn on an 
$O(n) \times O(n)$ grid with $O(1)$ bends per edge.
Moreover, 
the linear morph sequence can be computed in $O(n^2)$ time.
\end{lemma}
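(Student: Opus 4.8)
The plan is to realize all the requested twists with $O(1)$ linear morphs by first putting the drawing into a normal form and then executing every prescribed twist simultaneously. The two tools are \cref{lemma:square-spaced}, which makes all vertex boxes square and $2$-spaced, and \cref{lemma:planar-twist}, which handles a single twist of a single square box entirely inside that box's $2$-proximal region. Because twists of distinct vertices then take place in pairwise disjoint regions, I can superpose them into one morph.

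First I would apply \cref{lemma:square-spaced} to morph $P$, via $O(1)$ planarity-preserving linear morphs, to a drawing $P_0$ whose vertex boxes are square and $2$-spaced, with the same port alignment and edge spiralities, still on an $O(n)\times O(n)$ grid with $O(1)$ bends per edge. I would take the (integer) square side lengths and port offsets to be $\Theta(\deg v)$, so that the twist construction of \cref{sec:performing-twists} produces only integer coordinates in the (axis-aligned) pre- and post-twist drawings: a $90^\circ$ rotation of a square with integer corners and integer port offsets again has integer corners and ports, and the twist bends sit at integer offsets $\deg(v)+i\le 2\deg(v)$ from their port, hence inside the $2$-proximal region. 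The grid bound survives because $\sum_v \deg(v)=O(n)$ (since $G$ is planar it has $O(n)$ edges), so the total width and height occupied by all the boxes together with their $2$-proximal regions is still $O(n)$.

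Next I would perform one \emph{combined twist}: for every $v$ with $t(v)\neq 0$, insert the three degenerate bends per incident edge near each twisted port and run the linear morph of \cref{sec:performing-twists} that rotates $v$'s box by $90^\circ$ (clockwise if $t(v)=-1$, counterclockwise if $t(v)=1$). Since this morph changes only the interior of $v$'s $2$-proximal region, and those regions are pairwise disjoint and (by $2$-spacedness) meet only edges incident to their own box, the combined morph is the ``disjoint union'' of the individual twist morphs over an otherwise stationary drawing; an edge between two twisted vertices just follows its own twist near each endpoint and is fixed in between. So it is structure- and planarity-preserving by applying \cref{lemma:planar-twist} region by region, and the resulting $P'$ realizes all the twists of $t$ (twists at distinct vertices commute). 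Each edge gains at most three bends per endpoint, so the bend count stays $O(1)$; every new defining point is an integer point inside an already-reserved proximal region, so $P'$ and all intermediate drawings stay on an $O(n)\times O(n)$ grid; and the whole sequence has length $O(1)$ ($O(1)$ morphs for \cref{lemma:square-spaced}, plus a constant number to add bends, twist, and possibly delete degenerate bends). For the running time, the combined twist handles each twisted $v$ in $O(\deg v)$ time, hence $O(n)$ overall, and there are only $O(1)$ explicit intermediate drawings, each of size $O(n)$; together with the set-up step of \cref{lemma:square-spaced} (the dominant term) this is within $O(n^2)$ time, and it is in that set-up step that the connectedness of $G$ plays its role, much as in \cref{sec:zig-zag-elimination}.

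I expect the hard part to be bookkeeping rather than new ideas. The chief point to nail down is that reserving a $\Theta(\deg v)\times\Theta(\deg v)$ proximal region around \emph{every} vertex at once and then filling each with twist bends still fits on an $O(n)\times O(n)$ grid — this is exactly where $\sum_v\deg(v)=O(n)$ is essential and where one relies on \cref{lemma:square-spaced} to enlarge and space the boxes without inflating the grid. The secondary point is the (routine) verification that superposing the per-vertex twist morphs inherits structure- and planarity-preservation from the single-vertex statement \cref{lemma:planar-twist}, for which the disjointness of the $2$-proximal regions is the crux.
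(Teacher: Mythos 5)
Your proposal is correct and matches the paper's argument, which likewise derives the lemma immediately from Lemma~\ref{lemma:square-spaced} (square, $2$-spaced boxes) together with Lemma~\ref{lemma:planar-twist} and the observation that each twist alters the drawing only inside its own $2$-proximal region, so all twists can be superposed into one simultaneous morph. The grid-size and run-time bookkeeping you add is consistent with (indeed more detailed than) what the paper states.
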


\subsubsection{Planning twists}
\label{sec:planning-twists}

There are two aspects to planning twists in $P$: (1) decide how many twists to perform at each vertex box; and (2) schedule the twists into rounds where each round twists each vertex box at most once, so that one round can be performed simultaneously as in Lemma~\ref{lemma:simultaneous-twist}. 
For (1) we solve a set of equations, the same as in Biedl et al.~\cite{biedl2013morphing}. Step (2) is new, as are the improved bounds on \changed{the}
number of linear morphs and \changed{the} grid size that complete Phase Ic.

\subparagraph{Numbers of twists.} We need some notation. For edge $e=uv$ and any orthogonal box drawing $D$, let $s_D(u,v)$ denote the spirality of the edge in $D$.
Our goal is to morph $P$ (while leaving $Q$ unchanged) until for every edge $uv$ its \defn{spirality difference}, $\Delta s_P(u,v) := s_P(u,v) - s_Q(u,v)$, becomes zero.
The final $P$ must also be port-aligned with $Q$ (equivalently, port-aligned with the initial $P$), %
although this need not be true of the explicit intermediate drawings.
These two conditions together constitute the requirements for our twists.

Observe that a clockwise twist 
(as defined in the previous section) at $u$'s vertex box adds two left turns and one right turn to every edge leaving $u$, for a net change of $+1$ in the spirality of edges $uv$ leaving $u$, and a net change of $-1$ for edges $vu$ entering $u$. These are reversed for a counterclockwise twist. 
For example, if $\Delta s_P(u,v)$ is positive, we should apply counterclockwise twists to $u$ and/or clockwise twists to $v$.
For vertex $v$, let $t(v)$ be the number of twists to be performed on $v$'s vertex box in $P$, where a positive value indicates counterclockwise twists and a negative value indicates clockwise twists. 
Necessary and sufficient conditions to meet the requirements for our twists are:
\begin{alignat}{2}
&\text{for each edge $uv$} &&\quad \Delta s_{P}(u,v) = t(u) - t(v)
\label{eqn:edge-twist-condition}\\
&\text{for each vertex $v$} && \quad t(v) \equiv 0 \bmod 4
\label{eqn:vertex-twist-condition}
\end{alignat}

\begin{lemma} A solution to the above equations with $|t(v)| \in O(n)$ can be found in $O(n)$ time.
\end{lemma}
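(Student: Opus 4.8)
The plan is to solve the system by propagation along a spanning tree, after one structural observation that makes the congruence constraints~\eqref{eqn:vertex-twist-condition} come for free. Recall that at this point $P$ and $Q$ are port-aligned compatible planar orthogonal box drawings with $O(1)$ bends per edge; orient each edge of $G$ arbitrarily, so that it suffices to produce an integer-valued function $t$ on $V(G)$ with $\Delta s_{P}(u,v)=t(u)-t(v)$ on every oriented edge and $t(v)\equiv 0\pmod 4$ at every vertex.

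First I would show that $\Delta s_P(u,v)$ is a multiple of $4$ for every edge. Walking along a directed orthogonal poly-line (ignoring degenerate bends), the heading rotates by $+90^\circ$ at each left turn and $-90^\circ$ at each right turn, so the final heading minus the initial heading equals $90^\circ$ times the spirality, modulo $360^\circ$. Hence, labelling the four axis-parallel directions cyclically by $0,1,2,3$, the spirality of a directed edge $uv$ is congruent mod~$4$ to the direction of its last segment (entering $v$) minus the direction of its first segment (leaving $u$); each of these two directions depends only on which side of the relevant vertex box the edge attaches to. Since $P$ and $Q$ are port-aligned, the attaching side is the same in both drawings, so $s_P(u,v)\equiv s_Q(u,v)\pmod 4$ and therefore $\Delta s_P(u,v)\equiv 0\pmod 4$.

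Next I would take a spanning tree $T$ of $G$ (which exists since $G$ is connected), root it at an arbitrary vertex $r$, set $t(r):=0$, and traverse $T$ in BFS or DFS order, giving each non-root vertex the unique value of $t$ that satisfies~\eqref{eqn:edge-twist-condition} across the tree edge to its already-processed parent. This $t$ satisfies~\eqref{eqn:edge-twist-condition} on all tree edges by construction, and on the remaining edges because the system~\eqref{eqn:edge-twist-condition} is consistent for port-aligned compatible orthogonal box drawings---equivalently, $\sum_{uv\in C}\Delta s_P(u,v)=0$ for every cycle $C$---which is the solvability fact inherited from Biedl et al.~\cite{biedl2013morphing}. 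Because $t(r)=0$ and, by the previous step, the $\Delta s_P$-value of every tree edge is a multiple of $4$, an immediate induction down $T$ shows $t(v)\equiv 0\pmod 4$ for every $v$, so~\eqref{eqn:vertex-twist-condition} holds automatically.

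Finally I would bound the output and the running time. Since each edge has $O(1)$ bends, $|s_P(u,v)|$ and $|s_Q(u,v)|$ are each at most the number of turns along the edge, which is $O(1)$, so $|\Delta s_P(u,v)|=O(1)$. As $t(v)$ is the signed sum of the $\Delta s_P$-values along the tree path from $r$ to $v$, a path of at most $n-1$ edges, we get $|t(v)|=O(n)$. For the running time, $G$ is planar, so $|E(G)|=O(n)$; all edge spiralities can be read off in $O(n)$ total time by walking each poly-line once, and building $T$ together with the propagation that defines $t$ also takes $O(n)$ time. The only step needing care is the consistency of~\eqref{eqn:edge-twist-condition} (so that the spanning-tree values also satisfy the non-tree equations); for this I would appeal to the analysis of Biedl et al.\ rather than reprove it, and everything else is elementary propagation and counting.
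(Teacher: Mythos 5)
Your proposal follows essentially the same route as the paper: fix a spanning tree, set $t(v_0)=0$, propagate spirality differences along tree paths, and use the fact that $\Delta s_P(u,v)\equiv 0\pmod 4$ (a consequence of port-alignment) to get condition~(\ref{eqn:vertex-twist-condition}) for free; the $O(n)$ bounds then follow because each per-edge spirality difference is $O(1)$. Your heading-rotation argument for the mod-$4$ claim is a nice explicit version of what the paper merely asserts.

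The one place you are too quick is the consistency of~(\ref{eqn:edge-twist-condition}) on non-tree edges, which you dispose of by citing Biedl et al. Their cycle-sum argument is proved for orthogonal \emph{point} drawings, where the closed curve around a face consists only of edge poly-lines meeting at vertices. In the box setting the corresponding closed curve must also traverse portions of the vertex-box boundaries between consecutive edges of the cycle, so the ``every simple orthogonal cycle has spirality $\pm 4$'' bookkeeping picks up extra turns from those boundary arcs. The paper re-proves the ingredient for box drawings (Lemma~\ref{lemma:twist-equations}) precisely to handle this: port-alignment guarantees that the box-boundary portions contribute the same turns in $P$ and $Q$, so they cancel in the difference and the telescoping cycle identity $\sum_{uv\in C}\Delta s_P(u,v)=0$ survives. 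Your proof should include this short verification rather than treating the point-drawing result as directly applicable.
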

\begin{proof}[Proof outline]
The lemma was proved by 
Biedl et al.~\cite{biedl2013morphing} for orthogonal point drawings. 
As a main ingredient,
they prove that it suffices to impose condition (\ref{eqn:vertex-twist-condition}) for a single vertex $v_0$, and to impose condition (\ref{eqn:edge-twist-condition}) for the edges of a spanning tree $T$ rooted at $v_0$.  The idea is that a non-tree edge $e$ creates a cycle with $T$ which corresponds to a face of the drawing. Condition (\ref{eqn:edge-twist-condition}) is then proved for $e$ using the fact that 
a planar orthogonal cycle has 4 more right turns than left turns when traversed clockwise.
We prove that this main ingredient carries over to orthogonal box drawings---the difference is that a face of the point drawing has edges meeting at vertices, whereas a face of a box drawing contains portions of box sides. See Lemma~\ref{lemma:twist-equations} in Appendix~\ref{appendix:planning-twists}. 

Given this 
main ingredient,
we assign $t(v_0) := 0$, and 
assign $t(v)$ to be $t(v_0)$ plus the sum of the spirality differences on edges of the path from $v_0$ to $v$ in $T$. 
This takes $O(n)$ time, and gives $t(v) \in O(n)$ since the spirality difference of each edge is constant.
\end{proof}

\subparagraph{Allocating twists to rounds.} 
We allocate the twists $t(v)$ to rounds $r_i$, where $i$ ranges from $1$ to 
$k
:= \max_v |t(v)|$ 
so that each vertex box is twisted at most once in each round. 
For vertex $v$, we put its twists into rounds 1 through $|t(v)|$.  

\subparagraph{Algorithm for Phase Ic.} Let $P'_0 := P$.  For $i =1, \ldots, 
k$,
perform the simultaneous twists of round $r_i$ on input $P'_{i-1}$ according to Lemma~\ref{lemma:simultaneous-twist} to obtain $P_i$, and then do zig-zag elimination on $P_i$ according to Lemma~\ref{lemma:zig-zag-elimination} to obtain $P'_i$.
Output $P'_k$.

\subparagraph{Analyzing the algorithm.}
Note that the algorithm does $O(n)$ linear morphs.
It remains to bound the  grid size,  the number of bends, and the run-time for Phase Ic.  Since zig-zag elimination is called $O(n)$ times it becomes important to give absolute bounds  to avoid hiding a ``growing constant'' inside the  $O(n)$ bound on grid size.
The main idea is to analyze the number of bends, since that determines the grid size after zig-zag elimination.

\begin{lemma}
There are constants $c$ and $d$, independent of $i$, such that $P_i$ and $P'_i$, $i=1,\ldots, k$ lie on a grid of size $c n \times c n$ with at most $d$ bends per edge.     
\end{lemma}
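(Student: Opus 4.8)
The plan is to track the number of bends per edge through one round of Phase~Ic and show it returns to a bounded value, so that the grid-size bound from Lemma~\ref{lemma:zig-zag-elimination} can be applied uniformly. I would proceed by induction on $i$. The inductive hypothesis is that $P'_{i-1}$ lies on a grid of size at most $cn \times cn$ and has at most $d$ bends per edge, for constants $c,d$ to be chosen. The base case $P'_0 = P$ holds since $P$ is a small drawing. For the inductive step, I first apply Lemma~\ref{lemma:simultaneous-twist} to $P'_{i-1}$: a twist adds a fixed number of bends (three) to each edge incident to a twisted vertex box, so $P_i$ has at most $d + 6$ bends per edge (an edge can be incident to two twisted boxes), and $P_i$ lies on an $O(n) \times O(n)$ grid by that lemma. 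Then I apply zig-zag elimination (Lemma~\ref{lemma:zig-zag-elimination}) to $P_i$ to get $P'_i$. The key point is that zig-zag elimination \emph{never increases} the number of bends, so $P'_i$ also has at most $d+6$ bends per edge; and crucially the resulting drawing lies on an $f(P'_i) \times f(P'_i)$ grid, with the grid never exceeding $f(P_i)$ during the morph.

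The heart of the argument is showing that the bend count does not actually grow with $i$, i.e.\ that $d + 6$ can be folded back down to $d$. For this I would invoke the property, established when numbers of twists are planned, that $|t(v)| \in O(n)$ and that a vertex box is twisted at most once per round: after a full round consisting of a twist followed by zig-zag elimination, the spirality of each edge has changed by the intended net amount, but more importantly the \emph{structure} of each edge is back to being zig-zag-free. A zig-zag-free orthogonal poly-line has all its turns in the same rotational direction, so its number of bends equals $|{\rm spirality}| + O(1)$ (a zig-zag-free path from port to port turns monotonically and needs at most one ``extra'' bend beyond those accounting for the spirality, plus the two incident segments). Since the spirality of each edge in $P'_i$ differs from its spirality in $P$ by at most $\sum$ of contributions of rounds $1,\dots,i$, which is $O(n)$, a naive bound would give $O(n)$ bends per edge rather than $O(1)$. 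So I need the sharper observation: the final spirality target is $s_Q(u,v) = O(1)$ (since $Q$ is small), and each intermediate $P'_i$'s spirality on edge $uv$ lies between $s_P(u,v)$ and $s_Q(u,v)$ — because we only apply twists in the ``correct'' direction dictated by the sign of $\Delta s_P(u,v)$, and we never overshoot. Hence every $P'_i$ has spirality $O(1)$ on each edge, so $O(1)$ bends per edge, so $P'_i$ lies on a grid of size $f(P'_i) = O(n)$, and we can pick $c, d$ to absorb all the hidden constants uniformly in $i$.

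The main obstacle I anticipate is making the ``no overshoot / spirality stays between $s_P$ and $s_Q$'' claim precise and airtight. One must argue that allocating $v$'s twists to rounds $1,\dots,|t(v)|$ all in the same rotational direction (the sign of $t(v)$) means that after round $i$ the partial twist counts $t_i(v) := \mathrm{sign}(t(v)) \cdot \min(i, |t(v)|)$ satisfy $|t_i(v)| \le |t(v)|$, and then that the induced spirality on edge $uv$, namely $s_P(u,v) - (t_i(u) - t_i(v))$, stays in the closed interval with endpoints $s_P(u,v)$ and $s_Q(u,v)$. This requires that $t_i(u) - t_i(v)$ lies between $0$ and $t(u) - t(v) = \Delta s_P(u,v)$ for every $i$, which is a small monotonicity fact about $\min$ of two same-signed or opposite-signed quantities and needs a careful case check (the signs of $t(u)$ and $t(v)$ need not agree). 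Once that is settled, the bend count on each intermediate edge is bounded by $\max(|s_P(u,v)|, |s_Q(u,v)|) + O(1) = O(1)$, and combining with Lemmas~\ref{lemma:simultaneous-twist} and~\ref{lemma:zig-zag-elimination} gives the uniform constants $c$ and $d$. The run-time of $O(n^2)$ for the whole of Phase~Ic then follows since each of the $O(n)$ rounds costs $O(n^2)$ for the twist and $O(n)$ for the zig-zag elimination, but that is a separate bookkeeping step not needed for this lemma.
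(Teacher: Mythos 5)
Your proposal is correct and follows essentially the same route as the paper: the crux in both is the claim that after each round the spirality of every edge stays in the closed interval between its current value and its target in $Q$ (the paper's Claim~\ref{claim:spirality-decrease}, proved by exactly the sign/allocation case analysis you anticipate), which bounds bends of each zig-zag-free $P'_i$ by $\max\{|s_P(u,v)|,|s_Q(u,v)|\}=O(1)$ and hence bounds $f(P'_i)$ and the grid size via Lemma~\ref{lemma:zig-zag-elimination}, with the twist step of Lemma~\ref{lemma:simultaneous-twist} only contributing additive/multiplicative constants.
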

\begin{proof}
We first prove 
a bound on bends for $P'_i$ for $i=0,\dots,k$.  
The drawing $P'_i$ is zig-zag-free so the number of bends on edge $uv$ is the absolute value of its spirality, which is the same as in $P_i$ since zig-zag elimination preserves spirality  
(Lemma~\ref{lemma:zig-zag-elimination}).  
Our method of allocating twists to rounds ensures that the spirality of an edge stays the same except during a constant number of rounds where it improves.  
More formally: 

\begin{claim}[Proof in Appendix~\ref{appendix:planning-twists}]
\label{claim:spirality-decrease}
After each round of simultaneous twists, the  spirality of an 
edge remains the same or gets closer to its spirality in $Q$,
i.e., $s_{P_i}(u,v)$ lies in the closed interval between 
$s_{P_{i-1}}(u,v)$ and $s_Q(u,v)$.
\end{claim}

Thus $|s_{P_i}(u,v)| \le \max \{|s_P(u,v)|, |s_Q(u,v)|\}$
which is bounded by some constant $d'$ %
since edges in $P$ and $Q$ each have a constant number of bends.

We now turn to the grid size.
For $i=0,\dots,k$
drawing $P'_i$ is the result of applying
Lemma~\ref{lemma:zig-zag-elimination} and hence 
its grid size is $f(P'_i) \times f(P'_i)$, where $f(P'_i)$ is the number of defining points of $P'_i$.
Counting box corners, ports and bends, gives $f(P'_i) \le 4n + 2m + d' m$, where $G$ has $m \le 3n-6$ edges, so $f(P'_i)\leq c'n$ for some
$c'\leq 10+3d'$.
$P_i$ is obtained from $P'_{i-1}$ by applying one round of simultaneous twists as in  
Lemma~\ref{lemma:simultaneous-twist}. 
This may add a constant term to the bound $d'$ on the number of bends, and may increase the bound $c'n\times c'n$ on the grid size by a constant factor,
which gives the final values of $d$ and $c$. %
\end{proof}

We summarize Phase Ic as follows:

\begin{lemma}
\label{lemma:match-spirality}
Let $G$ be a connected planar graph with $n$ vertices. Let $P$ and $Q$ be compatible port-aligned zig-zag-free planar orthogonal box drawings of $G$ on an 
$O(n) \times O(n)$ grid with $O(1)$ bends per edge.  
Then there is a planarity-preserving linear morph sequence of length $O(n)$ 
from $P$ to a zig-zag-free orthogonal box drawing $P'$ with the same port-alignment and edge spiralities as $Q$ (thus $P'$ and $Q$ are parallel) such that all explicit intermediate drawings are on an
$O(n) \times O(n)$ grid with $O(1)$ bends per edge. 
Moreover, the linear morph sequence can be computed in $O(n^2)$ time.
\end{lemma}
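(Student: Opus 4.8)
The plan is to assemble the machinery of this subsection into the Phase Ic algorithm and then verify its four guarantees (sequence length, grid size, bend count, and run-time). Since the hypothesis gives $P$ and $Q$ port-aligned, the spirality difference $\Delta s_P(u,v)=s_P(u,v)-s_Q(u,v)$ is well-defined for every edge $uv$, and it is bounded by an absolute constant because both drawings have $O(1)$ bends per edge. First I would solve equations~(\ref{eqn:edge-twist-condition}) and~(\ref{eqn:vertex-twist-condition}) (for which a solution with $|t(v)|\in O(n)$ exists and is computable in $O(n)$ time), allocate the twists into $k:=\max_v|t(v)|\in O(n)$ rounds so that round $r_i$ twists each vertex $v$ with $|t(v)|\ge i$ exactly once (clockwise if $t(v)<0$, counterclockwise if $t(v)>0$), and finally run the main loop: set $P'_0:=P$, and for $i=1,\dots,k$ apply the simultaneous twists of round $r_i$ to $P'_{i-1}$ via Lemma~\ref{lemma:simultaneous-twist} to obtain $P_i$, then eliminate all zig-zags of $P_i$ via Lemma~\ref{lemma:zig-zag-elimination} to obtain $P'_i$; output $P':=P'_k$.

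For correctness, after all $k$ rounds vertex $v$ has undergone exactly $t(v)$ signed twists. A clockwise twist at $u$ adds $+1$ to the spirality of every directed edge leaving $u$ and $-1$ to every directed edge entering $u$, with the signs reversed for a counterclockwise twist; summing the contributions of the twists at both endpoints of $uv$ changes its spirality by $t(v)-t(u)$. By~(\ref{eqn:edge-twist-condition}) this yields $s_{P_k}(u,v)=s_P(u,v)-\Delta s_P(u,v)=s_Q(u,v)$, and since zig-zag elimination preserves spiralities (Lemma~\ref{lemma:zig-zag-elimination}) we get $s_{P'}(u,v)=s_Q(u,v)$. Similarly, a single twist cyclically permutes the four sides of a vertex box while preserving the cyclic order of edges on each side, and zig-zag elimination preserves port alignment, so after $k$ rounds the port alignment at $v$ has been shifted by $t(v)$ cyclic steps; condition~(\ref{eqn:vertex-twist-condition}), $t(v)\equiv 0\bmod 4$, makes this the identity, so $P'$ is port-aligned with $P$ and hence with $Q$. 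Since $P'$ is moreover zig-zag-free and free of degenerate bends (removed by the last zig-zag-elimination round), the property recalled in the Phase Ic overview (port-aligned plus matching spirality plus no zig-zags plus no degenerate bends implies parallel) shows that $P'$ and $Q$ are parallel.

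For the quantitative bounds, the main loop performs $k=O(n)$ rounds of $O(1)$ linear morphs each, hence $O(n)$ linear morphs overall. The endpoints $P_i,P'_i$ of every round lie on a $cn\times cn$ grid with at most $d$ bends per edge, for the absolute constants $c,d$ furnished by the immediately preceding lemma. For the explicit intermediate drawings inside a round I would invoke the monotonicity built into the component lemmas: a twist changes the drawing only inside the $2$-proximal regions (whose total size is $O(n)$) and adds only $O(1)$ bends per edge, while zig-zag elimination never increases the bend count and never expands the grid beyond the current number of defining points (which is $O(n)$); hence every explicit intermediate drawing also lies on an $O(n)\times O(n)$ grid with $O(1)$ bends per edge. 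For the run-time, solving the twist equations costs $O(n)$ time, the $O(n)$ zig-zag eliminations cost $O(n)$ time each (Lemma~\ref{lemma:zig-zag-elimination}), and each of the $O(n)$ simultaneous twists (together with its square/$2$-spaced set-up) can be carried out in $O(n)$ time on the $O(n)$-size drawing it receives, for $O(n^2)$ time in total.

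I expect the only genuinely delicate point to be the bookkeeping already flagged in this subsection: making the grid-size and bend bounds hold with constants independent of the round index $i$, so that $O(n)$ iterations do not secretly compound a ``growing constant'', and, relatedly, keeping each round's computation down to $O(n)$ time rather than invoking the loose $O(n^2)$ bound of Lemma~\ref{lemma:simultaneous-twist} once per round. This is precisely what Claim~\ref{claim:spirality-decrease} provides: throughout the loop it caps the spirality of every edge by $\max\{|s_P(u,v)|,|s_Q(u,v)|\}$, hence caps the bend count of each $P'_i$, hence caps the grid size delivered by each zig-zag elimination---all by absolute constants---which is exactly what turns the per-round asymptotics into a uniform bound across all $O(n)$ rounds.
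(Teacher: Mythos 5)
Your proposal is correct and follows essentially the same route as the paper: solve the twist equations, allocate twists to rounds $1$ through $|t(v)|$, alternate simultaneous twists with zig-zag elimination, and use Claim~\ref{claim:spirality-decrease} to keep the bend count---and hence the grid size after each zig-zag elimination---bounded by constants independent of the round index. Your closing remark that one must charge $O(n)$ rather than the stated $O(n^2)$ of Lemma~\ref{lemma:simultaneous-twist} per round to reach the overall $O(n^2)$ run-time is a fair observation about a point the paper leaves implicit, but it does not change the argument.
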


\changed{\section{Conclusions}}

We have now assembled all 
the ingredients
for proving our two main theorems. 
In summary,
Theorem~\ref{thm:box-morph} (morphing planar orthogonal box drawings) is proved by the steps of Algorithm~\ref{morph-algorithm}: 
Phase Ia (port-alignment) is sketched in Section~\ref{sec:Phase-Ia}, with details in \cref{lemma:port-alignment} in Appendix~\ref{appendix:Phase-Ia};
Phase Ib (zig-zag elimination in both drawings) is %
handled by Lemma~\ref{lemma:zig-zag-elimination};
Phase Ic (spirality) is 
handled by
Lemma~\ref{lemma:match-spirality};
and Phase II (morphing parallel drawings) simply appeals to \cite{biedl2013morphing} (details are in \cref{lemma:phase-ii} in \Cref{appendix:Phase-II}).

Theorem~\ref{thm:main} (morphing straight-line drawings on a small grid) is proved by the reduction to morphing orthogonal box drawings (see Theorem~\ref{thm:planar-straight-line-to-boxes-morph}
in Section~\ref{sec:reduction}) plus Theorem~\ref{thm:box-morph}.

In conclusion, our main result is an algorithm
to morph compatible planar straight-line drawings with a linear number of linear morphs for which all explicit intermediate drawings lie on small grids, have few bends per edge, and can be computed quickly.
Extending the result to disconnected graphs involves further details, and is left for future work.

The biggest remaining open problem is the one from the introduction: Is there a piece-wise linear morph with small explicit intermediate
drawings \emph{without} bends?
If so, can 
we limit to a short linear morph sequence?
And if not, can we at least limit the number of bends per edge to a very small constant, such as 1 or 2?

\bibliography{arxiv.bib}

\begin{thebibliography}{10}

\bibitem{alamdari2017morph}
Soroush Alamdari, Patrizio Angelini, Fidel Barrera-Cruz, Timothy~M Chan,
  Giordano~Da Lozzo, Giuseppe~Di Battista, Fabrizio Frati, Penny Haxell, Anna
  Lubiw, Maurizio Patrignani, Vincenzo Roselli, Sahil Singla, and Bryan~T
  Wilkinson.
\newblock How to morph planar graph drawings.
\newblock {\em SIAM Journal on Computing}, 46(2):824--852, 2017.
\newblock \href {https://doi.org/10.1137/16M1069171}
  {\path{doi:10.1137/16M1069171}}.

\bibitem{barrera2019morphing}
Fidel Barrera-Cruz, Penny Haxell, and Anna Lubiw.
\newblock {Morphing Schnyder drawings of planar triangulations}.
\newblock {\em Discrete \& Computational Geometry}, 61:161--184, 2019.
\newblock \href {https://doi.org/10.1007/s00454-018-0018-9}
  {\path{doi:10.1007/s00454-018-0018-9}}.

\bibitem{biedlheight}
Therese Biedl.
\newblock Height-preserving transformations of planar graph drawings.
\newblock In Christian Duncan and Antonios Symvonis, editors, {\em Graph
  Drawing: 22nd International Symposium, GD 2014, W{\"u}rzburg, Germany,
  September 24-26, 2014}, pages 380--391, Berlin, Heidelberg, 2014. Springer
  Berlin Heidelberg.
\newblock \href {https://doi.org/10.1007/978-3-662-45803-7_32}
  {\path{doi:10.1007/978-3-662-45803-7_32}}.

\bibitem{biedl2013morphing}
Therese Biedl, Anna Lubiw, Mark Petrick, and Michael Spriggs.
\newblock Morphing orthogonal planar graph drawings.
\newblock {\em ACM Transactions on Algorithms (TALG)}, 9(4):1--24, 2013.
\newblock \href {https://doi.org/10.1145/2500118} {\path{doi:10.1145/2500118}}.

\bibitem{chaplick2023morphing}
Steven Chaplick, Philipp Kindermann, Jonathan Klawitter, Ignaz Rutter, and
  Alexander Wolff.
\newblock Morphing rectangular duals.
\newblock In {\em Graph Drawing and Network Visualization: 30th International
  Symposium, GD 2022, Tokyo, Japan, September 13--16, 2022}, pages 389--403.
  Springer, 2023.
\newblock \href {https://doi.org/10.1007/978-3-031-22203-0_28}
  {\path{doi:10.1007/978-3-031-22203-0_28}}.

\bibitem{chazelle1991triangulating}
Bernard Chazelle.
\newblock Triangulating a simple polygon in linear time.
\newblock {\em Discrete \& Computational Geometry}, 6(3):485--524, 1991.
\newblock \href {https://doi.org/10.1007/BF02574703}
  {\path{doi:10.1007/BF02574703}}.

\bibitem{mark2008computational}
Mark de~Berg, Otfried Cheong, Mark van Kreveld, and Mark Overmars.
\newblock {\em {Computational Geometry: Algorithms and Applications}}.
\newblock {Springer-Verlag}, Berlin Heidelberg, 3rd edition, 2008.

\bibitem{de1990draw}
Hubert De~Fraysseix, J{\'a}nos Pach, and Richard Pollack.
\newblock How to draw a planar graph on a grid.
\newblock {\em Combinatorica}, 10:41--51, 1990.
\newblock \href {https://doi.org/10.1007/BF02122694}
  {\path{doi:10.1007/BF02122694}}.

\bibitem{di2021tutte}
Giuseppe Di~Battista and Fabrizio Frati.
\newblock {From Tutte to Floater and Gotsman: on the resolution of planar
  straight-line drawings and morphs}.
\newblock In {\em Graph Drawing and Network Visualization: 29th International
  Symposium, GD 2021, T{\"u}bingen, Germany, September 14--17, 2021}, pages
  109--122. Springer, 2021.
\newblock \href {https://doi.org/10.1007/978-3-030-92931-2_8}
  {\path{doi:10.1007/978-3-030-92931-2_8}}.

\bibitem{doenhardt1987algorithmic}
J{\"u}rgen Doenhardt and Thomas Lengauer.
\newblock Algorithmic aspects of one-dimensional layout compaction.
\newblock {\em IEEE Transactions on Computer-Aided Design of Integrated
  Circuits and Systems}, 6(5):863--878, 1987.
\newblock \href {https://doi.org/10.1109/TCAD.1987.1270329}
  {\path{doi:10.1109/TCAD.1987.1270329}}.

\bibitem{erickson2023toroidal}
{Jeff} {Erickson} and {Patrick} {Lin}.
\newblock Planar and toroidal morphs made easier.
\newblock {\em Journal of Graph Algorithms and Applications}, 27(2):95--118,
  2023.
\newblock \href {https://doi.org/10.7155/jgaa.00616}
  {\path{doi:10.7155/jgaa.00616}}.

\bibitem{kleist2019convexity}
Linda Kleist, Boris Klemz, Anna Lubiw, Lena Schlipf, Frank Staals, and Darren
  Strash.
\newblock Convexity-increasing morphs of planar graphs.
\newblock {\em Computational Geometry}, 84:69--88, 2019.
\newblock \href {https://doi.org/10.1016/j.comgeo.2019.07.007}
  {\path{doi:10.1016/j.comgeo.2019.07.007}}.

\bibitem{klemz2021convex}
Boris Klemz.
\newblock {Convex Drawings of Hierarchical Graphs in Linear Time, with
  Applications to Planar Graph Morphing}.
\newblock In Petra Mutzel, Rasmus Pagh, and Grzegorz Herman, editors, {\em 29th
  Annual European Symposium on Algorithms (ESA 2021)}, volume 204 of {\em
  Leibniz International Proceedings in Informatics (LIPIcs)}, pages
  57:1--57:15. Schloss Dagstuhl -- Leibniz-Zentrum f{\"u}r Informatik, 2021.
\newblock \href {https://doi.org/10.4230/LIPIcs.ESA.2021.57}
  {\path{doi:10.4230/LIPIcs.ESA.2021.57}}.

\bibitem{lubiw2011morphing}
Anna Lubiw and Mark Petrick.
\newblock Morphing planar graph drawings with bent edges.
\newblock {\em Journal of Graph Algorithms and Applications}, 15(2):205--227,
  2011.
\newblock URL:
  \url{https://jgaa.info/index.php/jgaa/article/view/paper223/2737}.

\bibitem{rosenstiehl1986rectilinear}
Pierre Rosenstiehl and Robert~E Tarjan.
\newblock Rectilinear planar layouts and bipolar orientations of planar graphs.
\newblock {\em Discrete \& Computational Geometry}, 1:343--353, 1986.
\newblock \href {https://doi.org/10.1007/BF02187706}
  {\path{doi:10.1007/BF02187706}}.

\bibitem{schnyder1990embedding}
Walter Schnyder.
\newblock Embedding planar graphs on the grid.
\newblock In {\em Proceedings of the First Annual ACM-SIAM Symposium on
  Discrete Algorithms}, pages 138--148, 1990.
\newblock URL: \url{https://dl.acm.org/doi/pdf/10.5555/320176.320191}.

\bibitem{tamassia1986unified}
Roberto Tamassia and Ioannis~G Tollis.
\newblock A unified approach to visibility representations of planar graphs.
\newblock {\em Discrete \& Computational Geometry}, 1:321--341, 1986.
\newblock \href {https://doi.org/10.1007/BF02187705}
  {\path{doi:10.1007/BF02187705}}.

\bibitem{van2022optimal}
Arthur van Goethem, Bettina Speckmann, and Kevin Verbeek.
\newblock Optimal morphs of planar orthogonal drawings.
\newblock {\em Journal of Computational Geometry}, 13(1):263--297, 2022.
\newblock \href {https://doi.org/10.20382/jocg.v13i1a11}
  {\path{doi:10.20382/jocg.v13i1a11}}.

\bibitem{vijayan1985rectilinear}
Gopalakrishnan Vijayan and Avi Wigderson.
\newblock Rectilinear graphs and their embeddings.
\newblock {\em SIAM Journal on Computing}, 14(2):355--372, 1985.
\newblock \href {https://doi.org/10.1137/0214027} {\path{doi:10.1137/0214027}}.

\bibitem{matrix-mult-2024}
Virginia~Vassilevska Williams, Yinzhan Xu, Zixuan Xu, and Renfei Zhou.
\newblock New bounds for matrix multiplication: from alpha to omega.
\newblock In {\em Proceedings of the 2024 Annual ACM-SIAM Symposium on Discrete
  Algorithms (SODA)}, pages 3792--3835, 2024.
\newblock \href {https://doi.org/10.1137/1.9781611977912.134}
  {\path{doi:10.1137/1.9781611977912.134}}.

\end{thebibliography}

\appendix

\section{Details on Preserving Planarity}
\label{appendix:preserving-planarity}
Alamdari et al.~\cite{alamdari2017morph}
and Kleist et al.~\cite[Lemma 7]{kleist2019convexity}
proved that
a unidirectional morph of a straight-line drawing preserves planarity if it preserves the ordering of edges/vertices crossed by any line parallel to the morph direction. 
We extend this (for horizontal and vertical morphs) to poly-line drawings and to orthogonal box drawings, allowing degenerate bends and ports at corners.

The original result easily
extends to these other drawings styles if there are no coincident defining points
(i.e., no degenerate bends, and no ports at box corners)
at any time during the morph, because we can simply 
make all the bends, corners, and ports
into vertices.  This gives two straight-line drawings to which the original result applies.  
However, if there are coincident defining points at any time during the morph, then this transformation creates coincident vertices, which violate planarity.
And note that two defining points may be coincident at some time during the morph, and not at other times, so we cannot replace two coincident defining points by a single vertex. 

Consider two poly-line or orthogonal box drawings $P$ and $P'$ with the same number of bends on each edge (thus a correspondence between the bends according to their order along each edge). 
For a horizontal line $\ell$ directed left to right we say that the orderings of $P$ and $P'$ along $\ell$ are \defn{compatible} if 
no two line segments (edge or box segments) or defining points 
$a$ and $b$ appear along $\ell$ 
with $a$ strictly before $b$
in one drawing and 
$b$ strictly before $a$
in the other.   

\begin{lemma}
\label{lemma:horizontal-morph}
Suppose that $P'$ is obtained from $P$ via a horizontal morph, and that the orderings of $P$ and $P'$ are compatible for every horizontal line.  Then the morph from $P$ to $P'$ is planarity-preserving, and also structure-preserving in the case of orthogonal box drawings.    
\end{lemma}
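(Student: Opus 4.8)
The plan is to reduce to the straight-line unidirectional-morph result of Alamdari et al.~\cite{alamdari2017morph} and Kleist et al.~\cite{kleist2019convexity} by regarding each defining point (box corner, port, bend) as a vertex and each edge segment and box side as an edge, so that $P$ and $P'$ become straight-line drawings $\widehat{P}$ and $\widehat{P'}$ with the same combinatorial structure. The compatibility hypothesis is exactly the requirement of that result: along every line parallel to the morph direction, i.e.\ every horizontal line, the left-to-right order of features is preserved, with the mild extra allowance of ties. When no two defining points are ever coincident during the morph, $\widehat{P}_t$ is a legal straight-line drawing for every $t$ and the cited result applies verbatim; the work is entirely in handling coincident defining points, which is where degenerate bends and ports at corners enter. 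As noted above, one cannot simply merge coincident defining points into a single vertex, since a given pair may be coincident at some times and not at others.

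The enabling observation is that a horizontal morph freezes all $y$-coordinates, so two defining points $a$ and $b$ can coincide only if they lie on a common horizontal line $\ell$, and then $g(t) := x_a(t) - x_b(t)$ is an affine function of $t$. Compatibility of the orderings along $\ell$ forbids $g(0)$ and $g(1)$ from being strictly of opposite signs, so an affine $g$ with a zero at some $t \in (0,1)$ must be identically zero. Hence two defining points that coincide at an interior time coincide at all times; and since $P$ and $P'$ are planar, any coincidence of defining points in either one is one of the permitted degeneracies --- a degenerate bend, or a port at a box corner. In particular, throughout the open interval $(0,1)$ the drawing carries no coincidences beyond these permanent, permitted ones.

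I would then finish with a direct check that $P_t$ is planar for each $t \in (0,1)$, the endpoints being planar by hypothesis. Each type of forbidden intersection reduces to the same affine argument. Because the morph is horizontal and the vertex boxes are axis-aligned in $P$ and $P'$, every box stays a positive-area axis-aligned rectangle with fixed $y$-extent; so if two boxes intersect at time $t$ there is a horizontal line meeting the left and right sides of both, along which one box lies strictly left of the other in $P$, and the fact that an affine function positive at $t=0$ and nonnegative at $t=1$ stays positive on $[0,1)$ keeps this separation for all $t<1$. The same reasoning rules out a bend, port, or corner lying on a non-incident segment, and rules out edge--edge and edge--box crossings, since in each case the offending feature would have to cross a fixed horizontal line strictly between (or strictly outside) two features whose order along that line cannot have changed; and coincident ports at an interior time are impossible because such a coincidence would be permanent, contradicting planarity of $P$. (Alternatively one may perturb $\widehat{P}$ and $\widehat{P'}$ by a generic amount at most $\delta$ to remove the remaining, endpoint-only coincidences while keeping all orderings compatible and --- by the same affine observation --- keeping the perturbed morph coincidence-free, apply the cited result, and let $\delta \to 0$, noting that any violation in the unperturbed morph would have to be a transversal crossing or a positive-area overlap and so would survive the perturbation.)

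Structure-preservation in the box case is routine: freezing the $y$-coordinates keeps each vertical segment vertical and each horizontal segment horizontal; a segment that is zero-length in both $P$ and $P'$ stays zero-length; and the ordering hypothesis forces the two vertical sides of every vertex box to stay strictly ordered, so the box remains a positive-area axis-aligned rectangle with ports attached. I expect the main obstacle to be precisely the coincident-defining-point bookkeeping of the second and third paragraphs --- in particular, phrasing the affine/compatibility argument uniformly enough that it also excludes \emph{degenerate incidences} between a defining point and a segment, not merely between two defining points, since that is what makes the final case check (or the perturbation-and-limit alternative) go through.
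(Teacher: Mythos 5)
Your plan is essentially correct and its central lemma is exactly the one the paper uses: because a horizontal morph freezes $y$-coordinates, the signed horizontal gap between any two features met by a common horizontal line is affine in $t$, and compatibility of the endpoint orderings forces any such gap that vanishes at an interior time to vanish identically; hence interior-time coincidences are permanent and are therefore among the permitted degeneracies of $P$ and $P'$. Where you diverge is the endgame. You propose either an exhaustive case check of every forbidden intersection type (box--box, point-on-segment, segment--segment, coincident ports, \dots), or a perturb-and-take-limits argument. The paper instead finishes with a shorter trick: since planarity is an open condition and $P$, $P'$ are planar, one can pick $0<t_1<t_2<1$ so that no violation occurs on $[0,t_1]$ or $[t_2,1]$; on $[t_1,t_2]$ all coincidences are permanent by your affine lemma, so one may turn every bend, port, and corner into a vertex and \emph{contract} the (permanently) coincident ones, obtaining genuine straight-line drawings $P'_1$, $P'_2$ with compatible orderings to which the Kleist et al.\ result applies verbatim. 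This buys exactly what you flagged as the main obstacle: it sidesteps the degenerate-incidence bookkeeping entirely, whereas your direct check, if carried out, essentially reproves the cited unidirectional-morph lemma case by case (it does go through, but it is the longer road). One caution on your parenthetical alternative: the claim that any violation in the unperturbed morph ``would have to be a transversal crossing or a positive-area overlap'' is not free --- excluding tangential touchings at an interior time is itself an application of the affine argument, so the perturbation route does not actually save you from that analysis. Your structure-preservation sketch is fine, except note that the condition to verify is that a segment of length zero at an \emph{interior} time has length zero throughout (not merely that segments of length zero at both ends stay so); this again is just your affine lemma applied to the segment's two endpoints.
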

\begin{proof}
We begin by showing that the morph is structure-preserving for orthogonal box drawings. If some points are vertically aligned in both $P$ and $P'$, then they are vertically aligned throughout the morph.
Therefore boxes remain boxes, and ports along left/right sides stay attached to their vertex boxes.
A port on the bottom side of a vertex box lies in the closed interval between the two bottom corners in both $P$ and $P'$, so by compatible orderings, the port stays attached to the bottom throughout the morph.
A similar argument applies to ports on the top side of a vertex box. 
The next condition for structure-preserving is about length $0$ segments.
If a segment (of an edge or a box) has length $0$ at an intermediate time $t \in (0,1)$ of the morph, then we claim that it 
has length $0$ throughout the morph, 
because otherwise its two endpoints $p$ and $q$ are horizontally aligned and switch order during the morph, a contradiction to the compatible orderings.
This also implies that if two defining points are coincident at any intermediate time during the morph, then they are coincident throughout the morph, a fact that will be used below.
Finally, segments 
cannot change their directions (upward, downward, leftward, rightward).

Next, we show that the morph preserves planarity.
Observe that, since $P$ is planar, we can choose $t_1 >0$ small enough that no violation of planarity 
has yet occurred during the morph from time 0 to time $t_1$. 
(The space of planar representations is open.)
Symmetrically, since $P'$ is planar, we can choose $t_2$, with $t_1 < t_2 <1$, large enough that no violation of planarity occurs between times $t_2$ and $1$.

Let $P_i$ be the drawing at time $t_i$, $i=1,2$. 
It remains to show that the morph from $P_1$ to $P_2$ preserves planarity. 
As noted above, if two defining points are coincident  at any time during 
the morph from $P_1$ to $P_2$, 
then they are coincident throughout this whole morph.
We can then do a transformation to straight-line drawings 
similar to the one described above.  In 
$P_1$ and $P_2$, 
replace every bend, box corner, and port by a vertex, and then repeatedly
remove coincident vertices 
via contractions.
Call the resulting straight-line graphs $P'_1$ and $P'_2$, respectively.  
Observe  that $P'_1$ and $P'_2$ are compatible planar drawings and any horizontal line crosses the same ordering of edges and vertices in both.
Thus we can 
apply the result of  Kleist et al.~to $P'_1$ and $P'_2$  to prove that planarity is preserved throughout the morph from $P_1$ to $P_2$, and hence throughout the entire morph from $P$ to $P'$.
\end{proof}

\section{Details for Section~\ref{sec:reduction}: Reduction to Orthogonal Box Drawings}
\label{appendix:reduction}

We first record a basic property of linear morphs:
\begin{observation} \ 
\label{obs:linear-combinations}
A linear morph preserves linear combinations: 
If a point $q$ is the same linear combination of  points 
$p_1, \ldots, p_k$ at the beginning and the end of a linear morph, then $q$ is the same linear combination of those points at every time $t \in [0,1]$ of the morph.

\end{observation}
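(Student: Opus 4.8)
The plan is to write each point's trajectory explicitly and then exploit linearity. By the definition of a linear morph, every defining point $p_j$ moves at constant speed along a straight line, so its position at time $t\in[0,1]$ is the interpolation
\[
p_j(t) = (1-t)\,p_j(0) + t\,p_j(1).
\]
Likewise, when $q$ is a point whose motion during the morph is governed by the same linear interpolation of its endpoint positions — which is the situation in our application, where $q$ will be the centre of a vertex box, hence a fixed linear combination of the (linearly interpolated) box corners — then $q(t) = (1-t)\,q(0) + t\,q(1)$.

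First I would record the hypothesis: there are scalars $\lambda_1,\dots,\lambda_k$ (in the intended use, $\lambda_j = \frac14$ for the four box corners) with $q(0) = \sum_{j} \lambda_j\, p_j(0)$ and $q(1) = \sum_{j} \lambda_j\, p_j(1)$, using the \emph{same} coefficients at both ends of the morph. Substituting these two identities into the interpolation formula for $q(t)$ and then interchanging the two summations gives
\[
q(t) = (1-t)\sum_{j}\lambda_j\,p_j(0) + t\sum_{j}\lambda_j\,p_j(1)
      = \sum_{j}\lambda_j\bigl[(1-t)\,p_j(0) + t\,p_j(1)\bigr]
      = \sum_{j}\lambda_j\,p_j(t),
\]
which is precisely the claim that $q$ remains the same linear combination of $p_1,\dots,p_k$ at every time $t\in[0,1]$.

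I do not expect a genuine obstacle: the single point worth stressing is that the interpolation in $t$ and the linear combination in the $p_j$'s commute because each is affine in its respective argument, so the computation above needs no side condition such as $\sum_j \lambda_j = 1$. The argument therefore works verbatim for arbitrary linear combinations, and in particular for the average of the four corners that places each vertex point in the admitted poly-line drawing, which is the use made of this observation in Lemma~\ref{lemma:orthogonal-box-linear-morphs-to-admitted-linear-morphs}.
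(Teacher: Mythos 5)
Your proof is correct and is exactly the computation the paper implicitly relies on (the observation is stated without proof): since every defining point, including $q$ itself, moves by linear interpolation, substituting $q(0)=\sum_j\lambda_j p_j(0)$ and $q(1)=\sum_j\lambda_j p_j(1)$ into $q(t)=(1-t)q(0)+tq(1)$ and swapping the sums yields $q(t)=\sum_j\lambda_j p_j(t)$. Your remark that no condition $\sum_j\lambda_j=1$ is needed for this coordinate computation is also right, and the application to the box center with $\lambda_j=\tfrac14$ matches the paper's use in Lemma~\ref{lemma:orthogonal-box-linear-morphs-to-admitted-linear-morphs}.
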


\begin{proof}[Proof of Lemma~\ref{lemma:orthogonal-box-linear-morphs-to-admitted-linear-morphs}]
The admitted drawings $P'$ and $Q'$ differ from the orthogonal box drawings $P$ and
$Q$ (respectively) only within the vertex boxes, so it suffices to show that during the morph
from $P'$ to $Q'$
the point representing a vertex
remains in the strict interior of its vertex box.  At the beginning and the end of the morph the vertex is located at the center of its vertex box, i.e., at the average of the four corners. By Observation~\ref{obs:linear-combinations}, the vertex is located at the center of the vertex box throughout the morph.
\end{proof}

\begin{proof}[Proof of Theorem~\ref{thm:planar-straight-line-to-boxes-morph}]
We separate the proof into: (1) constructing $D$, and (2) morphing from $P$ to $P'$.   

\subparagraph{Part (1)}
Recall that we can obtain $D$ easily
if we have the drawing $D^{\text{bar}}$ whose existence was proved 
in 
\cite{biedlheight}: Each vertex is a horizontal bar (with the same $y$-coordinate
as in $P$), each edge is a horizontal or vertical line segment, and every horizontal line intersects the vertices and edges in the same
left-to-right order in $P$ and $D^{\text{bar}}$.   %
Such a drawing is also called a \emph{(weak 2-directional flat) visibility representation}.

To find $D^{\text{bar}}\!$, we first add a \emph{frame}, which is a 6-cycle that surrounds the bounding box $B$ of $P$, with a vertex $s$ 
placed above the topmost row of $B$, vertex $t$ placed below the bottommost row of $B$, and four vertices placed adjacent to the corners of $B$ (one unit to the left at the left corners, and one unit to the right at the right corners).
See also Figure~\ref{fig:findVR}.
Triangulate all inner faces of the resulting
straight-line drawing; this can be done in
$O(n\log n)$ time~\cite{mark2008computational}.
Next, delete all horizontal edges and call the result $P^+$.
We will convert $P^+$ into a visibility representation $D^+$ with the same $y$-coordinates and left-to-right orders;
this will then give $D^{\text{bar}}$ by deleting the added vertices and edges
and re-inserting horizontal edges, which is feasible since the left-to-right orders are preserved.

\def\scalefactor{1}
\begin{figure}[ht]
\hspace*{\fill}
\includegraphics[scale=\scalefactor,page=2,trim=25 0 31 0,clip]{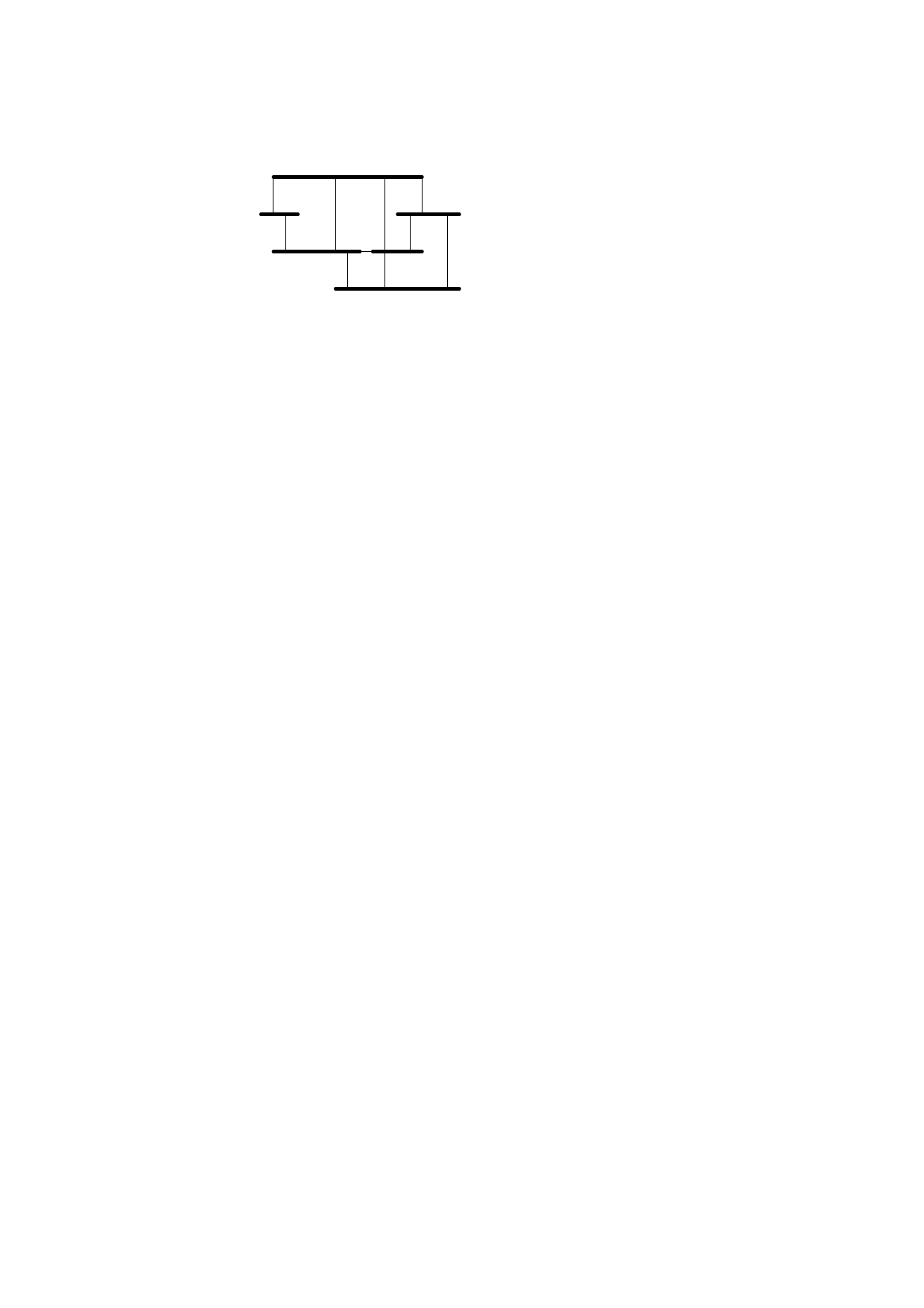}
\hspace*{\fill}
\includegraphics[scale=\scalefactor,page=4,trim=18 0 23 0,clip]{graphics/figure-straight-line-to-VR.pdf}
\hspace*{\fill}
\vspace{0.5em}
\\
\hspace*{\fill}
\includegraphics[scale=\scalefactor,page=6]{graphics/figure-straight-line-to-VR.pdf}
\hspace*{\fill}
\includegraphics[scale=\scalefactor,page=7]{graphics/figure-straight-line-to-VR.pdf}
\hspace*{\fill}
\caption{The original drawing; adding a frame (dash-dotted) around the bounding box (gray) and triangulating (dashed); the modified dual graph of $P^+$ and lengths of longest paths from the source (which has length $0$);
and the resulting visibility representation. }
\label{fig:findVR}
\end{figure}

To find $D^+$, first direct all edges in $P^+$ upward; since we triangulated the drawing (before deleting horizontal edges) every vertex except $s$ and $t$ then has incoming
and outgoing edges.    Thus $P^+$ is an \emph{$st$-graph}: It is a directed acyclic planar graph with a unique source and unique sink that are
both on the outer-face.   It is well-known how to create visibility representations of such graphs \cite{rosenstiehl1986rectilinear,tamassia1986unified}:  The $y$-coordinates
can be any integer-assignment that is compatible with the edge-directions, i.e., the head of a directed edge has larger $y$-coordinate than the tail. The $x$-coordinates can be obtained in linear time by taking the dual graph  and directing its edges left-to-right according to the edge-directions in $P^+$.
Then split the outer-face vertex into a source and a sink, which makes the dual graph acyclic as well.
Finally compute longest paths from the source in the resulting acyclic graph; the $y$-coordinates of vertex corners and edge-segments can then be directly read from the coordinates of incident faces.
Therefore a drawing $D^+$ can be computed in linear time, by using the same $y$-coordinates as in $P$ (which are compatible with the edge-directions since we directed upward).   The construction is known to preserve the planar embedding and therefore the left-to-right orders.
This proves~(1).

\subparagraph{Part (2)}
We show how to morph from $P$ to $P'$.
Refer to Figure~\ref{fig:straight-line-to-admitted-and-box-algorithm-example} for an outline and to Figure~\ref{fig:straight-line-to-admitted-and-box-algorithm-example-more} for more
steps
of the procedure.
First translate $P'$ horizontally so it is strictly to the left of $P$.
We will morph vertices and edges of $P$ one-by-one to their locations in the translated $P'$. 
Order the non-horizontal edges and vertices 
of $P$ so that: $a<b$ if some horizontal line directed left-to-right %
crosses edge/vertex $a$ strictly 
before it crosses edge/vertex $b$; and $e<v$ if edge $e$ is incident to vertex $v$.  These constraints are acyclic, so they can be extended to a total order, and we morph the elements in this order.

\begin{figure}[ht]
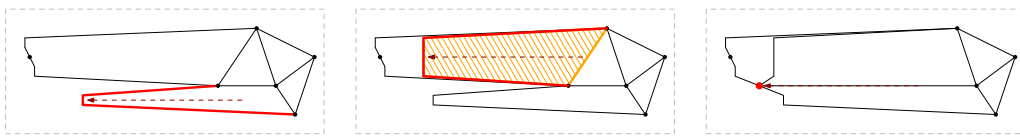

\hspace*{\fill}
\includegraphics[page=10,scale=0.45]{figure-straight-line-to-admitted-and-box-example-v3.pdf}
\hspace*{\fill}
\includegraphics[page=27,scale=0.45]{figure-straight-line-to-admitted-and-box-example-v3.pdf}
\hspace*{\fill}
\includegraphics[page=12,scale=0.45]{figure-straight-line-to-admitted-and-box-example-v3.pdf}
\hspace*{\fill}
\caption{
More steps of morphing from a straight-line drawing to the admitted drawing of an orthogonal box drawing.   The striped region indicates the convex quadrilateral swept when morphing the edge.
}
\label{fig:straight-line-to-admitted-and-box-algorithm-example-more}
\end{figure}

When it is an edge's turn to morph, we add two degenerate bends at the endpoints of the edge in $P$, and morph these bends to their positions in $P'$.  Note that the endpoints of $e$ remain in their positions in $P$. 
When it is a vertex's turn, we morph the vertex from its position in $P$ to its position in $P'$.  This is a horizontal morph since the vertex lies at the same $y$-coordinate in the two drawings. 
Note that horizontal edges do not appear in the ordering, but
a horizontal edge acquires no bends and its position is just determined by the positions of its endpoints. 

We prove by induction that each step of this morph 
preserves planarity 
and preserves the ordering of edges and vertices along any horizontal line. 
This is true when a vertex moves, since those morphs are horizontal, so we can apply Lemma~\ref{lemma:horizontal-morph}.
Next, consider what happens when we morph a non-horizontal edge $e=v_1 v_2$, introducing bends $b_i$ at $v_i$, $i=1,2$
that do not move horizontally.  
Suppose $v_1$ is below $v_2$ in $y$-coordinate. 
Let $P_e$ be the drawing just after $e$ is morphed together with the line segment representing $e$ in $P$ (i.e., an overlay of ``before'' and ``after'' morphing $e$). 
Consider a horizontal line $\ell$ between $v_1$ and $v_2$, directed left-to-right. 
In $P_e$, the line $\ell$ first may cross edges and vertices that come before $e$ in the total order in their new positions in $P'$; then $e$, as drawn in $P'$; then $e$, as drawn in $P$; then edges and vertices that come after $e$ in the total order in their original positions in $P$.  In particular, the convex quadrilateral formed by the two drawings of $e$ in $P$ and in $P'$ is empty of other parts of the drawing $P_e$. As we morph $e$, its drawing remains inside this quadrilateral.
Therefore morphing $e$ preserves planarity and preserves the ordering of edges and vertices along any horizontal line. 
\end{proof}

\section{Details for Section~\ref{sec:Phase-Ia}: Port Alignment}
\label{appendix:Phase-Ia}

We show here that the same port alignment can be achieved with few morphs while adding a constant number of bends to each edge.

\begin{lemma}
\label{lemma:port-alignment}
Let $G$ be a planar graph with $n$ vertices.
Let $P$ and $Q$ be compatible orthogonal box drawings of $G$ on an 
$O(n) \times O(n)$ grid with $O(1)$ bends per edge. 
Then there is a planarity-preserving linear morph sequence of length $O(n)$ from $P$ to an orthogonal box drawing $P'$ that is port-aligned with $Q$, where each explicit intermediate drawing is on an $O(n) \times O(n)$ grid and has $O(1)$
bends per edge. %
Moreover, 
the linear morph sequence can be found in $O(n^2)$ time.   
\end{lemma}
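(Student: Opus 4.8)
The plan is to use that $P$ and $Q$, being compatible, have the same rotation system, so for every vertex $v$ the cyclic sequence of edge ports around the box of $v$ is already identical in $P$ and $Q$; only the way this cyclic sequence is partitioned among the four sides (top, right, bottom, left) may differ. I would record this partition at $v$ by its four \emph{cut points}, the four places in the cyclic sequence where one side ends and the next begins (a side may be empty, in which case consecutive cut points coincide). The morph is then assembled by sliding, at each vertex, each cut point one edge at a time from its position in $P$ to its position in $Q$; a single such step is precisely the \emph{turn} operation sketched in \cref{fig:corner-morph-steps}, which carries the extreme port on one side of a box around a corner onto the neighbouring side. Making the partitions equal is in particular enough to make $P'$ port-aligned with $Q$.

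First I would pin down the turn operation. To turn the extreme port $p$ of a side around a box corner, I insert three new grid lines near that corner --- which creates empty grid space there --- and three degenerate bends on the edge at $p$, rerouting the edge through the freed space; then I slide the port around the corner by a constant number of horizontal and vertical morphs. Because the freed space is used by no other feature of the drawing, each of these morphs preserves the order of features along every line parallel to the morph direction, so planarity- and structure-preservation follow directly from \cref{lemma:horizontal-morph}. I would then check the bookkeeping: the operation changes nothing combinatorial except the side of $p$, it adds exactly three bends to one edge and three grid lines overall, and all box corners and bends remain at grid points since the grid has only been refined near the corner.

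Next I would bound the number of turns. At each vertex $v$ I schedule the single-edge cut-point moves so that the four cut points remain in their correct cyclic order throughout (with coincidences allowed) and each cut point travels less than a full loop around the box; this is possible because the $P$- and $Q$-partitions are obtained by cutting one and the same cyclic edge order into four contiguous arcs. Under such a schedule a cut point passes over any given edge at most once, so a given port is turned at most four times (once per cut point at its vertex), hence each edge, having a port at each end, acquires only $O(1)$ new bends, and every explicit intermediate drawing has $O(1)$ bends per edge. The number of turns at $v$ is the total displacement of its four cut points, which is $O(\deg(v))$, so the total number of turns is $\sum_v O(\deg(v)) = O(m) = O(n)$; thus the whole sequence has $O(n)$ linear morphs and introduces $O(n)$ grid lines in all, keeping every intermediate drawing on an $O(n) \times O(n)$ grid. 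For the running time, the target partition at each vertex and the turn schedule are computed in $O(n)$ time from the rotation system, and emitting each of the $O(n)$ explicit intermediate drawings costs $O(n)$, for $O(n^2)$ in total.

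The step I expect to be the main obstacle is the scheduling claim: showing that the four cut points of a vertex can be driven to their target positions by single-edge steps without ever breaking their cyclic order while each cut point moves less than a full loop, and deducing from this the $O(1)$ bound on how often a single port (hence a single edge) is turned. The companion verification --- that the turn of \cref{fig:corner-morph-steps} genuinely decomposes into $O(1)$ morphs each satisfying the hypothesis of \cref{lemma:horizontal-morph} --- is the only other place that needs care; the grid-size, bend-count, and run-time accounting is then routine.
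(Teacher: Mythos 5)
Your proposal is correct and takes essentially the same approach as the paper: the atomic step is the same corner-turn operation of \cref{fig:corner-morph-steps} (costing $O(1)$ morphs, $3$ bends, and $3$ grid lines), justified via \cref{lemma:horizontal-morph}, with $O(\deg(v))$ turns per vertex and $O(1)$ turns per port giving the $O(n)$ morph count, $O(1)$ bends per edge, $O(n)\times O(n)$ grid, and $O(n^2)$ time. The only difference is the bookkeeping for scheduling the turns --- you track the four side-boundary ``cut points,'' while the paper first moves one designated port into place (at most $2d$ turns, $2$ per port) and then fixes each remaining port with at most $3$ more turns --- and both arguments are at the same level of detail.
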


\begin{proof}
Figure~\ref{fig:corner-morph-steps} gave one key ingredient:   With a  constant number of horizontal and vertical morphs we can ``turn'' one port around one corner of a vertex box by adding 3 bends and 3 grid lines. 
We claim that $O(n)$ operations of turning a port around a box-corner suffice to make $P$ port-aligned with $Q$.  Consider vertex box $B$ with $d$ ports. One simple to state (but sub-optimal) method is as follows. 
Let $p$ be the port that should be leftmost on the top side of $B$ (pick a different side in case the top side has no ports in $Q$).  We can get $p$ in the right place (pushing other ports ahead of it) with 
at most $2d$ clockwise or counterclockwise turn operations in total and at most 2 per port. 
After that, each port needs to do at most 3 clockwise or counterclockwise turns to get to the correct place. 
\end{proof}

\section{Details for Section~\ref{sec:Phase-Ib}: Zig-zag Elimination}
\label{appendix:Phase-Ib}
We show here how to eliminate all horizontal zig-zags with a single linear morph that can be computed efficiently and where the resulting drawing is small.

\begin{lemma}
\label{lemma:horizontal-zigzag-elimination}
Let $G$ be a connected planar graph with $n$ vertices.
Let $P$ be an orthogonal box drawing of $G$ drawn on an $O(n) \times O(n)$ grid with $O(1)$ bends per edge.
There exists a (single) planarity-preserving horizontal morph from $P$ to an orthogonal box drawing $P'$ 
with no horizontal zig-zags and
with the same port-alignment and edge spiralities as $P$.
Moreover, 
$P'$ has no more bends than $P$, 
it has the same grid height as $P$,  
its grid width is at most the number $f$ of defining points of $P'$ not counting degenerate bends,  
and it can be computed in $O(n)$ time.
\end{lemma}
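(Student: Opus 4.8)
The plan is to realize the target drawing $P'$ directly as the solution of a one-dimensional compaction problem, and then to check that the straight-line interpolation from $P$ to $P'$ satisfies the hypotheses of Lemma~\ref{lemma:horizontal-morph}. Since only $x$-coordinates will change, the $y$-coordinates — and hence the grid height — are fixed from the outset, so the whole task is to choose new $x$-coordinates for the defining points.

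First I would build the vertical decomposition (trapezoidal map) of the planar subdivision formed by $P$: its faces, edge segments, and box sides. From it I extract a directed graph $G_T(P)$ whose nodes are the maximal vertical segments of $P$ (maximal vertical pieces of edges together with the left and right sides of vertex boxes), with an arc $s \to s'$ of weight $1$ whenever $s$ lies immediately to the left of $s'$ across a common trapezoid. This graph is acyclic, since ``left of'' extends to a partial order, and it has $O(f(P))$ nodes and arcs because each maximal vertical segment can be charged to a distinct non-degenerate defining point. Next, for every horizontal zig-zag — a horizontal segment whose two incident vertical segments $s,s'$ turn in opposite vertical directions — I want that segment to shrink to length $0$, i.e.\ $s$ and $s'$ to receive the same $x$-coordinate; so I contract the corresponding nodes of $G_T(P)$, more precisely I contract each connected component of the ``zig-zag equivalence'' relation into a single node. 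The key point, which I would verify using planarity of $P$, is that this contraction leaves the graph acyclic: the two vertical segments of a horizontal zig-zag are consecutive in the left-to-right order across their trapezoid, so merging them (and transitively, chains of them) cannot create a directed cycle. Finally I assign to each node (equivalence class) the $x$-coordinate equal to the length of a longest directed path into it from a source of the contracted DAG, and I give every defining point the $x$-coordinate of the class of the maximal vertical segment it lies on, keeping its $y$-coordinate. This defines $P'$. By construction the new $x$-coordinates are nonnegative integers, all of them bounded by the number of nodes, hence by $f$; horizontal zig-zag segments have length $0$ so their bends become degenerate — no bends are added and spirality is unchanged; and distinct classes get distinct $x$-coordinates because every arc has weight $1$, which is exactly what keeps the drawing planar.

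To see that the morph from $P$ to $P'$ is planarity- and structure-preserving I would invoke Lemma~\ref{lemma:horizontal-morph}: it suffices to show that for every horizontal line $\ell$ the orderings of $P$ and $P'$ along $\ell$ are compatible. Any two features crossed by $\ell$ that are separated in $x$ in $P$ are, at the height of $\ell$, separated by a vertical segment or a chain of trapezoid walls, and the longest-path assignment respects every such ``left-of'' relation, with strict inequalities preserved strictly except precisely across the collapsed zig-zags; hence no two features reverse order, which is compatibility. Preservation of port alignment and of edge spiralities is then immediate, since $y$-coordinates and the left/right-turn sequences of edges are untouched (the collapsed zig-zag bends are degenerate and ignored by spirality). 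For the running time, the vertical decomposition of a planar subdivision on $O(n)$ segments can be computed by a sweep in $O(n\log n)$ time, and in $O(n)$ time when $G$ is connected by first triangulating each face via Chazelle's linear-time triangulation~\cite{chazelle1991triangulating} and reading the decomposition off the triangulation; building and contracting $G_T(P)$ and computing DAG longest paths are all linear, giving $O(n)$ overall.

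The main obstacle I anticipate is the correctness of the contraction step: proving that identifying the $x$-coordinates of the two vertical segments of each horizontal zig-zag (and transitively, of chains of such zig-zags) never introduces a directed cycle in $G_T(P)$, and simultaneously that the resulting longest-path $x$-coordinates still keep every horizontal line's crossing order compatible with $P$, so that the zig-zags actually vanish without any other feature collision. This is where planarity of $P$ and the precise local structure of the trapezoidal map around a zig-zag must be used with care; the rest is bookkeeping about defining-point counts and a direct appeal to Lemma~\ref{lemma:horizontal-morph}.
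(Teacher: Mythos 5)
Your proposal follows essentially the same route as the paper: build the trapezoidal map of the maximal vertical segments, form the left-to-right constraint DAG, contract the two segments of each horizontal zig-zag, assign new $x$-coordinates by longest paths in the contracted DAG (computable in $O(n)$ via Chazelle), and invoke Lemma~\ref{lemma:horizontal-morph} for planarity. The one point you flag as the main obstacle---that the contracted constraint graph remains acyclic/satisfiable and that the orderings stay compatible---is settled in the paper not by a local analysis of the trapezoids around a zig-zag but by observing that one could pretend to perform the (individually planarity-preserving) zig-zag eliminations one at a time, which yields a drawing witnessing that all contracted constraints are simultaneously satisfiable; the paper also explicitly deletes the constraint-graph edges generated exclusively by the two collapsing bends when merging.
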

\begin{proof}
Let $L$ be the maximal vertical segments of $P$ (meaning that we combine vertical segments that share an endpoint); these are disjoint segments and    $|L|\leq \tfrac{1}{2} f(P)\in O(n)$ since every end of a segment in $L$ is a defining point of $P$.
The \defn{trapezoidal map} of $L$ is created by extending horizontal line segments/rays to the left and right of each endpoint 
of a segment in $L$ %
until they hit another line segment of $L$ (see~\cite{mark2008computational}).
The \defn{trapezoidal graph} $G_T(P)$ has a vertex for each line segment of $L$ and an edge, directed left to right, for each finite horizontal line segment of the trapezoidal map.
The trapezoidal graph can be computed in $O(n \log n)$ time with a simple line sweep, and
Chazelle's linear-time triangulation algorithm~\cite{chazelle1991triangulating}
improves this to $O(n)$ for the line segments of a simple polygon.
In our situation, we compute the trapezoidal graph for each face of $P$ separately
with Chazelle's algorithm and combine them to obtain $G_T(P)$ in linear time.
Note that faces of $P$ are polygons without holes by connectivity. Some faces may only be weakly simple (for example around a vertex box of a degree 1 vertex), but we can perturb them to simple polygons.
We can also deal with the unbounded outer face by adding a frame (similar to what was done in Figure~\ref{fig:findVR}). 

The trapezoidal graph $G_T(P)$ encodes all required constraints on $x$-coordinates of segments of $L$ to ensure planarity:   for any assignment of new $x$-coordinates to segments in $L$ we obtain a planar drawing as long as the head of any directed edge has a larger $x$-coordinate than the tail.    This is the insight at the heart of algorithms for \emph{compacting VLSI designs} introduced by Doenhardt and Lengauer \cite{doenhardt1987algorithmic}.    They compute for any orthogonal box-drawing a constraint graph $C_{\leq}$ (not exactly the same as $G_T(P)$, but the two graphs have the same transitive closure and hence encode the same set of constraints).    Then they compute (in linear time) the length of the longest path from a source
\changed{to} 
each vertex of the constraint graph.    This gives one possible set of feasible $x$-coordinates, and all values of the coordinates are between $0$ and $|L|-1$ (and often much smaller).

We can use the same approach, with a small modification, to compute our desired drawing $P'$ in linear time by using $G_T(P)$ as the constraint graph and modifying it suitably.
Consider one horizontal zig-zag consisting of a horizontal segment from bend $b_1$ to $b_2$ with incident maximal vertical segments $\ell_1$ and $\ell_2$, respectively.  A horizontal morph to eliminate the zig-zag joins $\ell_1$ and $\ell_2$ into a single maximal vertical segment.
In the constraint graph, this means that vertices $\ell_1$ and $\ell_2$ are merged, and
edges generated exclusively by $b_1$ and $b_2$ are removed.
See Figure~\ref{fig:zig-zag-elimination-trapezoidal-map-effect-example}.

\renewcommand{\floatpagefraction}{.8}%

\begin{figure}
    \centering
\includegraphics[page=1,scale=0.7]{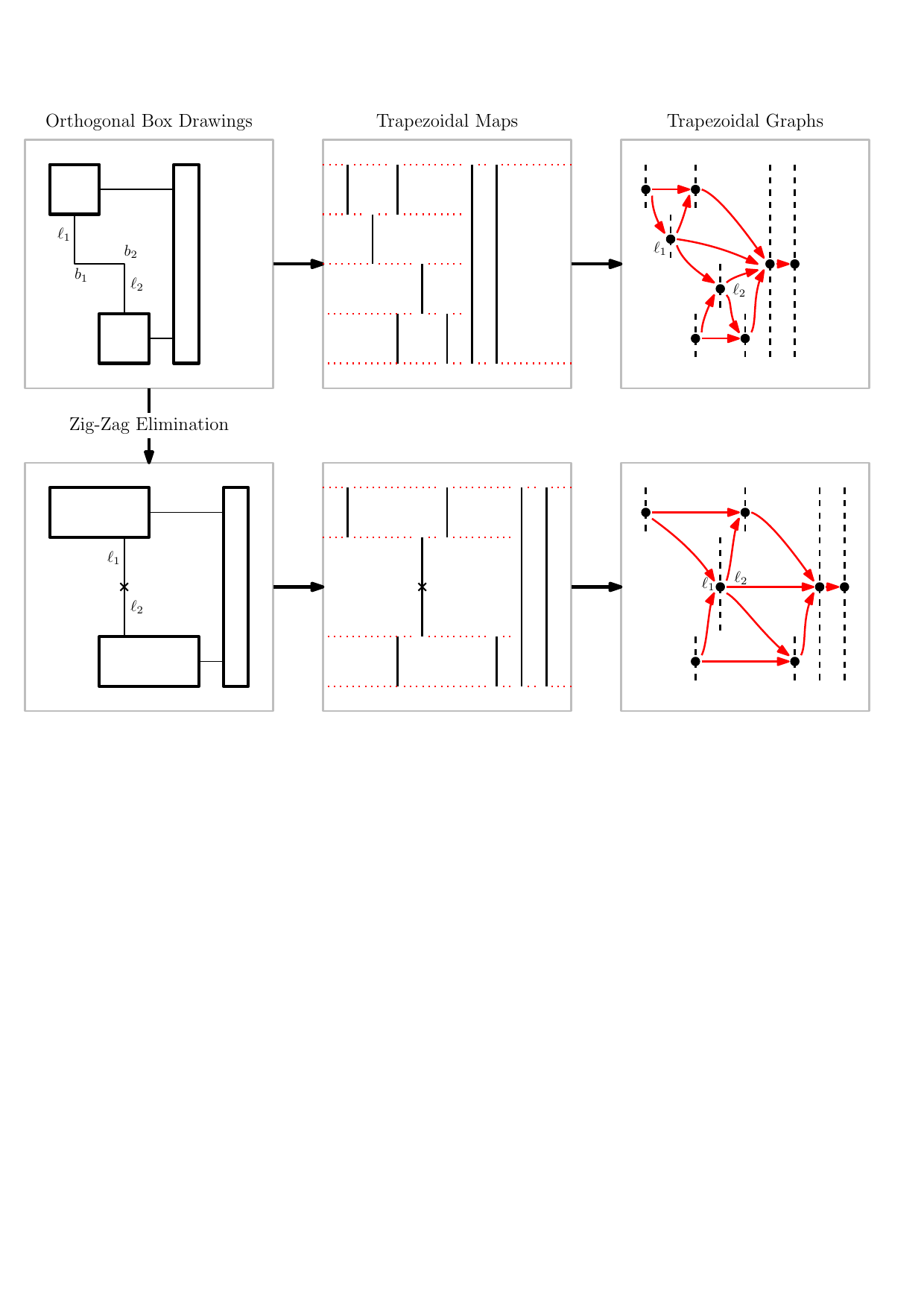}
    \caption{An example of how eliminating one horizontal zig-zag
    \changed{affects}
    the trapezoidal graph.  Two vertical segments become one segment, with degenerate bends marked by a cross.  (Our algorithm does not actually perform this zig-zag elimination, but modifies the trapezoidal graph as if it had been performed.) %
    }
    \label{fig:zig-zag-elimination-trapezoidal-map-effect-example}
\end{figure}

Thus we can obtain the trapezoidal graph $G_T(P')$ of the
as-yet-unknown
drawing $P'$ by contracting, for every horizontal zig-zag, the two vertices of the participating vertical segments and deleting some of the incident edges.   This can be done in overall linear time by performing the contractions in increasing order of the $y$-coordinates of the segments of the horizontal zig-zags and always merging the vertex of the upper vertical segment into the vertex of the lower vertical segment; in this way edges change endpoints at most twice (once per endpoint) during all contractions.

We know that the resulting set of constraints is satisfiable, because we could pretend to execute each horizontal zig-zag elimination separately to obtain some drawing that satisfies all of them.   Our drawing $P'$ is then obtained by using the same $y$-coordinates as in $P$, and using as $x$-coordinates the length of the longest paths from a source in $G_T(P')$.

We verify that $P'$ satisfies all properties.   Since $y$-coordinates are unchanged, 
every horizontal line $\ell$ intersects the same segments in $P$ and $P'$, and the constraint set guarantees that the orderings of $P$ and $P'$ along $\ell$ are compatible.   Therefore by 
Lemma~\ref{lemma:horizontal-morph} the linear morph from $P$ to $P'$ (which is a horizontal morph) is planarity-preserving.  
The modification from $G_T(P)$ to $G_T(P')$ forces the two bends of a horizontal zig-zag to coincide, so all horizontal zig-zags are eliminated.
We have no ports at corners when performing zig-zag eliminations, so edges remain adjacent to the same (and unique) side of their incident vertex boxes throughout, 
which together with the compatible orderings implies that $P$ and $P'$ have the same port assignments.
Eliminating a zig-zag from an edge removes one left turn and one right turn, so the spirality of the edge is preserved.

The claims on the bends and grid size are also easily verified: we never add bends, and $P$ and $P'$ have the same height since $y$-coordinates are unchanged. 
The grid width of drawing $P'$ is at most $|G_T(P')|=|L|-z$ (where $z$ is the number of horizontal zig-zags).   Also $|L|\leq \tfrac{1}{2}f(P)$ and $f=f(P)-2z$, so the width is at most $\tfrac{1}{2}f$ as desired.
\end{proof}

\section{Details for Section~\ref{sec:performing-twists}: Performing Twists}
\label{appendix:performing-twists}

We now describe how to modify an orthogonal box drawing to have enough space to make a twist operation feasible.

\begin{lemma}
\label{lemma:square-spaced}
Let $P$ be an orthogonal box drawing of a graph $G$ with $n$ vertices drawn on an $O(n)\times O(n)$ grid
with $O(1)$ bends per edge.
Then there exists
a planarity-preserving linear morph sequence of length $O(1)$ from $P$ to 
an orthogonal box drawing $P^*$ that has $2$-spaced square boxes,
and 
the same port alignment and edge spiralities as $P$.
Moreover, $P^*$ and all explicit intermediate drawings 
in the linear morph sequence  
are drawn on an $O(n)\times O(n)$ grid
and have at most $O(1)$ bends per edge,
and the linear morph sequence can be computed in $O(n)$ time.
\end{lemma}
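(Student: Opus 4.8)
The plan is to prove Lemma~\ref{lemma:square-spaced} with a short sequence of unidirectional (horizontal or vertical) morphs, each justified planarity-preserving via Lemma~\ref{lemma:horizontal-morph}, so that no detailed ad~hoc planarity argument is needed. Every morph will only ``stretch'' the drawing along one axis, inserting new grid lines between existing ones, so the ordering of defining points and segments along any line parallel to the morph direction is preserved, and Lemma~\ref{lemma:horizontal-morph} applies directly. Since the input lies on an $O(n)\times O(n)$ grid with $O(1)$ bends per edge, we have $f(P)\in O(n)$, and each stretching step multiplies the relevant dimension by a constant or inserts $O(f(P))$ new rows/columns, keeping everything on an $O(n)\times O(n)$ grid with $O(1)$ bends per edge; the bookkeeping is all $O(n)$ time.

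First I would \emph{space out the grid}: apply one horizontal morph that maps the $i$-th occupied column to column $C\cdot i$ (for a suitable constant $C$, say $C = 3$ or larger depending on the maximum degree being $O(1)$), and similarly one vertical morph. Because the graph is connected and drawn on an $O(n)\times O(n)$ grid with $O(1)$ bends per edge, there are $O(n)$ distinct occupied rows and columns, and every vertex box has $O(1)$ ports on each side (degree is $O(1)$ only if we are in the small-drawing regime---but note the lemma does not assume bounded degree, so instead we stretch by a factor proportional to the local spacing needed, which is bounded by the total defining-point count; a uniform stretch by a factor that accommodates the largest degree still costs only an $O(n)$-factor overall, keeping the grid $O(n)\times O(n)$ when degrees are $O(1)$, and in any case $f(P)\times f(P)$ in general). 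This creates empty ``corridors'' between consecutive columns and rows so that subsequent local modifications near a vertex box do not collide with anything else.

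Next I would \emph{make each vertex box square}. A vertex box is an axis-aligned rectangle; to make it square we want its width and height to become equal. Using a horizontal morph we can rescale all boxes and the gaps between their $x$-extents; using a vertical morph we rescale the $y$-direction. The subtlety is that boxes at different places have different aspect ratios, so a single global scaling will not square all of them at once. Instead I would do it in two unidirectional morphs that treat the coordinate intervals independently: since the left and right sides of every box, and the top and bottom, occupy distinct grid lines (after the spacing step we can assume the four sides of each box are on four distinct grid lines), I can choose new $x$-coordinates for the vertical grid lines and new $y$-coordinates for the horizontal grid lines so that, for every box simultaneously, the resulting width equals the resulting height --- this is a consistency condition on a bipartite system of inequalities that is always satisfiable by choosing the common side-length of box~$v$ to be, e.g., $2k\cdot\deg(v)$ for a large enough constant $k$, then laying out the needed column/row positions by a longest-path computation in the natural constraint graph (exactly the VLSI-compaction idea already invoked in the proof of Lemma~\ref{lemma:horizontal-zigzag-elimination}). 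Ports and bends move to maintain their linear-combination relationship to the box corners and to stay on edge segments; the ordering along every horizontal/vertical line is preserved by construction, so Lemma~\ref{lemma:horizontal-morph} gives planarity and structure preservation, and spirality and port alignment are untouched since no turns are added or removed and edges stay on the same box sides.

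Finally I would \emph{establish $2$-spacing}: once boxes are square with side $\Theta(\deg(v))$, one more pair of unidirectional morphs enlarges the empty region around each box to an $L_\infty$-margin of $2\deg(v)$ by inserting the requisite columns and rows in the corridors created in the first step; this again is a longest-path layout in a constraint graph whose nodes are the maximal vertical (resp.\ horizontal) segments, with added constraints forcing the margins. The only edge segments inside a $2$-proximal region are then the stubs incident to that box. The main obstacle I anticipate is the simultaneous squaring: ensuring that one horizontal and one vertical morph (rather than a long sequence) can make \emph{all} boxes square at once while keeping the grid small and every intermediate drawing valid --- this is handled by reducing it to a feasibility/longest-path computation on a constraint graph, which is satisfiable because we are only ever \emph{enlarging} coordinate gaps, never forcing a contradiction, and the bound on grid size follows because the longest path has length $O(f(P))$.
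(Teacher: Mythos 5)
There is a genuine gap in your squaring step. You propose to make every box square with a single horizontal plus a single vertical morph that reassign coordinates to entire grid lines (a monotone map $g$ on occupied columns, and similarly for rows), and you assert the resulting system of equalities $g(x_R(v))-g(x_L(v))=2k\deg(v)$ is ``always satisfiable because we are only ever enlarging coordinate gaps.'' That is false: under any monotone reassignment of grid lines, if the $x$-extent of box $u$ is nested inside (or coincides with) the $x$-extent of box $v$, then the new width of $v$ is forced to be at least the new width of $u$ (plus the intervening gaps), so you cannot independently prescribe $2k\deg(v)<2k\deg(u)$. Equality constraints on nested intervals are not a monotone/expansion-only feasibility problem, and the longest-path argument from Lemma~\ref{lemma:horizontal-zigzag-elimination} handles only one-sided inequalities, not these equalities.

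The deeper issue is your insistence that ``no turns are added or removed.'' To resize a box independently of the rest of the drawing, one side of the box must move relative to the grid line it previously shared with other features; a port on the top of the box then moves horizontally while the far end of its incident vertical segment does not, which breaks orthogonality unless bends are inserted. This is exactly the device the paper uses: it first makes the boxes large ($3$-spaced, side at least $\deg(v)+2$) by inserting padding rows/columns (your first step is essentially this), then adds a zig-zag (4 new bends per edge, placed in the outer annulus of the $3$-proximal region) to each incident edge, and only then \emph{shrinks} each box to side exactly $\deg(v)+2$ with one horizontal and one vertical morph; the zig-zags absorb the port displacement, shrinking avoids any feasibility conflict, and $3$-spacing degrades gracefully to $2$-spacing. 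Without adding bends, your squaring morph cannot be made structure-preserving in general, so the proposal as written does not go through.
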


\begin{proof}
We will complete this morph in two steps.
First, we will morph to a drawing $P'$ that has $3$-spaced boxes that each have side-length at least $\deg(v)+2$.
Then we will use the outermost annuli of $P'$ (that is, the region contained in the $3$-proximal region but not the $2$-proximal region)
to add extra bends,
allowing us to make the boxes square, forming $P^*$.

The first step is straightforward and can be completed with two linear morphs, one horizontal and one vertical.
This is accomplished simply by adding empty $x$ and $y$ coordinates before and after
each vertex box side.
Sorting by coordinates is required, but there are $O(n)$ valid $x$ and $y$ coordinates due to the grid size of the input,
so this can be done in linear time with counting sort.
See \cref{fig:square-spaced-step1}.
The resulting drawing $P'$ and the explicit intermediate drawing meet all of our constraints for bends per edge (none added) and grid size
(a constant factor increase from $P$, plus an $O(n)$ term for the possibly increased widths/heights).

The second step also involves two linear morphs, which are once again a horizontal and vertical linear morph.
Each vertex box of a vertex $v$ has height $h\geq \deg(v)+2$ and width $w\geq \deg(v)+2$.
To perform the horizontal morph,
we add zig-zags to each edge along the top and bottom of each vertex box.
Specifically, these zig-zags are contained in the 
outermost
annulus,
so they have distance at least $\deg(v)$ from the vertex box
(either above or below it),
with each being closer to the box if it is further to the right.
The linear morph itself morphs the right side of the vertex box leftward,
so that 
its final 
distance from the left side is exactly $\deg(v)+2$. 
This is performed on all vertex boxes simultaneously,
and then repeated in a symmetric manner for the vertical linear morph.
See \cref{fig:square-spaced-step2}.
The grid size does not change, and exactly $4$ bends are added per edge.
The port alignment does not change, and added bends form zig-zags or degenerate bends and so do not affect spirality.  
Thus the resulting drawing and explicit intermediate drawing meet our constraints.
Since each vertex box only gets smaller during this step, $2$-spaced boxes are achieved in the final drawing.
The final box of a vertex $v$ has width and height exactly $\deg(v)+2$, so is a square as desired.
\end{proof}

\begin{figure}
    \centering
\hspace*{\fill}
    \includegraphics[page=1,scale=0.50]{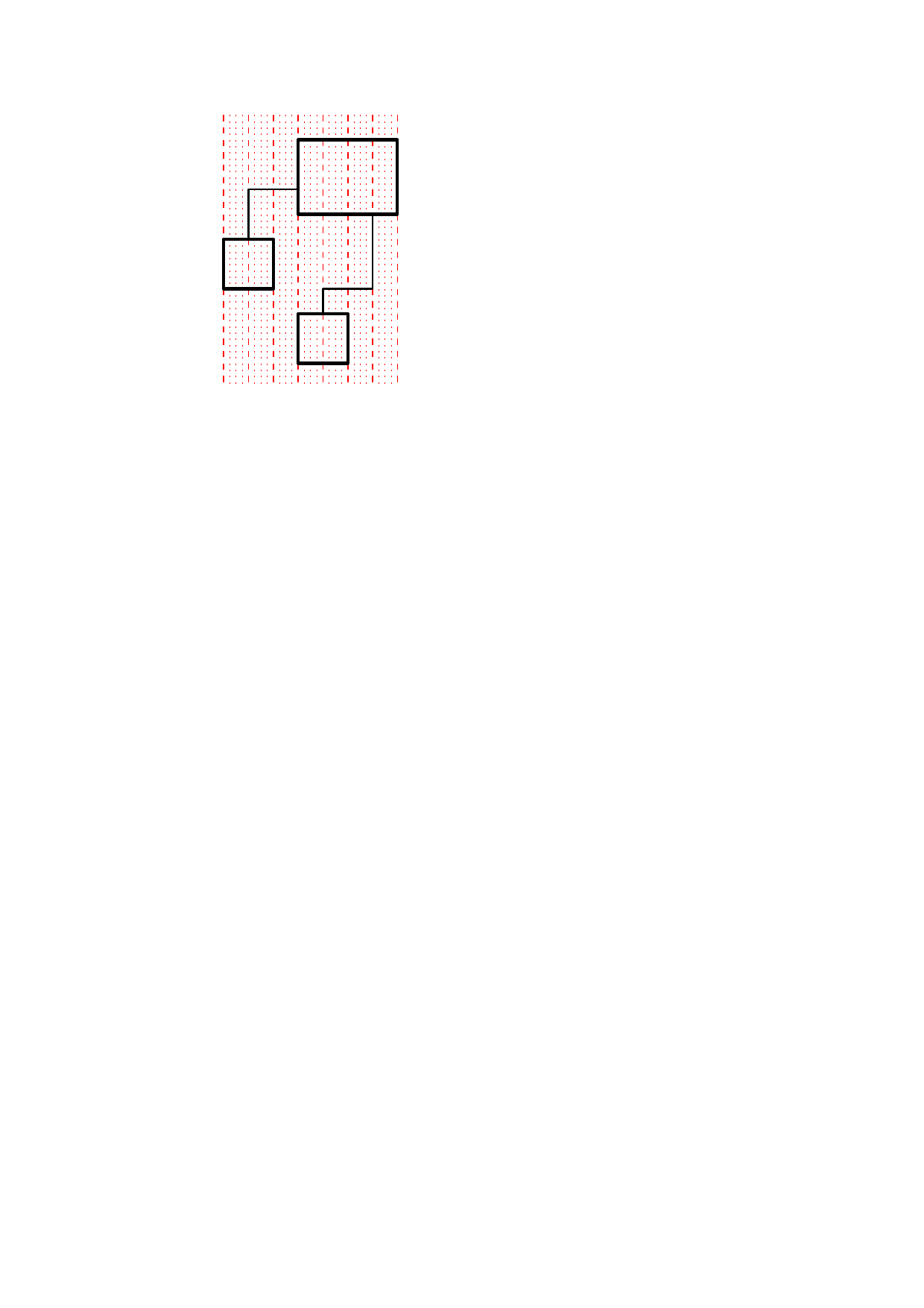}
\hspace*{\fill}
    \includegraphics[page=7,scale=0.50]{figure-add-padding-example-2}
\hspace*{\fill}
    \includegraphics[page=8,scale=0.50]{figure-add-padding-example-2}
\hspace*{\fill}
    \caption{The first step of \cref{lemma:square-spaced}.  %
    }
    \label{fig:square-spaced-step1}
\end{figure}

\begin{figure}
    \centering
\hspace*{\fill}
    \includegraphics[page=2,scale=0.25]{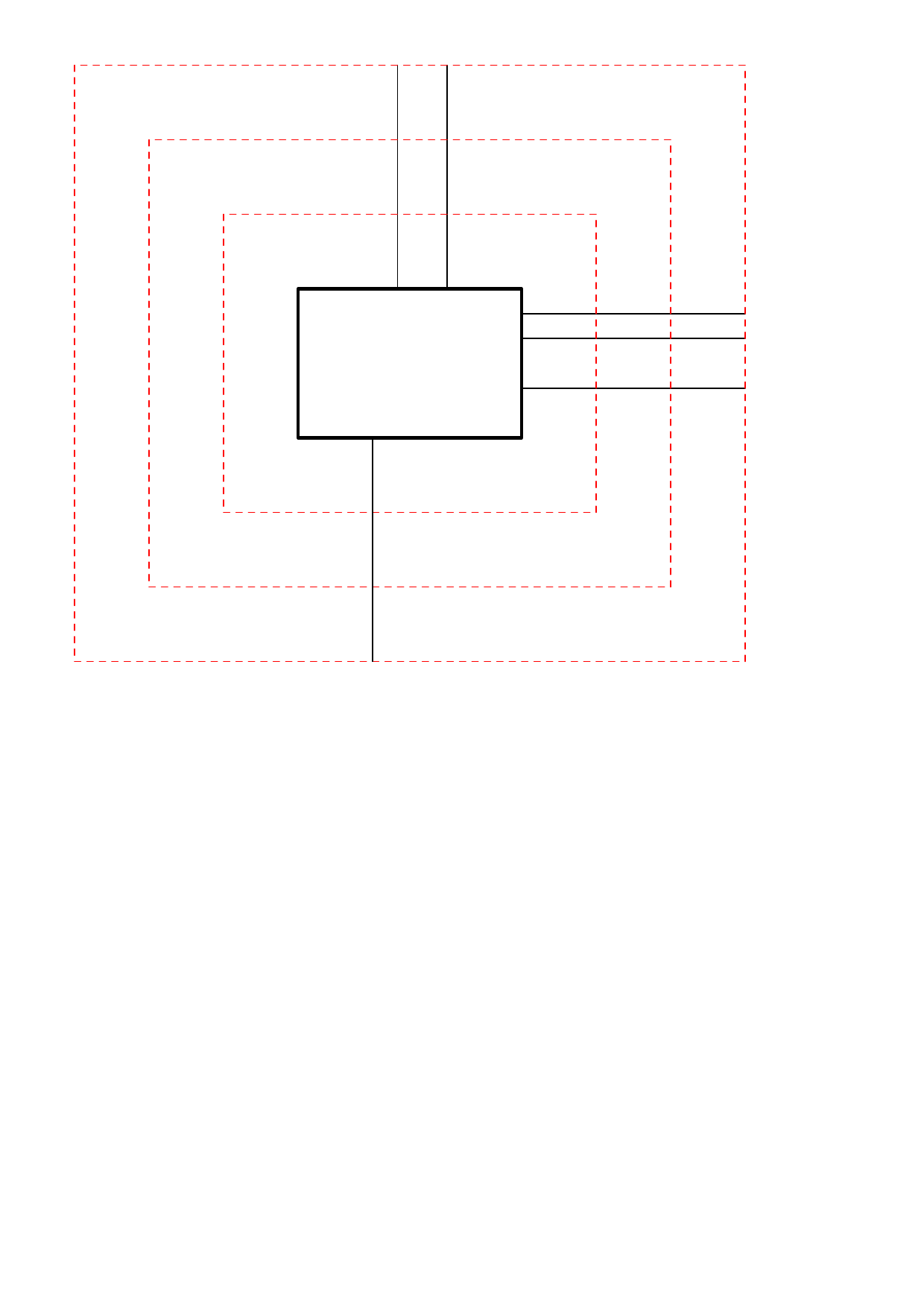}
\hspace*{\fill}
    \includegraphics[page=3,scale=0.25]{figure-make-square-example}
\hspace*{\fill}
    \includegraphics[page=4,scale=0.25]{figure-make-square-example}
\hspace*{\fill}
    \caption{The second step of \cref{lemma:square-spaced}.}
    \label{fig:square-spaced-step2}
\end{figure}

\begin{proof}[Proof of Lemma~\ref{lemma:planar-twist}]
To complete the proof, 
we %
show 
that there are no intersections between edges incident to different sides of the square vertex box $S$ by arguing that they live in disjoint subregions 
throughout the morph. 
See \cref{fig:twist-rotation-construction-region}.
Consider the edges incident to the top side of $S$ between top left corner $c_1$ and top right corner $c_2$. 
Their subregion consists of two parts: the rectangle  of  points inside the 2-proximal region and above the 1-proximal region, which does not change during the morph; and the points inside the 1-proximal region bounded by a ray going up from $c_1$ and a ray going right from $c_2$---this region  changes as the corners move. 
Subregions corresponding to the other three sides of $S$ are defined analogously.  These subregions are disjoint 
(note that after Phase Ia (port-alignment) we never allow ports at corners, so there are no issues at the boundaries of the subregions).
It is easy to see
that edges remain in their subregion,  
which implies that they cannot intersect.  
\end{proof}

\section{Details for Section~\ref{sec:planning-twists}: Planning Twists}
\label{appendix:planning-twists}

\begin{lemma}
\label{lemma:twist-equations} In the context of the computation of the number of twists $t(v)$ to perform at each vertex $v$ (see Section~\ref{sec:planning-twists}): 
If values $t(v)$ satisfy Equation
(\ref{eqn:vertex-twist-condition}) 
for a single vertex $v_0$, and satisfy Equation (\ref{eqn:edge-twist-condition})
for the edges of a spanning tree $T$ rooted at $v_0$, then the equations are satisfied for all vertices and edges. 
\end{lemma}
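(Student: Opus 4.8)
The plan is to show that conditions (\ref{eqn:edge-twist-condition}) and (\ref{eqn:vertex-twist-condition}), when imposed only on the spanning tree $T$ and the root $v_0$, automatically propagate to every edge and every vertex. The vertex condition is easy: if $t(v)\equiv 0\bmod 4$ holds for $v_0$ and the tree-edge condition $t(u)-t(v)=\Delta s_P(u,v)$ holds for all edges of $T$, then following the unique $T$-path from $v_0$ to any vertex $w$ we get $t(w)=t(v_0)+\sum \pm\Delta s_P(\cdot,\cdot)$, and since each $\Delta s_P$ on a tree edge equals $s_P-s_Q$, it remains only to observe that each such spirality difference is a multiple of $4$; this follows because a planar orthogonal cycle (the face boundary) has $4$ more clockwise than counterclockwise turns, applied to the trivial cycle consisting of an edge and its reverse — or more simply, because for a single edge drawn orthogonally between two axis-aligned box sides, the spirality is determined mod $4$ by which side of $u$'s box and which side of $v$'s box it attaches to, and port-alignment forces those to agree in $P$ and $Q$. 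Hence $\Delta s_P\equiv 0\bmod 4$ on every edge, so $t(w)\equiv 0\bmod 4$ for all $w$.

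The substantive part is the edge condition for non-tree edges. Let $e=uv$ be an edge not in $T$. Together with the two $T$-paths from $v_0$ to $u$ and from $v_0$ to $v$, the edge $e$ closes a cycle $C$, which (after discarding the shared initial portion of the two paths) is a simple cycle in $G$. The key geometric fact, carried over from Biedl et al.~\cite{biedl2013morphing}, is a ``cycle spirality'' identity: for a simple cycle $C=w_0w_1\cdots w_{k-1}w_0$ traversed in a fixed direction in a planar orthogonal box drawing, the total number of left minus right turns along the drawn boundary of $C$ — counting both the turns along the edge poly-lines and the turns contributed by the portions of vertex-box sides that lie on $C$'s boundary — equals $+4$ or $-4$ depending on whether $C$ is traversed counterclockwise or clockwise. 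I would state and prove this as the core claim. The proof is the standard total-turning/winding-number argument for a simple closed orthogonal curve: the drawn boundary of $C$ is a simple closed polygonal curve made of axis-parallel segments, so its total turning is $\pm 2\pi$, i.e.\ $\pm 4$ right-angle turns.

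Granting the cycle identity, the edge condition for $e=uv$ follows by a bookkeeping argument. Summing $\Delta s_P(w_i,w_{i+1})$ around the cycle $C$, each tree edge contributes $s_P-s_Q$; by the cycle identity applied once in $P$ and once in $Q$, the box-side turn contributions are identical in $P$ and $Q$ (port-alignment guarantees that the sequence of edges on each box side, hence the box-side turn angles traversed by $C$, is the same), so the total edge-spirality sum around $C$ is the same in $P$ and $Q$. Rearranging the telescoping sum $\sum_{T\text{-edges of }C}(t(w_i)-t(w_{i+1}))$ — which holds by the imposed tree condition — against this invariant isolates exactly $\Delta s_P(u,v)=t(u)-t(v)$. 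I expect the main obstacle to be precisely this bookkeeping: one must carefully account for how much each vertex box ``contributes'' to the turning of $C$ at a vertex where $C$ enters on one side and leaves on another, and verify that this contribution is exactly what is needed so that the single $\pm 4$ discrepancy in $P$ versus $Q$ cancels (it does, because the $\pm 4$ is the same sign in both drawings, $C$ bounding the same face with the same orientation). The point-drawing case in~\cite{biedl2013morphing} has edges meeting directly at vertices; here I must replace ``turn at vertex $w_i$'' with ``turn along edge into $w_i$, plus turn along the box side of $w_i$, plus turn along edge out of $w_i$'' and check the identity survives term by term.
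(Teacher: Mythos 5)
Your proposal is correct and follows essentially the same route as the paper's proof: induction along the tree using $\Delta s_P\equiv 0\bmod 4$ (from port-alignment) for the vertex condition, and for a non-tree edge the $\pm 4$ total-turning identity applied to the closed orthogonal curve formed by the cycle's edges together with the traversed vertex-box boundary portions, whose box contributions cancel between $P$ and $Q$ by port-alignment, leaving a telescoping sum. The bookkeeping you flag as the main obstacle is exactly the step the paper carries out, and your mod-4 justification via attachment sides is a valid elaboration of an observation the paper states without proof.
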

\begin{proof}
We first show that Equation (\ref{eqn:vertex-twist-condition})
is satisfied for every vertex $v$ by induction on the distance from $v_0$ to $v$ in $T$.  By assumption this holds for distance 0 (i.e., $v=v_0$), so consider the case $v\neq v_0$ and let $u$ be the neighbour of $v$ on the path to $v_0$ in $T$.  
By induction $t(u)\equiv 0 \bmod 4$.  The edge $uv$ satisfies Equation (\ref{eqn:edge-twist-condition}), so $\Delta s_P(u,v) = t(u) - t(v)$. 
Observe that because $P$ is port-aligned with $Q$, $\Delta s_P(u,v) \equiv 0 \bmod 4$.
Thus $t(v) = t(u) - \Delta s_P(u,v)
\equiv 0 \bmod 4$.

We next show that Equation (\ref{eqn:edge-twist-condition})
is satisfied for every
edge.  Let $e$ be an edge not in $T$, and let $C= v_1, v_2, \ldots, v_k$ be the cycle containing $e$  and edges of $T$  where $e= v_k v_1$.
For drawings $P$ and $Q$, consider the simple closed orthogonal paths $\gamma_P$ and $\gamma_Q$, respectively, formed by traversing edges of $C$ and, between two consecutive edges with common endpoint $v$,
following the boundary of $v$'s vertex box counterclockwise. 
Any simple %
orthogonal cycle has 4 more left turns than right turns when traversed counterclockwise~\cite{vijayan1985rectilinear}, so
$\gamma_P$ and $\gamma_Q$ have the same spirality.
Because $P$ and $Q$ are port-aligned, the portions of $\gamma_P$ and $\gamma_Q$ on the boundaries of vertex boxes involve the same turns. Thus the spirality difference between $\gamma_P$ and $\gamma_Q$ is $0 = \sum_{i+1}^k \Delta s_P(v_i,v_{i+1}) + \Delta s_P(v_k,v_1)$.  Since edges of the tree satisfy Equation~(\ref{eqn:edge-twist-condition}), $\Delta s_P(v_i,v_{i+1}) = t(v_i) - t(v_{i+1})$ for $i=1, \ldots, k-1$, so the telescoping sum collapses to 
$0 = t(v_1)-t(v_k) + \Delta s_P(v_k,v_1)$, i.e., 
$\Delta s_P(v_1,v_k) = t(v_1)-t(v_k)$.  
\end{proof}

\begin{proof}[Proof of Claim~\ref{claim:spirality-decrease}] Suppose without loss of generality that $\Delta s_P(u,v) \ge 0$ (otherwise switch $u$ and $v$).  
Then $t(u) \ge t(v)$.  
If $t(u)$ is positive and $t(v)$ is negative, then all twists at $u$ are counterclockwise, and all twists at $v$ are clockwise, so (regardless of the particular allocation to rounds) 
each round decreases the spirality of $uv$ by 2, 1, or 0.
If $t(u)$ and $t(v)$ are both non-negative, then, because of the specific allocation to rounds, we first do $t(v)$ rounds where we twist  both $u$ and $v$ counterclockwise, leaving the spirality of $uv$ unchanged, followed by $t(u)-t(v)$ rounds where we twist $u$ counterclockwise and do not twist $u$, each decreasing the spirality of $uv$ by 1; the remaining rounds twist neither $u$ nor $v$ and leave the spirality unchanged.  
A similar argument applies if $t(u)$ and $t(v)$ are both non-positive. 
\end{proof}

\section{Details for Phase II: Morphing Parallel Orthogonal Box Drawings}
\label{appendix:Phase-II}

For Phase II,
we start with a pair of parallel orthogonal box drawings $P$ and $Q$
of a connected graph $G$
of $n$ vertices
with no degenerate bends,
each drawn on an $O(n)\times O(n)$ grid.

A pair of orthogonal straight-line (point) drawings $S(P)$ and $S(Q)$
can
be introduced which correspond to $P$ and $Q$ respectively.
These are constructed by creating a vertex for each 
defining point
in the respective orthogonal box drawings,
and making the horizontal and vertical segments of the orthogonal box drawings into edges
(see \cref{fig:straight-line-from-box-example} for an example).
Since $P$ and $Q$ are parallel and have no coinciding defining points by the assumptions,
the corresponding drawings $S(P)$ and $S(Q)$ are %
both drawings of the same (labelled) graph $G'$ %
and are also parallel.
Note also that $G'$ is connected since $G$ is connected.

\begin{figure}[ht]
    \centering
    \includegraphics[page=1,scale=0.45]{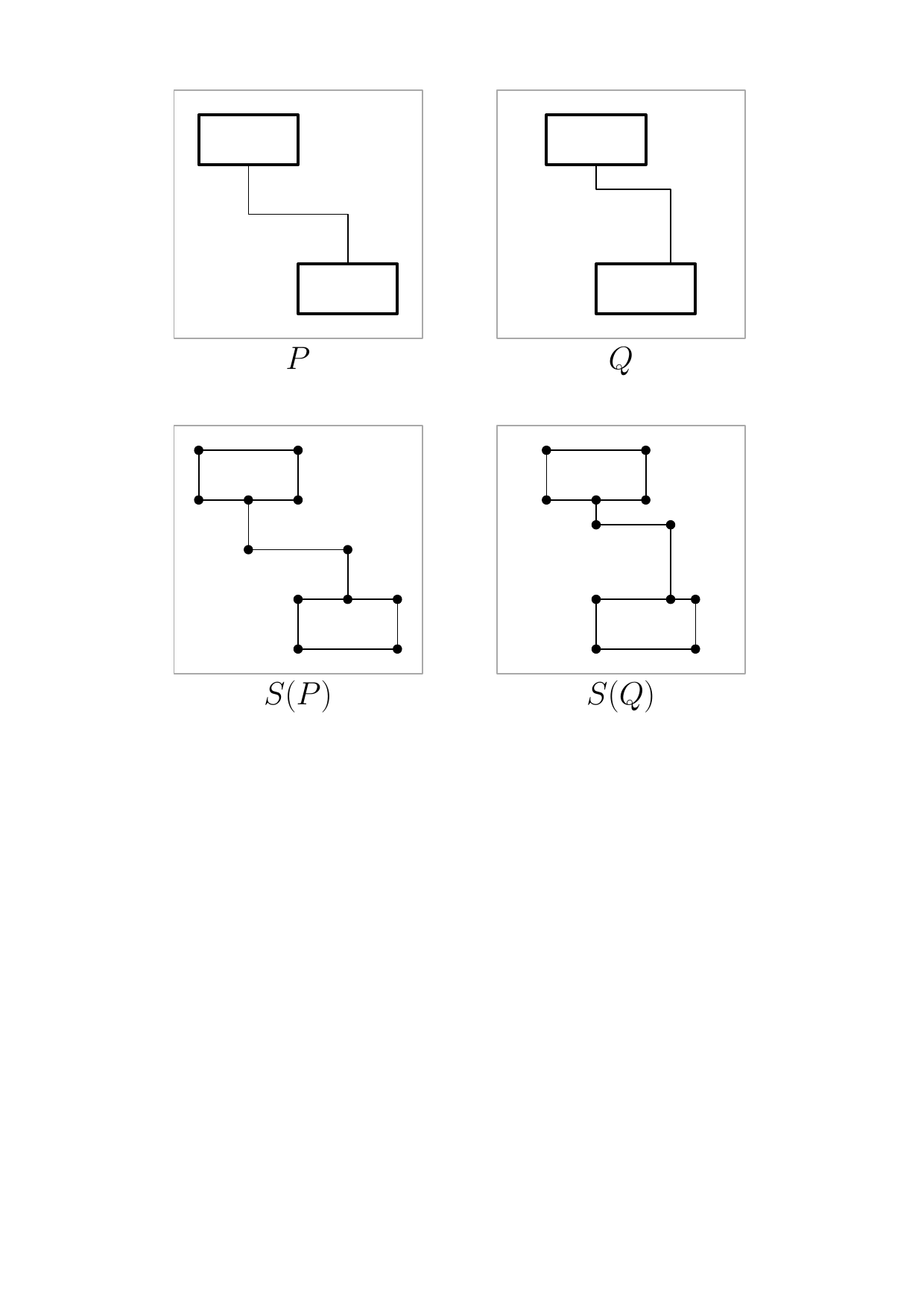}
    \caption{An example of how the straight-line orthogonal drawings $S(P)$ and $S(Q)$ are obtained from the (parallel) orthogonal box drawings $P$ and $Q$.}
    \label{fig:straight-line-from-box-example}
\end{figure}

Any sequence of planarity-preserving linear morphs from $S(P)$ to $S(Q)$ that do not change
the slopes of any edges
and do not introduce any bends
must then also induce a sequence of planarity-preserving linear morphs from $P$ to $Q$.
Biedl et al.~\cite[Theorem~5.3]{biedl2013morphing}
give
an algorithm that
finds such a sequence of $O(n)$ linear morphs
from $S(P)$ to $S(Q)$
while maintaining an $O(n)\times O(n)$ grid for explicit intermediate drawings (see the separate discussion in Section 7.1 of their paper).
In particular, they call the property that the slopes of edges do not change the ``preservation of orthogonality'' throughout the morph.
This satisfies all the necessary guarantees for Phase II, except for the time complexity.
While Biedl et al.~\cite{biedl2013morphing} do not explicitly analyze their run time,
it can be determined 
by following the steps of their algorithm
that all explicit intermediate drawings for 
our Phase II %
can be found in $O(n)$ time per drawing, hence $O(n^2)$ time overall.
We summarize Phase II as follows:

\begin{lemma}
\label{lemma:phase-ii}
Let $G$ be a connected planar graph with $n$ vertices.
Let $P$ and $Q$ be parallel orthogonal box drawings of $G$ 
on an $O(n)\times O(n)$ grid with $O(1)$ bends per edge.  
Then there is a planarity-preserving linear morph sequence
of length $O(n)$
from $P$ to $Q$
where each explicit intermediate drawing is on an $O(n)\times O(n)$ grid with $O(1)$ bends per edge.
Moreover, this sequence can be found in $O(n^2)$ time.
\end{lemma}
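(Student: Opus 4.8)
The plan is to reduce Phase II entirely to the known result of Biedl, Lubiw, Petrick, and Spriggs~\cite{biedl2013morphing} for morphing parallel orthogonal \emph{point} drawings, by a mechanical translation between box drawings and point drawings. First I would, from the given parallel box drawings $P$ and $Q$ (which have no degenerate bends by assumption), construct two point drawings $S(P)$ and $S(Q)$: introduce one vertex for every defining point (box corner, port, bend) and one edge for every maximal horizontal or vertical segment appearing in $P$, including the four sides of every vertex box. Since $P$ and $Q$ are parallel they have the same combinatorial structure on defining points, so $S(P)$ and $S(Q)$ are straight-line orthogonal drawings of the \emph{same} labelled graph $G'$, they are parallel in the point-drawing sense, and $G'$ is connected because $G$ is. Each lies on the same $O(n)\times O(n)$ grid as $P,Q$, and $G'$ has $O(n)$ vertices since the number of bends is $O(n)$.

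Next I would invoke Theorem~5.3 of~\cite{biedl2013morphing}, together with the grid-size discussion in Section~7.1 of that paper, to obtain a planarity-preserving linear morph sequence from $S(P)$ to $S(Q)$ of length $O(n)$ in which no edge ever changes slope (``preservation of orthogonality'') and no bends are introduced, with every explicit intermediate drawing on an $O(n)\times O(n)$ grid. The crucial observation is that such a morph sequence, read back through the translation, is exactly a sequence of structure-preserving planarity-preserving linear morphs of orthogonal box drawings: preservation of orthogonality guarantees that box sides stay axis-aligned and that each original segment of an edge keeps its direction, planarity of the point drawings gives planarity of the box drawings, and since the boxes in $P,Q$ have positive area and their sides are among the edges of $G'$ whose slopes are preserved, each intermediate drawing still has vertex rectangles of positive area. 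Thus we recover a planarity-preserving linear morph sequence from $P$ to $Q$ of length $O(n)$, each intermediate box drawing on an $O(n)\times O(n)$ grid with $O(1)$ bends per edge (the bend count never increases).

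Finally I would address the run-time, which is the only point not handled verbatim by~\cite{biedl2013morphing}, since that paper does not state an explicit time bound. Here I would walk through the steps of their algorithm and observe that each explicit intermediate drawing can be produced in $O(n)$ time from the previous one: their construction amounts to solving, for each morph step, a linear-time compaction/longest-path computation on a planar constraint graph of size $O(n)$ (analogous to the trapezoidal-graph computation used in Lemma~\ref{lemma:horizontal-zigzag-elimination}), plus $O(n)$ bookkeeping. Since there are $O(n)$ steps, the whole sequence is computed in $O(n^2)$ time. This yields Lemma~\ref{lemma:phase-ii}.

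The main obstacle I expect is the run-time claim: unlike the other ingredients, it is not black-boxed but requires carefully re-deriving the complexity of the Biedl et al. procedure step by step and confirming that none of their sub-steps (in particular the ones that compute coordinates for intermediate drawings) secretly costs more than linear time per drawing. The correctness and grid-size claims, by contrast, follow essentially formally once the point/box translation is set up, as long as one is careful that the translation sends structure-preserving box morphs to slope-preserving point morphs and back.
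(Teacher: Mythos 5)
Your proposal matches the paper's own proof essentially step for step: the same translation of box drawings to point drawings $S(P),S(Q)$ via defining points and segments, the same appeal to Theorem~5.3 of Biedl et al.\ (with the Section~7.1 grid-size discussion and preservation of orthogonality), and the same observation that the run-time must be re-derived as $O(n)$ per explicit intermediate drawing. No gaps; this is the intended argument.
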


\end{document}